\def\Po#1#2{\Pos_{#1}{(#2)}}
\def\regw#1{\Reg{(#1)}}
\DeclareMathOperator{\Reg}{RegExp}
\DeclareMathOperator{\E}{E}
\DeclareMathOperator{\h}{H}
\DeclareMathOperator{\f}{F}
\DeclareMathOperator{\G}{G}
\DeclareMathOperator{\First}{First}
\DeclareMathOperator{\Follow}{Follow}
\DeclareMathOperator{\rooot}{root}
\DeclareMathOperator{\Pos}{Pos}
\def\b#1{\overline{#1}}
\tikzstyle{every picture}=[>=stealth',shorten >=1pt,node distance=1.44cm,bend angle=45,initial text=,every state/.style={inner sep=0.75mm, minimum size=1mm},font=\scriptsize]
\begin{document}

\title{An Efficient Algorithm for the Equation Tree Automaton \emph{via} the $k$-C-Continuations}
\author{Ludovic Mignot, Nadia Ouali Sebti and Djelloul Ziadi  \thanks{\email{\{ludovic.mignot, nadia.ouali-sebti, djelloul.ziadi\}@univ-rouen.fr}}}
\institute{Laboratoire LITIS - EA 4108 Universit\'e de Rouen, Avenue de l'Universit\'e \\76801 Saint-\'Etienne-du-Rouvray Cedex.}

\maketitle

\begin{abstract}
  Champarnaud and Ziadi, and Khorsi \emph{et al.} show how to compute the equation automaton of word regular expression $\E$ \emph{via} the $k$-C-Continuations.
  Kuske and Meinecke extend the computation of the equation automaton to a regular tree expression $\E$ over a ranked alphabet $\Sigma$ and produce a $O(R\cdot|\E|^2)$ time and space complexity algorithm, where $R$ is the maximal rank of a symbol  occurring  in $\Sigma$ and $|\E|$ is the size of $\E$. In this paper, we give a full description of the algorithm based on the acyclic minimization of Revuz. Our algorithm, which is performed in an $O(|Q|\cdot|\E|)$ time and space complexity, where $|Q|$ is the number of states of the produced automaton, is more efficient than the one obtained by Kuske and Meinecke.
\end{abstract}

\section{Introduction}

Regular expressions, which are finite representatives of potentially infinite languages, are widely used in various application areas such as XML Schema Languages~\cite{xml}, logic and verification~\cite{verif}, \emph{etc.} The concept of word regular expressions  has been extended to tree regular expressions. Similarly to word expressions, one can convert them into  finite recognizers, the tree automata.
 
 The study of the different  ways of conversion of regular expressions into automata and \emph{vice versa} is a very  active  field. There exists a lot  of techniques to transform regular expressions (resp. regular tree expressions)  into finite automata~\cite{Brug,glushkov,khorsi,ZPC} (resp. into finite tree automata~\cite{automate2,lata}). As far as tree automata are concerned, computation algorithms are extensions of word cases. In~\cite{lata}, the computation of the position tree automaton from  a regular tree expression has been achieved by extending the classical notions of Glushkov functions defined  in~\cite{glushkov}, leading to the computation of an automaton which number of states is linear w.r.t. the number of  occurrences of symbols but which number of transitions can be exponential. In the same paper, it is proved that this automaton can be reduced  into a quadratic size recognizer.
 
 On the other side, Kuske and Meinecke have extended the notion of  word partial derivatives~\cite{antimirov} into  tree partial derivatives. They also present how to compute them extending  from words to trees~\cite{automate2} the $k$-C-Continuation algorithm by Champarnaud and Ziadi~\cite{ZPC1}. They obtain an algorithm with $O(R\cdot|\E|\cdot|\E|)$ space and time complexity where $R$ is the maximal rank of a symbol  occurring  in the finite ranked alphabet $\Sigma$ and $|\E|$ is the size of the regular expression.

In this paper, we show how to extend a notion of $k$-C-Continuation in order to compute from a regular tree expression its equation tree automaton  with an $O(|\E|+|Q|\cdot|\E|)$ time and space complexity where $|Q|$ is the number of its states. This constitutes an improvement in comparison with  Kuske and Meinecke algorithm~\cite{automate2}. The paper is organized as follows: Section~\ref{sec prelim} outlines finite tree automata over ranked trees, regular tree expressions, and linearized regular tree expressions which allows the set of positions to be defined. Next, in Section~\ref{sec tree automata} the notions of derivation and partial derivative of regular expression and set of regular expressions are 
introduced. Thus the definitions of equation tree automaton and $k$-C-Continuation tree automaton associated with the regular expression $\E$ is obtained. Afterwards, in Section~\ref{sec:algo} we present our algorithm which builds the equation tree automaton with an $O(|\E|+|Q|\cdot|\E|)$ time and space complexity. Finally, Section~\ref{sec:exemple} provides a full example of our construction.

\section{Preliminaries}\label{sec prelim}

    Let $(\Sigma,\mathrm{ar})$ be  \emph{a ranked alphabet}, where $\Sigma$ is a finite set and $\mathrm{ar}$ represents the  \emph{rank} of $\Sigma$ which is a mapping from $\Sigma$ into $\mathbb{N}$. The set of symbols of rank $n$ is denoted by $\Sigma_{n}$. The elements of rank $0$ are called  \emph{constants}. A \emph{tree} $t$ over   $\Sigma$ is inductively defined as follows: $t=a,~ t=f(t_1,\dots,t_k)$ where $a$ is any symbol in  $\Sigma_0$, $k$ is any integer satisfying $k\geq 1$, $f$ is any symbol in $\Sigma_k$ and $t_1,\dots,t_k$ are any $k$ trees over $\Sigma$. We denote by $T_{\Sigma}$ the set of trees over $\Sigma$.  \emph{A tree language} is a subset of $T_{\Sigma}$. Let ${\Sigma}_{\geq 1}=\Sigma\backslash \Sigma_0$ denote the set of  \emph{non-constant symbols} of the ranked alphabet $\Sigma$. \emph{A Finite Tree Automaton} (FTA)~\cite{automate1,automate2} ${\cal A}$ is a tuple $( Q, \Sigma, Q_{T},\Delta )$ where $Q$ is a finite set of states, $Q_T \subset Q$ is the set of \emph{final states} 
and  $\Delta\subset\bigcup_{n\geq 0}(Q \times \Sigma_{n}\times Q^n)$ is the set of  \emph{transition rules}. This set is equivalent to the function $\Delta$ from  $Q^n \times \Sigma_n \rightarrow 2^Q$ defined  by $(q,f,q_1,\dots,q_n)\in \Delta\Leftrightarrow q\in \Delta(q_1,\dots,q_n,f)$. The domain of this function can be extended to  $(2^Q)^n \times \Sigma_n  \rightarrow 2^Q$  as follows: $\Delta(Q_1,\dots,Q_n,f)=\bigcup_{(q_1,\dots,q_n)\in Q_1\times\dots\times Q_n} \Delta(q_1,\dots,q_n,f)$.  Finally, we denote by $\Delta^*$ the function from  $T_{\Sigma}\rightarrow 2^Q$  defined for any tree in $T_{\Sigma}$ as follows: 
    
  \centerline{
    $\Delta^*(t)=
      \left\{
        \begin{array}{l@{\ }l}
          \Delta(a) & \text{ if } t=a, a\in \Sigma_0 \\
          \Delta(f,\Delta^*(t_1),\dots,\Delta^*(t_n)) & \text{ if } t=f(t_1,\dots,t_n), f\in {\Sigma}_n, t_1,\ldots,t_n\in T_{\Sigma}
        \end{array}
      \right.$
  }
\noindent A tree is \emph{accepted} by ${\cal A}$ if and only if $\Delta^*(t)\cap Q_T\neq \emptyset$.   \emph{The language recognized} by ${\cal L(A)}$ is the set of trees accepted by ${\cal A}$  \emph{i.e.} ${\cal L(A)}=\{t\in T_{\Sigma}\mid \Delta^*(t)\cap Q_T\neq \emptyset\}$. A state $q\in Q$ is \emph{coaccessible} if $q\in Q_T$ or if $\exists Q^\prime=\{q_1,\ldots,q_n\}\subset Q$, $f\in \Sigma_n$, $q^\prime$ a coaccessible state in $Q$ such that $q\in Q^\prime$ and $q^\prime \in \Delta(f, q_1,\ldots, q_n)$. The \emph{coaccessible part} of the automaton $\mathcal{A}$ is the tree automaton $\mathcal{A}^\prime=(Q^\prime,\Sigma, \Delta^\prime, {Q_T}^\prime)$ where $Q^\prime=\{q\in Q\mid q \mbox{ is coaccessible}\}$ and $\Delta^\prime=\{(q,f,q_1,\ldots,q_n)\in \Delta \mid \{q,q_1,\ldots,q_n\}\subset Q^\prime\}$. It is easy to show that $\mathcal{L}(\mathcal{A})=\mathcal{L}(\mathcal{A^\prime})$. 

 Let $\sim$ be an equivalence relation over $Q$. We denote by $[q]$ the equivalence class of any state $q$ in $Q$. The \emph{quotient of} $A$ w.r.t. $\sim$ is the tree automaton $A_{/\sim}=( Q_{/\sim}, \Sigma, {Q_{T}}_{/\sim},\Delta_{/\sim} )$ where: $Q_{/\sim}=\{[q]\mid q\in Q\}$, ${Q_{T}}_{/\sim}=\{[q]\mid q\in Q_T\}$, $\Delta_{/\sim}=\{([q],f,[q_1],\ldots,[q_n]) \mid (q,f,q_1,\ldots,q_n)\in \Delta\}$.
 
  For any integer $n\geq 0$, for any $n$ languages $L_1, \dots, L_n\subset T_{\Sigma}$, and for any symbol  $f\in \Sigma_n$, $f(L_1, \dots, L_n)$ is the tree language $\lbrace f(t_1, \dots, t_n)\mid t_i\in L_i\rbrace$. The \emph{tree substitution} of a constant $c$ in $\Sigma$ by a language $L\subset T_{\Sigma}$ in a tree $t\in T_{\Sigma}$, denoted by $t\lbrace c \leftarrow L\rbrace$, is the language inductively defined by $L$ if $t=c$; $\lbrace d\rbrace$ if $t=d$ where $d\in \Sigma_0\setminus\{c\}$; $f(t_1\lbrace c \leftarrow L\rbrace, \dots, t_n\lbrace c \leftarrow L\rbrace)$ if $t=f(t_1, \dots, t_n)$ with $f\in\Sigma_n$ and $t_1, \dots, t_n$ any $n$ trees over $\Sigma$.
    Let $c$ be a symbol in $\Sigma_0$. The $c$-\emph{product} $L_1\cdot_{c} L_2$ of two languages $L_1, L_2\subset T_{\Sigma}$ is  defined by $L_1\cdot_{c} L_2=\bigcup_{t\in L_1}\lbrace t\lbrace c \leftarrow L_2\rbrace \rbrace$. The \emph{iterated $c$-product} is  inductively  defined for $L\subset T_{\Sigma}$ by:  $L^{0_c}=\lbrace c \rbrace$ and $L^{{(n+1)}_c}=L^{n_c}\cup L\cdot_{c} L^{n_c}$. The $c$-\emph{closure} of $L$ is defined by  $L^{*_c}=\bigcup_{n\geq 0} L^{n_c}$.
    
A \emph{regular expression} over a ranked alphabet $\Sigma$ is inductively defined by $\E\in \Sigma_0$, $\E=f(\E_1, \cdots, \E_n)$, $\E=(\E_1+\E_2)$, $\E=(\E_1\cdot_c \E_2)$, $\E=({\E_1}^{*_c})$, where $c\in\Sigma_0$, $n\in\mathbb{N}$, $f\in\Sigma_n$ and $\E_1,\E_2 ,\dots, \E_n$ are any $n$ regular expressions over $\Sigma$. Parenthesis can be omitted when there is no ambiguity. We write $\E_1=\E_2$ if $\E_1$ and $\E_2$ graphically coincide. We denote by $\regw{\Sigma}$ the set of all regular expressions over $\Sigma$. Every regular expression $\E$ can be seen as a tree over the ranked alphabet $\Sigma\cup \{+,\cdot_c, *_c \}$ with $c\in \Sigma_0$ where $+$ and $\cdot_c$ can be seen as a symbol of rank $2$ and $*_c$ has rank $1$. This tree is the syntax-tree $T_{\E}$ of $\E$. The \emph{alphabetical width} $||\E||$ of $\E$ is the number of occurrences of symbols of $\Sigma$ in $\E$. \emph{The size} $|\E|$ of $\E$ is the size of its syntax tree $T_{\E}$. The \emph{language} $\llbracket \E\rrbracket$  \emph{
denoted by} $\E$ is inductively defined as $\llbracket c\rrbracket=\lbrace c\rbrace$, $\llbracket f(\E_1,\E_2 , \cdots, \E_n)\rrbracket= f(\llbracket \E_1 \rrbracket, \dots, \llbracket \E_n \rrbracket)$, $\llbracket \E_1+ \E_2\rrbracket=\llbracket \E_1\rrbracket\cup\llbracket \E_2 \rrbracket$, $\llbracket \E_1\cdot_{c} \E_2\rrbracket=\llbracket \E_1\rrbracket \cdot_{c}\llbracket \E_2 \rrbracket$, $\llbracket {\E_1}^{*_c}\rrbracket=\llbracket \E_1\rrbracket^{*_c}$ where  $n\in\mathbb{N}$, $\E_1,\E_2,\dots, \E_n$ are any $n$ regular expressions, $f\in\Sigma_n$  and $c\in \Sigma_0$. It is well known that a tree language  is accepted by some tree automaton if and only if it can be denoted by a regular expression \cite{automate1,automate2}.
A regular expression $\E$ defined over $\Sigma$ is  \emph{linear} if and only if every symbol of $\Sigma_{\geq 1}$ appears at most once in $\E$. Note that any constant symbol may occur more than once. Let $\E$ be a regular expression over $\Sigma$. The  \emph{linearized regular expression} $\b\E^{\E}$ in $\E$ of a regular expression $\E$ is obtained from $\E$ by marking differently all symbols of a rank  greater than or equal to $1$ (symbols of $\Sigma_{\geq 1}$). The set of  \emph{marked symbols} with symbols of $\Sigma_0$ is the ranked alphabet  containing symbols called  \emph{positions}. We denote this set by $\Po{\E}{\E}$. When there is no ambiguity we denote by $\b{\f}$ the subexpression $\b{\f}^{\E}$ with $\f$ is a subexpression  of $\E$.
The mapping $h$ is defined from $\Po{\E}{\E}$ to $\Sigma$ with $h(\Po{\E}{\E}_m)\subset \Sigma_m$ for every  $m\in \mathbb{N}$. It associates with a marked symbol $f_j\in  \Po{\E}{\E}_{\geq 1}$ the symbol $f\in \Sigma_{\geq 1}$ and for a symbol $c\in \Sigma_0$ the symbol $h(c)=c$.
We can extend the mapping $h$ naturally to  $\regw{\Po{\E}{\E}}\rightarrow\regw{\Sigma}$ by $h(a)=a$, $h(\E_1+\E_2)=h(\E_1)+h(\E_2)$, $h(\E_1\cdot_c\E_2)=h(\E_1)\cdot_c h(\E_2)$, $h(\E_1^{*_c})=h(\E_1)^{*_c}$, $h(f_j(\E_1,\dots,\E_n))=f(h(\E_1),\dots,h(\E_n))$, with $n\in\mathbb{N}$, $a\in \Sigma_0$, $f\in \Sigma_n$, $f_j\in \Po{\E}{\E}_n$ such that $h(f_j)=f$ and $\E_1,\dots,\E_n$ any regular expressions over $\Po{\E}{\E}$.

\begin{example}\label{exp}
Let $\Sigma_0=\{a,c\}$,   $\Sigma_1=\{g,h\}$, $\Sigma_2=\{f\}$ and $\Sigma=\Sigma_0\cup\Sigma_1\cup\Sigma_2$ be a ranked alphabet.  Let $\E$, $\f$, $\G$ be the three following regular expressions over $\Sigma$: $\f=((c + a)+(g(c))^{*_ c})^{*_c}$, $\G=f(a,h(c))$  and $\E=\f\cdot_c \G$. The linearized forms of $\E$ and $\G$ are: $\b{\E}^{\E}=((c + a)+(g_1(c))^{*_{c}})^{*_{c}}\cdot_{c} f_2(a, h_3(c))$, $\b{\G}^{\G}=f_1(a,h_2(c))$. The linearized form of $\G$ in $\E$ is $\b{\G}^{\E}=f_2(a,h_3(c))$. Notice that $\Po{\G}{\G}=\lbrace a,f_1, h_2\rbrace\neq \Po{\E}{\G}=\lbrace a, f_2, h_3 \rbrace$.
\end{example}

\section{Tree Automata Computations}\label{sec tree automata}

In this section, we recall how to compute from a regular expression $\E$ a tree automaton that accepts $\llbracket \E\rrbracket$. We first recall the computation of the equation automaton $\mathcal{A}_E$ of $\E$, then we define the $k$-c-continuation automaton $\mathcal{C}_E$.

\subsection{The Equation Tree Automaton}

  In \cite{automate2}, Kuske and Meinecke extend the notion of word partial derivatives \cite{antimirov} to tree partial derivatives in order to compute from a regular expression $\E$ a tree automaton recognizing $\llbracket \E\rrbracket$. Due to the notion of  ranked alphabet, partial derivatives are no longer sets of expressions, but sets of tuples of expressions.
  
Let ${\cal N}=(\E_1,\ldots,\E_n)$ be a tuple of regular expressions, $\f$ be some regular expression and $c\in\Sigma_0$. Then ${\cal N}\cdot_c\f$ is the tuple $(\E_1\cdot_c\f,\dots,\E_n\cdot_c\f)$. For ${\cal S}$ a set of tuples of regular expressions, ${\cal S}\cdot_c \f$ is the set ${\cal S}\cdot_c \f=\{{\cal N}\cdot_c\f\mid {\cal N}\in {\cal S}\}$. Finally, $\mathrm{SET}({\cal N})=\{\E_1, \cdots,\E_m\}$ and $\mathrm{SET}({\cal S})=\bigcup_{{\cal N}\in{\cal S}}\mathrm{SET}({\cal N})$.

\begin{definition}[\cite{automate2}]
  Let $\E$ be a regular expression over a ranked alphabet $\Sigma$ and $f$ be a symbol in $\Sigma_m$ with $m\geq 1$ an integer.  The set $f^{-1}(\E)$ of tuples of regular expressions is defined as follows:

	\centerline{ 
	 \begin{tabular}{r@{\ }l}
     $f^{-1}(g(\E_1, \cdots,\E_n))$ & 
     $=\left\{
       \begin{array}{ll}
         \{(\E_1, \cdots,\E_n)\} &\text { if } f=g\\
         \emptyset &\text { otherwise} 
       \end{array}
     \right.$\\
     $f^{-1}(\f+\G)$ & $= f^{-1}(\f) \cup  f^{-1}(\G)$\\
     $f^{-1}(\f\cdot_c \G)$ & 
     $=\left\{
      \begin{array}{ll}
        f^{-1}(\f)\cdot_c \G&\text { if } c\notin\llbracket\f\rrbracket\\
        f^{-1}(\f)\cdot_c \G\cup f^{-1}(\G) &\text{ otherwise} 
      \end{array}
     \right.$\\  
     $f^{-1}({\f}^{*_c})$ & $= f^{-1}({\f})\cdot_{c} {\f}^{*_c}$\\
	 \end{tabular}
	}
   
  The function $f^{-1}$ is extended to any set $S$ of regular expressions as follows:
  
  \centerline{ $f^{-1}(S)=\bigcup_{\E\in S}f^{-1}(\E)$.}
\end{definition}

The  \emph{partial derivative} of $\E$ w.r.t. a word $w\in\Sigma_{\geq 1}^*$, denoted by $\partial_w(\E)$, is the set of regular expressions inductively defined by: 
  
  \centerline{
  $\partial_w(\E)=
    \left\{
      \begin{array}{l@{\ }l}
        \{\E\} &\text{ if } w=\varepsilon\\
        \mathrm{SET}(f^{-1}(\partial_{u}(\E)))&\text{ if } w=uf, f\in \Sigma_{\geq 1},u\in\Sigma_{\geq 1}^*
      \end{array}
    \right.$
}

\noindent The partial derivation is extended to any subset $U$ of $\Sigma_{\geq 1}^*$ as 
%
by $\partial_U(\E)=\bigcup_{w\in U}\partial_w(\E)$.
Note that $\partial_{uf}(\E)=\partial_{f}(\partial_{u}(\E))=\bigcup_{\f\in \partial_u(\E)}\partial_f(\f)$.

\begin{definition}\label{def aut eq}
  Let $\E$ be a regular expression over a ranked alphabet $\Sigma$. The  \emph{Equation  Automaton} of $\E$ is the tree automaton ${\cal A_{\E}}=(Q,\Sigma,Q_T,\Delta)$ defined by $Q= \partial_{\Sigma^*_{\geq 1}}(\E)$, $Q_T=\{\E\}$, and
  
  \centerline{$\Delta=
    \left\{
      \begin{array}{c}
        \{(\f,f,\G_1,\dots,\G_m)\mid \f\in Q,f\in\Sigma_m, {m\geq 1},(\G_1,\dots,\G_m)\in f^{-1}(\f)\}\\
        \cup\   \{(\f,c)\mid~ c\in(\llbracket \f\rrbracket\cap\Sigma_0)\}\\
      \end{array}
    \right.$}
\end{definition}

\begin{theorem}[\cite{automate2}]
Let $\E$ be a regular expression and ${\cal A_{\E}}$ be the equation tree automaton associated with $\E$. Then
 ${\cal L}({\cal A_{\E}})=\llbracket \E\rrbracket$.
\end{theorem}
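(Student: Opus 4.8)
The plan is to prove the two inclusions $\llbracket\E\rrbracket\subseteq\mathcal{L}(\mathcal{A}_\E)$ and $\mathcal{L}(\mathcal{A}_\E)\subseteq\llbracket\E\rrbracket$ separately, but the conceptual core is a single structural identity connecting the partial derivatives to the language. Concretely, I would first establish, by induction on the structure of the regular expression $\E$, the ``derivative characterization'' of the language: for every $f\in\Sigma_m$ with $m\geq 1$,
\[
f(L_1,\dots,L_m)\cap\llbracket\E\rrbracket\neq\emptyset \iff \exists (\G_1,\dots,\G_m)\in f^{-1}(\E)\text{ with }t_i\in\llbracket\G_i\rrbracket,
\]
and more usefully the set-level statement $\llbracket\E\rrbracket = \bigcup_{c\in\llbracket\E\rrbracket\cap\Sigma_0}\{c\} \;\cup\; \bigcup_{f\in\Sigma_{\geq1}}\bigcup_{(\G_1,\dots,\G_m)\in f^{-1}(\E)} f(\llbracket\G_1\rrbracket,\dots,\llbracket\G_m\rrbracket)$. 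This is the tree analogue of Antimirov's fundamental identity $\llbracket\E\rrbracket = \bigcup_a a\cdot\llbracket\partial_a(\E)\rrbracket$ (together with the nullability term), and its proof is a case analysis over the five forms of $\E$ ($c\in\Sigma_0$, $g(\dots)$, $\f+\G$, $\f\cdot_c\G$, $\f^{*_c}$), using the definitions of $f^{-1}$, of the $c$-product and $c$-closure of languages, and the substitution lemma $\llbracket\f\cdot_c\G\rrbracket=\llbracket\f\rrbracket\cdot_c\llbracket\G\rrbracket$. The $\f\cdot_c\G$ and $\f^{*_c}$ cases are where one must be careful: the clause splitting on whether $c\in\llbracket\f\rrbracket$ exactly mirrors the two ways a tree in $\llbracket\f\cdot_c\G\rrbracket$ can have its root come either from $\f$ or, after substituting $c$, from $\G$.

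Next I would lift this to a statement about $\Delta^*$. The key lemma is: for every $\f\in Q=\partial_{\Sigma^*_{\geq1}}(\E)$ and every tree $t\in T_\Sigma$, $\f\in\Delta^*(t)$ if and only if $t\in\llbracket\f\rrbracket$. I would prove this by induction on the structure of $t$. For $t=c$ a constant, $\f\in\Delta^*(c)=\Delta(c)$ iff $(\f,c)\in\Delta$ iff $c\in\llbracket\f\rrbracket\cap\Sigma_0$, which is the base case. For $t=f(t_1,\dots,t_m)$, by definition $\f\in\Delta^*(t)$ iff there exist states $\G_1\in\Delta^*(t_1),\dots,\G_m\in\Delta^*(t_m)$ with $(\f,f,\G_1,\dots,\G_m)\in\Delta$, i.e. $(\G_1,\dots,\G_m)\in f^{-1}(\f)$; by the induction hypothesis $\G_i\in\Delta^*(t_i)$ iff $t_i\in\llbracket\G_i\rrbracket$, and then the derivative characterization above (applied to $\f$, which is legitimate since $\f\in Q$) says this is equivalent to $t\in\llbracket\f\rrbracket$. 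One small point to check here is that all the $\G_i$ appearing in $f^{-1}(\f)$ are themselves in $Q$, so the induction stays inside the state set; this follows because $f^{-1}(\partial_w(\E))$ feeds exactly $\partial_{wf}(\E)$ via $\mathrm{SET}$, so $Q$ is closed under taking components of $f^{-1}$.

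Finally, the theorem follows by instantiating the $\Delta^*$ lemma at the unique final state: $t\in\mathcal{L}(\mathcal{A}_\E)$ iff $\Delta^*(t)\cap Q_T\neq\emptyset$ iff $\E\in\Delta^*(t)$ (since $Q_T=\{\E\}$) iff $t\in\llbracket\E\rrbracket$. I expect the main obstacle to be the derivative characterization for the $c$-product and the $c$-closure: one has to unwind the inductive definitions $L^{*_c}=\bigcup_{n\geq0}L^{n_c}$ and argue by a secondary induction on $n$ that a tree in $\llbracket\f^{*_c}\rrbracket$ decomposes as $f(\dots)$ with components obtained by $c$-multiplying the components of some tuple in $f^{-1}(\f)$ by $\f^{*_c}$ itself, matching the clause $f^{-1}(\f^{*_c})=f^{-1}(\f)\cdot_c\f^{*_c}$. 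Everything else — the handling of $+$ via the union, the base cases, and the passage through $\Delta^*$ — is routine bookkeeping once the substitution/product identities on languages are in hand.
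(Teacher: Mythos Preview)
The paper does not actually prove this theorem: it is stated with the citation \cite{automate2} (Kuske and Meinecke) and no argument is given; the next subsection begins immediately afterwards. So there is no paper proof to compare your proposal against.

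That said, your plan is sound and is essentially the standard route one would expect in \cite{automate2}. The two ingredients you isolate are the right ones: the language decomposition
\[
\llbracket\f\rrbracket \;=\; \bigcup_{c\in\llbracket\f\rrbracket\cap\Sigma_0}\{c\}\ \cup\ \bigcup_{f\in\Sigma_{\geq1}}\ \bigcup_{(\G_1,\dots,\G_m)\in f^{-1}(\f)} f(\llbracket\G_1\rrbracket,\dots,\llbracket\G_m\rrbracket)
\]
for every regular expression $\f$, and the run--language correspondence $\f\in\Delta^*(t)\Leftrightarrow t\in\llbracket\f\rrbracket$ for every state $\f\in Q$, proved by induction on $t$. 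Your closure observation (that the components of any tuple in $f^{-1}(\f)$ with $\f\in Q$ again lie in $Q$, because $\mathrm{SET}(f^{-1}(\partial_w(\E)))=\partial_{wf}(\E)$) is exactly what is needed to keep the induction inside the state set. The only places requiring real care are, as you note, the $\cdot_c$ and $*_c$ cases of the language decomposition; the secondary induction on the exponent $n$ in $L^{n_c}$ that you sketch is the correct way to handle the closure case. Nothing in your outline is wrong or missing.
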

\subsection{The C-Continuation Tree Automaton}

In~\cite{automate2}, Kuske and Meinecke show how to efficiently compute the equation tree automaton of a regular expression \emph{via} an extension of Champarnaud and Ziadi's 
$k$-C-Continuation~\cite{ZPC1,ZPC2,khorsi}. In this section, we show how to inductively compute them. The main difference with~\cite{automate2} is that the $k$-c-continuations are here computed using alternative formulae, and not using the partial derivation. As a consequence, any symbol that appears in the expression $\E$ admits a non-empty $k$-c-continuation (\emph{e.g.} in~\cite{automate2}, there is no continuation for $g$ in $\E=a\cdot_b g(c)$). 

\begin{definition}\label{def1}
 Let $\E$ be linear. Let $k$ and $m$ be two integers such that $1\leq k\leq m$. Let $f$ be in $(\Sigma_{\E}\cap\Sigma_{m})$. The  \emph{$k$-C-continuation} $C_{f^k}(\E)$ of $f$ in $\E$ is the regular expression defined by:
 
\centerline{
  \begin{tabular}{r@{\ }l}
    $C_{f^k}(g(\E_1, \cdots,\E_m))$ & 
      $=\left\{
        \begin{array}{l@{\ }l}
          \E_k & \text { if } f=g\\
          C_{f^k}(\E_j)& \text{ if } f\in \Sigma_{\E_j}\\
        \end{array}
      \right.$\\
    $C_{f^k}(\f+\G)$ & 
    $=\left\{
      \begin{array}{l@{\ }l}
        C_{f^k}(\f)& \text{ if } f\in \Sigma_{\f}\\
        C_{f^k}(\G)& \text{ if } f\in \Sigma_{\G}\\
      \end{array}
    \right.$\\
    $C_{f^k}(\f\cdot_c \G)$ & 
    $=\left\{
      \begin{array}{l@{\ }l}
        C_{f^k}(\f)\cdot_{c} \G &\text{ if } f\in \Sigma_{\f}\\
        C_{f^k}(\G) &\text{ otherwise }   
      \end{array}
    \right.$\\
    $C_{f^k}({\f}^{*_c})$ & $=C_{f^k}({\f})\cdot_{c} {\f}^{*_c}$\\
  \end{tabular}
}

 By convention, we set $C_{\varepsilon^1}(\E)=\E$. 
\end{definition}

Let us first show the relation between partial derivation and $k$-c-continuation.

\begin{lemma}\label{lem f cgk cf}
 Let $\E$ be linear,  $n$, $m$ and $k$ be three integers such that $n,m\geq 1$, $1\leq k\leq m$, $f\in\Sigma_n$ and  $g\in\Sigma_m\cup\{\varepsilon\}$. If $f^{-1}(C_{g^k}(\E))\neq \emptyset$ then $f^{-1}(C_{g^k}(\E))=\{(C_{f^1}(\E),\dots,C_{f^n}(\E))\}$.
\end{lemma}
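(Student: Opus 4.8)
The plan is to prove, by structural induction on $\E$, the following reformulation: for every $g\in(\Sigma_{\E}\cap\Sigma_{\geq 1})\cup\{\varepsilon\}$, every $k$ for which $C_{g^k}(\E)$ is defined, and every $f\in\Sigma_n$ with $n\geq 1$, if $f^{-1}(C_{g^k}(\E))\neq\emptyset$ then $f^{-1}(C_{g^k}(\E))=\{(C_{f^1}(\E),\dots,C_{f^n}(\E))\}$; for $g=\varepsilon$ this reads $f^{-1}(\E)=\{(C_{f^1}(\E),\dots,C_{f^n}(\E))\}$. Two elementary facts, each a one-line induction, will be used throughout: $(a)$ $f^{-1}(\f)=\emptyset$ whenever $f\notin\Sigma_{\f}$, since no defining clause of $f^{-1}$ can produce the symbol $f$ out of nothing; and $(b)$ $\Sigma_{C_{g^k}(\f)}\subseteq\Sigma_{\f}$, since $C_{g^k}(\f)$ is assembled only from subexpressions of $\f$. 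Together they show that a non-empty $f^{-1}(C_{g^k}(\f))$ forces $f\in\Sigma_{\f}$, so that in that situation $C_{f^i}(\E)$ is well defined.

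For the induction, the base case $\E=a\in\Sigma_0$ is vacuous, since $f\in\Sigma_{\geq 1}$ makes $f^{-1}(a)=\emptyset$. In the inductive step I split on the outermost operator of $\E$ and, within each case, on the location of $g$ and of $f$; because $\E$ is linear and $f$ is non-constant, $f$ occurs in at most one immediate subexpression, which by $(a)$ annihilates the ``wrong-branch'' occurrences of $f^{-1}$. The recipe is then the same in every subcase: unfold $C_{g^k}(\E)$ via Definition~\ref{def1} (using $C_{\varepsilon^1}(\E)=\E$ when $g=\varepsilon$), push $f^{-1}$ inward through its defining equations, invoke the induction hypothesis on the relevant proper subexpression --- with index $g$, or with $\varepsilon$, i.e.\ on the subexpression itself --- to rewrite the result as the tuple of $k$-C-continuations of $f$ in that subexpression, and re-fold by matching the appropriate clause of the definition of $C_{f^i}(\E)$. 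For example, when $\E=\f\cdot_c\G$ with $g,f\in\Sigma_{\f}$, one has $f\notin\Sigma_{\G}$, hence $f^{-1}(\G)=\emptyset$ and $f^{-1}(C_{g^k}(\E))=f^{-1}(C_{g^k}(\f))\cdot_c\G=\{(C_{f^1}(\f)\cdot_c\G,\dots)\}=\{(C_{f^1}(\E),\dots)\}$; the cases $\E=h(\E_1,\dots,\E_p)$, $\E=\f+\G$, $\E={\f}^{*_c}$ with $g=\varepsilon$, and the other placements of $g$ are handled analogously.

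The one genuinely delicate case is $\E={\f}^{*_c}$ with $g\in\Sigma_{\f}$, where $C_{g^k}(\E)=C_{g^k}(\f)\cdot_c{\f}^{*_c}$ need not be a linear expression: the unique occurrence of $f$ may lie both inside $C_{g^k}(\f)$ and inside the right factor ${\f}^{*_c}$. Then, when $c\in\llbracket C_{g^k}(\f)\rrbracket$, the $\cdot_c$-clause of $f^{-1}$ gives $f^{-1}(C_{g^k}(\E))=\bigl(f^{-1}(C_{g^k}(\f))\cup f^{-1}(\f)\bigr)\cdot_c{\f}^{*_c}$, a union of two a priori non-empty sets of tuples. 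This is precisely where the statement being proved is used: its right-hand side does not depend on $g$, so the induction hypothesis applied to $\f$ with index $g$ and the induction hypothesis applied to $\f$ with index $\varepsilon$ both return the same singleton $\{(C_{f^1}(\f),\dots,C_{f^n}(\f))\}$; hence the union, being non-empty, equals that singleton, and multiplying by $\cdot_c{\f}^{*_c}$ gives exactly $\{(C_{f^1}(\E),\dots,C_{f^n}(\E))\}$, using $C_{f^i}({\f}^{*_c})=C_{f^i}(\f)\cdot_c{\f}^{*_c}$. I expect this collapsing step --- together with the bookkeeping needed to keep the trailing $c$-products in step on both sides of each equation --- to be the only real obstacle; everything else is a mechanical traversal of two mutually compatible recursive definitions.
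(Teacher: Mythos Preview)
Your proposal is correct and follows essentially the same approach as the paper: a structural induction on $\E$, splitting on the outermost operator and on the positions of $g$ and $f$, using linearity to kill the ``wrong-branch'' contributions of $f^{-1}$. Your explicit isolation of the auxiliary facts $(a)$ and $(b)$ makes precise what the paper uses silently, and in the star case with $g\in\Sigma_{\f}$ your collapsing argument (invoking the induction hypothesis on $\f$ with indices $g$ and $\varepsilon$) is a slightly cleaner variant of the paper's, which instead appeals to the already-treated $g=\varepsilon$ sub-case for $\E={\f}^{*_c}$ itself.
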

\begin{proof}
  By induction over the structure of $\E$. For any symbol $g\in\Sigma_p\cup\{\varepsilon\}$ and for any expression $\f$, let us set 
  $C_g(F)=(C_{g^1}(F),\dots,C_{g^p}(F))$.
  \begin{enumerate}
    \item Let us suppose that $\E=h(\E_1, \cdots,\E_m)$. Three cases have to be considered:
    \begin{enumerate}
      \item If $g=\varepsilon$, then $k=1$ and $f^{-1}(C_{g^k}(\E))=f^{-1}(\E)$. Since $f^{-1}(C_{g^k}(\E))\neq \emptyset$, $f=h$. Hence, $f^{-1}(\E)=\{(\E_1,\ldots,\E_n)\}$. Moreover, for any integer $1\leq j\leq n$, $C_{f^j}(\E)=E_j$. Consequently, $f^{-1}(C_{g^k}(\E))=\{C_{f}(\E)\}$.
      \item Let us suppose that $g\neq\varepsilon$ and $g\neq h$. Hence $C_{g^k}(\E)=C_{g^k}(\E_l)$ with $f\in \Sigma_{\E_l}$. By induction hypothesis, $f^{-1}(C_{g^k}(\E_j))=\{C_{f}(\E_l)\}$. Moreover, for any integer $1\leq j\leq n$, $C_{f^j}(\E)=C_{f^j}(\E_l)$. Consequently, $f^{-1}(C_{g^k}(\E))=\{C_{f}(\E)\}$.
      \item Let us suppose that $g\neq\varepsilon$ and $g=h$. Hence $C_{g^k}(\E)=\E_k$. Since $f^{-1}(C_{g^k}(\E))\neq \emptyset$, then $f\in\Sigma_{E_k}$. Thus, $f\neq h$. By definition, $\E_k=C_\varepsilon^1(\E_k)$. By induction hypothesis, $f^{-1}(C_\varepsilon^1(\E_k))=\{C_{f}(\E_k)\}$. Since $f\neq h$ and since $f\in\Sigma_{E_k}$, for any integer $1\leq j\leq n$, $C_{f^j}(\E)=C_{f^j}(\E_k)$. Consequently, $f^{-1}(C_{g^k}(\E))=\{C_{f}(\E)\}$.
    \end{enumerate}
    \item Suppose that $\E=E_1+E_2$. Suppose that $f\in\Sigma_{E_1}$. Then $f^{-1}(C_{g^k}(\E))=f^{-1}(C_{g^k}(\E_1))$. By induction hypothesis, $f^{-1}(C_{g^k}(\E_1))=\{C_{f}(\E_1)\}$. Finally, since for any integer $1\leq j\leq n$, $C_{f^j}(\E)=C_{f^j}(\E_1)$, it holds $f^{-1}(C_{g^k}(\E))=\{C_{f}(\E)\}$. The prove is identical whenever $f\in\Sigma_{E_2}$.
    \item Let us suppose that $\E=E_1\cdot_c \E_2$. Two cases have to be considered:
    \begin{enumerate}
      \item If $g\in \Sigma_{E_1}$, $f^{-1}(C_{g^k}(\E))=f^{-1}(C_{g^k}(\E_1)\cdot_c \E_2)$. If $f\in\Sigma_{E_1}$, then $f^{-1}(C_{g^k}(\E_1)\cdot_c \E_2)=f^{-1}(C_{g^k}(\E_1))\cdot_c \E_2$; otherwise, $f^{-1}(C_{g^k}(\E_1)\cdot_c \E_2)=f^{-1}(\E_2)$. Hence, according to induction hypothesis, either $f^{-1}(C_{g^k}(\E_1)\cdot_c \E_2)=\{(C_{f^1}(\E_1)\cdot_c \E_2,\dots,C_{f^n}(\E))\cdot_c \E_2\}$, or $f^{-1}(C_{g^k}(\E_1)\cdot_c \E_2)=\{(C_{f^1}(\E_2),\dots,C_{f^n}(\E_2))\}$. By definition, considering whether $f\in\Sigma_{E_1}$, for any integer $1\leq j\leq n$, either $C_{f^j}(\E)=C_{f^1}(\E_1)\cdot_c \E_2$ or $C_{f^j}(\E)=C_{f^1}(\E_2)$. In both of these cases, $f^{-1}(C_{g^k}(\E))=\{C_{f}(\E)\}$.
      \item If $g\in \Sigma_{E_2}$, $f^{-1}(C_{g^k}(\E))=f^{-1}(C_{g^k}(\E_2)$. By induction hypothesis, $f^{-1}(C_{g^k}(\E_2)=\{(C_{f^1}(\E_2),\dots,C_{f^n}(\E_2))\}$. Moreover, for any integer $1\leq j\leq n$, $C_{f^j}(\E)=C_{f^j}(\E_2)$. Consequently, $f^{-1}(C_{g^k}(\E))=\{C_{f}(\E)\}$.
    \end{enumerate} 
    \item Let us suppose that $\E=E_1^{*_c}$. Two cases have to be considered:
    \begin{enumerate}
      \item If $g=\varepsilon$, then $f^{-1}(C_{g^k}(\E))=f^{-1}(\E_1^{*_c})=f^{-1}(\E_1)\cdot_c \E_1^{*_c}$. By definition, $\E_1=C_{\varepsilon}(\E_1)$. Hence by induction hypothesis, $f^{-1}(C_{\varepsilon}(\E_1))\cdot_c \E_1^{*_c}=\{(C_{f^1}(\E_1)\cdot_c \E_1^{*_c},\dots,C_{f^n}(\E_1)\cdot_c \E_1^{*_c}\}$.  Moreover, for any integer $1\leq j\leq n$, $C_{f^j}(\E)=C_{f^j}(\E_1)\cdot_c \E_1^{*_c}$. Consequently, $f^{-1}(C_{g^k}(\E))=\{C_{f}(\E)\}$.
      \item Suppose that $g\neq\varepsilon$. Then $f^{-1}(C_{g^k}(\E))=f^{-1}(C_{g^k}(\E_1)\cdot_c \E_1^{*_c})$. Depending whether $c$ belongs to $\llbracket C_{g^k}(\E_1) \rrbracket$, either $f^{-1}(C_{g^k}(\E_1)\cdot_c \E_1^{*_c})=f^{-1}(C_{g^k}(\E_1)) \cdot_c \E_1^{*_c}$  or $f^{-1}(C_{g^k}(\E_1)\cdot_c \E_1^{*_c})=f^{-1}(C_{g^k}(\E_1)) \cdot_c \E_1^{*_c} \cup f^{-1}(\E_1^{*_c})$.  Since $\E_1^{*_c}=C_\varepsilon^1(\E_1^{*_c})$, it holds by induction hypothesis that either $f^{-1}(C_{g^k}(\E_1)\cdot_c \E_1^{*_c})=\{(C_{f^1}(\E_1)\cdot_c \E_1^{*_c},\dots,C_{f^n}(\E_1)\cdot_c \E_1^{*_c})\}$  or $f^{-1}(C_{g^k}(\E_1)\cdot_c \E_1^{*_c})=\{(C_{f^1}(\E_1)\cdot_c \E_1^{*_c},\dots,C_{f^n}(\E_1)\cdot_c \E_1^{*_c})\} \cup \{C_{f}(\E_1^{*_c})\}$. Finally, since  for any integer $1\leq j\leq n$, $C_{f^j}(\E)=C_{f^j}(\E_1)\cdot_c \E_1^{*_c}$, in both of these cases, $f^{-1}(C_{g^k}(\E))=\{C_{f}(\E)\}$.
    \end{enumerate}
  \end{enumerate}
 \qed
\end{proof}

\begin{proposition}\label{prop f deriv u cf}
 Let $\E$ be linear and $f\in \Sigma_n$ with $n\geq 1$. Let $u$ be a word in ${\Sigma_{\geq 1}}^*$. If $f^{-1}(\partial_u(\E))\neq \emptyset$ then  $f^{-1}(\partial_u(\E))=\{(C_{f^1}(\E),\ldots,C_{f^n}(\E))\}$.
\end{proposition}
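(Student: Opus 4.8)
The plan is to induct on the length $|u|$ of the word $u$, with Lemma~\ref{lem f cgk cf} doing essentially all the work. The key observation is that Lemma~\ref{lem f cgk cf} already asserts that $f^{-1}$ evaluated at \emph{any} $k$-c-continuation of $\E$ is, whenever nonempty, the single tuple $(C_{f^1}(\E),\dots,C_{f^n}(\E))$, depending only on $\E$ and $f$; hence it suffices to show that every element of $\partial_u(\E)$ is of the form $C_{h^i}(\E)$ for some $h\in\Sigma_p\cup\{\varepsilon\}$ with $1\le i\le p$ (the convention $C_{\varepsilon^1}(\E)=\E$ covering $u=\varepsilon$), and then to collect the uniform answers given by the lemma. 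For the base case $u=\varepsilon$, I would use $\partial_\varepsilon(\E)=\{\E\}=\{C_{\varepsilon^1}(\E)\}$, so that $f^{-1}(\partial_\varepsilon(\E))=f^{-1}(C_{\varepsilon^1}(\E))$ and Lemma~\ref{lem f cgk cf} with $g=\varepsilon$, $k=1$ gives the statement at once.

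For the inductive step, write $u=vg$ with $g\in\Sigma_m$, $m\geq1$, $v\in\Sigma_{\geq 1}^*$, so that $\partial_u(\E)=\mathrm{SET}(g^{-1}(\partial_v(\E)))=\bigcup_{\f\in\partial_v(\E)}\mathrm{SET}(g^{-1}(\f))$. I would first establish that every $\f\in\partial_v(\E)$ is a $k$-c-continuation of $\E$: if $v=\varepsilon$ then $\partial_v(\E)=\{C_{\varepsilon^1}(\E)\}$; if $v=v'g'$ with $g'\in\Sigma_{m'}$, then either $g'^{-1}(\partial_{v'}(\E))=\emptyset$ and $\partial_v(\E)=\emptyset$ (nothing to prove), or the induction hypothesis applied to the strictly shorter word $v'$ gives $g'^{-1}(\partial_{v'}(\E))=\{(C_{g'^1}(\E),\dots,C_{g'^{m'}}(\E))\}$, whence $\partial_v(\E)=\{C_{g'^1}(\E),\dots,C_{g'^{m'}}(\E)\}$. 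Since $f^{-1}(\partial_u(\E))\neq\emptyset$ forces $\partial_u(\E)\neq\emptyset$, hence $g^{-1}(\partial_v(\E))\neq\emptyset$, some $\f\in\partial_v(\E)$ has $g^{-1}(\f)\neq\emptyset$; Lemma~\ref{lem f cgk cf} then gives $g^{-1}(\f)=\{(C_{g^1}(\E),\dots,C_{g^m}(\E))\}$ for \emph{every} such $\f$, so $\partial_u(\E)=\{C_{g^1}(\E),\dots,C_{g^m}(\E)\}$. A final application of Lemma~\ref{lem f cgk cf} to each $C_{g^j}(\E)$, together with the hypothesis that at least one $f^{-1}(C_{g^j}(\E))$ is nonempty, yields $f^{-1}(\partial_u(\E))=\bigcup_{j=1}^m f^{-1}(C_{g^j}(\E))=\{(C_{f^1}(\E),\dots,C_{f^n}(\E))\}$.

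The only real difficulty is bookkeeping: one must check that every expression arising along a derivation of the linear expression $\E$ is genuinely a $k$-c-continuation of $\E$ (so that Lemma~\ref{lem f cgk cf} applies), and that forming $\mathrm{SET}$ of the union $g^{-1}(\partial_v(\E))$ does not merge incompatible tuples. Both points rest on the linearity of $\E$ and on the fact that Lemma~\ref{lem f cgk cf} already returns an answer independent of the intermediate expression, so once they are noticed the induction closes immediately.
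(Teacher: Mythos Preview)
Your proposal is correct and follows the same overall scheme as the paper: induction on $|u|$, with the base case handled via $C_{\varepsilon^1}(\E)=\E$ and Lemma~\ref{lem f cgk cf}, and the inductive step reducing $\partial_u(\E)$ to a set of $k$-c-continuations so that Lemma~\ref{lem f cgk cf} applies again. The only difference is that in the inductive step $u=vg$ you take an extra detour: you decompose $v=v'g'$ and apply the induction hypothesis to $v'$ with symbol $g'$ to exhibit $\partial_v(\E)$ as a set of continuations, then invoke Lemma~\ref{lem f cgk cf} twice (once for $g$, once for $f$); the paper instead applies the induction hypothesis directly to $v$ with symbol $g$ (valid since $|v|<|u|$ and the statement is quantified over all symbols), obtaining $g^{-1}(\partial_v(\E))=\{(C_{g^1}(\E),\dots,C_{g^m}(\E))\}$ in one stroke, and then invokes the lemma once for $f$. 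Your route is perfectly sound, just one step longer than necessary.
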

\begin{proof}
 By recurrence over the length of $u$.  For any symbol $g\in\Sigma_p\cup\{\varepsilon\}$ and for any expression $\f$, let us set $C_g(F)=(C_{g^1}(F),\dots,C_{g^p}(F))$.
 \begin{enumerate}
   \item Let $u=\varepsilon$. Then $f^{-1}(\partial_u(\E))=f^{-1}(\E)$. By definition, $f^{-1}(\E)=f^{-1}(C_\varepsilon^1(\E))$. According to Lemma~\ref{lem f cgk cf}, $f^{-1}(C_\varepsilon^1(\E))=\{C_{f}(\E)\}$.
   \item Let $u=wg$ with $w$ a word in ${\Sigma_{\geq 1}}^*$ and $g$ a symbol in $\Sigma_m$. Then $f^{-1}(\partial_u(\E))=f^{-1}(SET(g^{-1}(\partial_u(\E))))$. According to recurrence hypothesis, it holds that $SET(g^{-1}(\partial_u(\E)))=SET(\{C_{g}(\E)\})=\{C_{g^1}(\E),\ldots,C_{g^m}(\E)\}$. By definition, $f^{-1}(\{C_{g^1}(\E),\ldots,C_{g^m}(\E)\}=\bigcup_{1\leq i\leq m} f^{-1}(C_{g^i}(\E))$. According to Lem\-ma~\ref{lem f cgk cf}, for any integer $i$ such that $f^{-1}(C_{g^i}(\E))\neq\emptyset$, it holds  $f^{-1}(C_{g^i}(\E))=\{(C_{f^1}(\E),\ldots,C_{f^n}(\E))\}$. Since $f^{-1}(\partial_u(\E))\neq \emptyset$, there exists at least one integer $i$ such that $f^{-1}(C_{g^i}(\E))\neq\emptyset$. Consequently, $\bigcup_{1\leq i\leq m} f^{-1}(C_{g^i}(\E))=\{C_{f}(\E)\}$.
 \end{enumerate}
 \qed
\end{proof}

\begin{definition}\label{def aut c cont lin}
 The automaton ${\cal \b C_{\E}}=(Q_{\b{\cal C}},\Po{\E}{\E},\{C_{{\varepsilon}^1}(\b\E)\},\Delta_{\b{\cal C}})$ is defined by  
 \begin{itemize}
 \item $Q_{\b{\cal C}}=\{C_{f^k_j}(\b\E)\mid f_j\in \Po{\E}{\E}_m,1\leq k\leq m\}\cup\{C_{{\varepsilon}^1}(\b\E)\}$,
\item $\Delta_{\b{\cal C}}=
  \left\{
    \begin{array}{c}
      \{(C_{x}(\b\E),g_i,\mathfrak{C}_{g_i})\mid  g_i\in\Po{\E}{\E}_m,m\geq 1, \mathfrak{C}_{g_i}\in {g_i}^{-1}(C_{x}(\b\E))\}\\
      \cup\{(C_{x}(\b\E),c)\mid, c\in\llbracket C_{x}(\b\E)\rrbracket\cap\Sigma_0\}\\
    \end{array}
  \right.$
\end{itemize}
where for any symbol $g_i$ in $\Po{\E}{\E}_m$, $\mathfrak{C}_{g_i}=(C_{g^1_i}(\b\E),\dots,C_{g^m_i}(\b\E))$.
\end{definition}

The following lemma illustrates the link between ${\cal \b C_{\E}}$ and ${\cal A_{\b\E}}$.

\begin{lemma}\label{lemme3}
  The coaccessible part of ${\cal \b C_{\E}}$  is equal to ${\cal A_{\b\E}}$.
\end{lemma}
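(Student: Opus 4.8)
The plan is to show mutual inclusion between the coaccessible part of $\mathcal{\b C_{\E}}$ and $\mathcal{A_{\b\E}}$, treating separately the state sets, the final states, and the transition rules. Since $\b\E$ is linear, Proposition~\ref{prop f deriv u cf} gives the essential dictionary: whenever a partial-derivative term admits a non-empty preimage under a symbol $f_j$, that preimage is exactly the single tuple $(C_{f_j^1}(\b\E),\dots,C_{f_j^n}(\b\E))=\mathfrak{C}_{f_j}$ of $k$-C-continuations. So every transition of $\mathcal{A_{\b\E}}$ points into a state that is a $k$-C-continuation, and every transition of $\mathcal{\b C_{\E}}$ is labelled by a tuple of the same shape. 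This makes the two transition structures line up on the overlap of their state sets.

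First I would prove, by induction on the length of the word $w\in\Po{\E}{\E}_{\geq1}^*$, that $\partial_w(\b\E)\subseteq Q_{\b{\cal C}}$, i.e. every partial derivative of $\b\E$ is either $\b\E=C_{\varepsilon^1}(\b\E)$ or of the form $C_{f_j^k}(\b\E)$; the base case $w=\varepsilon$ is immediate, and the inductive step $w=uf_j$ uses $\partial_{uf_j}(\b\E)=\mathrm{SET}(f_j^{-1}(\partial_u(\b\E)))$ together with Proposition~\ref{prop f deriv u cf} to conclude that each element is some $C_{f_j^k}(\b\E)$. This shows $Q\subseteq Q_{\b{\cal C}}$ where $Q$ is the state set of $\mathcal{A_{\b\E}}$. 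Conversely, every coaccessible state of $\mathcal{\b C_{\E}}$ must, by definition of coaccessibility and the matching of transitions above, be reachable (going downward from the final state $C_{\varepsilon^1}(\b\E)$) via a chain of $f_j^{-1}$ steps, hence be a partial derivative of $\b\E$; this is again a straightforward induction, now on the witness of coaccessibility. The final state sets coincide trivially: both are $\{C_{\varepsilon^1}(\b\E)\}=\{\b\E\}$.

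For the transitions, on the common state set I would check both directions. Given $(\f,f_j,\G_1,\dots,\G_m)\in\Delta$ with $\f$ coaccessible, we have $(\G_1,\dots,\G_m)\in f_j^{-1}(\f)$ and $\f=C_x(\b\E)$ for some $x$, so $f_j^{-1}(C_x(\b\E))\neq\emptyset$, whence by Lemma~\ref{lem f cgk cf} the tuple equals $\mathfrak{C}_{f_j}$, giving exactly the rule $(C_x(\b\E),f_j,\mathfrak{C}_{f_j})\in\Delta_{\b{\cal C}}$; the constant rules $(\f,c)$ match verbatim since both automata use the condition $c\in\llbracket\f\rrbracket\cap\Sigma_0$. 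The reverse inclusion restricted to coaccessible states is symmetric, using that a coaccessible state of $\mathcal{\b C_{\E}}$ is a partial derivative and that $\mathfrak{C}_{f_j}\in f_j^{-1}(C_x(\b\E))$ whenever the latter is non-empty. Combining the state, final-state, and transition identities yields that the coaccessible part of $\mathcal{\b C_{\E}}$ equals $\mathcal{A_{\b\E}}$.

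The main obstacle I expect is the bookkeeping around coaccessibility in the $\mathcal{\b C_{\E}}\to\mathcal{A_{\b\E}}$ direction: unlike $\mathcal{A_{\b\E}}$, whose states are \emph{defined} to be partial derivatives, the automaton $\mathcal{\b C_{\E}}$ may contain $k$-C-continuations $C_{f_j^k}(\b\E)$ that are not partial derivatives of $\b\E$ at all (the remark before Definition~\ref{def1} stresses that every symbol now has a continuation, even symbols unreachable from the root), so one genuinely needs the coaccessibility hypothesis to prune these. The delicate point is to argue that a coaccessible state of $\mathcal{\b C_{\E}}$, witnessed by some downward chain of transitions ending in final states, forces the existence of a word $w$ with that state in $\partial_w(\b\E)$ — one reads off $w$ from the labels $g_i$ along the coaccessibility witness and invokes Proposition~\ref{prop f deriv u cf} at each step, but care is needed because coaccessibility is phrased in terms of a single coaccessible successor rather than a full path, so the induction must be set up on the recursive structure of the coaccessibility predicate rather than on a path length.
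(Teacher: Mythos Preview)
Your proposal is correct and follows essentially the same approach as the paper: both arguments start from the common final state $\b\E=C_{\varepsilon^1}(\b\E)$ and use Proposition~\ref{prop f deriv u cf} to show that, for any state $q$ already known to be coaccessible in both automata, the outgoing transitions coincide (since $f_j^{-1}(q)$ is either empty or the singleton $\{\mathfrak{C}_{f_j}\}$), which inductively forces the coaccessible parts to agree. The only difference is organizational: the paper runs a single simultaneous induction on coaccessibility that handles states and transitions together, whereas you separate the state-set inclusion $\partial_w(\b\E)\subseteq Q_{\b{\cal C}}$, the reverse inclusion for coaccessible states, and the transition matching into distinct steps---your version is more explicit about the bookkeeping you flag in your last paragraph, but the underlying argument is the same.
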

\begin{proof} 
  The expression $\b\E$ is the final state of the two automata. Let us suppose now that $q$ is a coaccessible state both in ${\cal \b C_{\E}}$ and ${\cal A_{\b\E}}$. Hence, from Definition~\ref{def aut c cont lin} and from Definition~\ref{def aut eq}: 
  
  \centerline{
    \begin{tabular}{l@{\ }l}
      & there exists a transition $(q,f,q_1,\ldots,q_n)$ in ${\cal A_{\b\E}}$\\
       $\Leftrightarrow$ & $(q_1,\ldots,q_n)\in f^{-1}(q)$  \\
       $\Leftrightarrow$ & $(q_1,\ldots,q_n)=(C_f^1(\b E),\ldots,C_f^n(\b E)) \in f^{-1}(q)$ (Proposition~\ref{prop f deriv u cf})\\
       $\Leftrightarrow$ & there exists a transition $(q,f,q_1,\ldots,q_n)$ in ${\cal \b C_{\E}}$.\\
    \end{tabular}
  }
  
  Hence, the states $q_1,\ldots,q_n$ are coaccessible from $q$ by $f$ in ${\cal \b C_{\E}}$ if and only if they are in ${\cal A_{\b\E}}$. Consequently, the 
  coaccessible part of ${\cal \b C_{\E}}$  is equal to the equation tree automaton ${\cal A_{\b\E}}$.
 \qed
\end{proof}

\begin{corollary}
 The automaton ${\cal \b C_{\E}}$ accepts $\llbracket \b\E\rrbracket$.
\end{corollary}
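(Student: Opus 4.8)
The plan is to derive the corollary directly from Lemma~\ref{lemme3} together with the fact, already noted in the preliminaries, that passing to the coaccessible part of a tree automaton does not change its language. First I would invoke Lemma~\ref{lemme3} to get that the coaccessible part of ${\cal \b C_{\E}}$ equals the equation tree automaton ${\cal A_{\b\E}}$ of the linear expression $\b\E$. Next I would apply the theorem recalled after Definition~\ref{def aut eq} (the correctness of the equation automaton, from~\cite{automate2}) to the linear expression $\b\E$, which yields ${\cal L}({\cal A_{\b\E}})=\llbracket \b\E\rrbracket$. Combining these two equalities gives that the coaccessible part of ${\cal \b C_{\E}}$ recognizes $\llbracket \b\E\rrbracket$.

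It then remains to observe that ${\cal \b C_{\E}}$ itself recognizes the same language as its coaccessible part. This is exactly the remark made in Section~\ref{sec prelim}: for any tree automaton $\mathcal{A}$, if $\mathcal{A}'$ is its coaccessible part then $\mathcal{L}(\mathcal{A})=\mathcal{L}(\mathcal{A}')$. Applying this to $\mathcal{A}={\cal \b C_{\E}}$ closes the argument. So the chain is
\[
{\cal L}({\cal \b C_{\E}}) = {\cal L}\bigl(\text{coaccessible part of } {\cal \b C_{\E}}\bigr) = {\cal L}({\cal A_{\b\E}}) = \llbracket \b\E\rrbracket.
\]

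I do not expect any real obstacle here: the corollary is a one-line consequence of the lemma and the two facts cited above. The only point that warrants a moment's care is making sure the equation-automaton correctness theorem is being applied to a \emph{linear} expression (since Definition~\ref{def aut c cont lin} and the surrounding development assume $\b\E$ linear), but the cited theorem holds for arbitrary regular expressions, so $\b\E$ is certainly covered. I would therefore keep the proof to two or three sentences, simply citing Lemma~\ref{lemme3}, the preservation of language under taking the coaccessible part, and the correctness of ${\cal A_{\b\E}}$.
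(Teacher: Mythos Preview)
Your proposal is correct and matches the paper's intended reasoning: the paper states this as an immediate corollary of Lemma~\ref{lemme3} without giving any proof, and your chain ${\cal L}({\cal \b C_{\E}}) = {\cal L}(\text{coaccessible part}) = {\cal L}({\cal A_{\b\E}}) = \llbracket \b\E\rrbracket$ is exactly the implicit argument. There is nothing to add.
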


The  \emph{C-Continuation tree automaton} ${\cal C_{\E}}$ associated with $\E$ is obtained by replacing each transition  $(C_{x}(\b\E),g_i,C_{g^1_i}(\b\E),\dots,C_{g^m_i}(\b\E))$ of the tree automaton ${\cal \b C_{\E}}$ by $(C_{x}(\b\E),h(g_i),C_{g^1_i}(\b\E),\dots,C_{g^m_i}(\b\E))$.

\begin{corollary}
  $h({\cal L}({\cal \b C_{\E}}))={\cal L}({\cal C_{\E}})=\llbracket\E\rrbracket$.  
\end{corollary}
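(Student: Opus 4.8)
The plan is to treat the two equalities separately. For $h({\cal L}({\cal \b C_{\E}}))={\cal L}({\cal C_{\E}})$, I would exploit that ${\cal C_{\E}}$ differs from ${\cal \b C_{\E}}$ only by the relabeling of each transition $(C_{x}(\b\E),g_i,C_{g^1_i}(\b\E),\dots,C_{g^m_i}(\b\E))$ into $(C_{x}(\b\E),h(g_i),C_{g^1_i}(\b\E),\dots,C_{g^m_i}(\b\E))$, the constant transitions $(C_{x}(\b\E),c)$ being left untouched, which is consistent since $h(c)=c$. In particular the two automata share the same state set and the same final state $C_{\varepsilon^1}(\b\E)$, and the relabeling neither merges nor splits the tuple of target states attached to a given source state. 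Hence, given an accepting run of ${\cal \b C_{\E}}$ on a tree $t\in T_{\Po{\E}{\E}}$, replacing every internal label of $t$ by its image under $h$ yields a tree $h(t)\in T_{\Sigma}$ on which the very same state assignment is an accepting run of ${\cal C_{\E}}$; this gives $h({\cal L}({\cal \b C_{\E}}))\subseteq{\cal L}({\cal C_{\E}})$. Conversely, from an accepting run of ${\cal C_{\E}}$ on $s\in T_{\Sigma}$, at each internal node labeled $f\in\Sigma_m$ with father state $q$ and son states $q_1,\dots,q_m$ the transition $(q,f,q_1,\dots,q_m)$ used must, by construction of ${\cal C_{\E}}$, originate from some transition $(q,g_i,q_1,\dots,q_m)$ of ${\cal \b C_{\E}}$ with $h(g_i)=f$; relabeling each such node by a choice of such $g_i$ (and leaving the constant leaves unchanged) produces a tree $t\in T_{\Po{\E}{\E}}$ with $h(t)=s$ accepted by ${\cal \b C_{\E}}$, whence ${\cal L}({\cal C_{\E}})\subseteq h({\cal L}({\cal \b C_{\E}}))$.

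It then remains to establish ${\cal L}({\cal C_{\E}})=\llbracket\E\rrbracket$. Combining the step above with the corollary ${\cal L}({\cal \b C_{\E}})=\llbracket\b\E\rrbracket$, this reduces to the identity $h(\llbracket\b\E\rrbracket)=\llbracket\E\rrbracket$. Since $h(\b\E)=\E$ by definition of the linearization and of the extension of $h$ to expressions, it suffices to prove $h(\llbracket\f\rrbracket)=\llbracket h(\f)\rrbracket$ for every regular expression $\f$ over $\Po{\E}{\E}$, which I would do by a routine structural induction on $\f$: the base case $\f=a\in\Sigma_0$ is immediate because $h(a)=a$; for $\f=g_j(\f_1,\dots,\f_n)$, $\f=\f_1+\f_2$, $\f=\f_1\cdot_c\f_2$ and $\f=\f_1^{*_c}$ one invokes, respectively, that the tree morphism $h$ commutes with $f(\cdot,\dots,\cdot)$, with union, with the $c$-product, and with the $c$-closure. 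The last two commutations hold precisely because $h$ fixes every constant, so a constant $c$ serving as a substitution point in a tree of $\llbracket\f_1\rrbracket$ is mapped to the same substitution point $c$ in its image, and therefore $h$ is compatible with the tree substitution $t\{c\leftarrow L\}$.

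The only genuinely delicate point is the converse inclusion in the first step: a symbol $f\in\Sigma_{\geq 1}$ may be the $h$-image of several positions of $\Po{\E}{\E}$, so the lifting of a run of ${\cal C_{\E}}$ to a run of ${\cal \b C_{\E}}$ is not canonical. What makes it go through is that we only need existence of some lifting, and that the transition relabeling preserves both the source state $q$ and the tuple $(q_1,\dots,q_m)$ of son states; hence any consistent node-by-node choice of preimage yields a valid run of ${\cal \b C_{\E}}$, and linearity of $\b\E$ is not even needed for this argument. Everything else is bookkeeping.
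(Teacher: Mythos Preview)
Your argument is correct and is precisely the natural unpacking of the corollary: the paper states it without proof, treating it as an immediate consequence of the previous corollary $\mathcal{L}(\overline{\mathcal{C}}_{\E})=\llbracket\overline{\E}\rrbracket$ together with the definition of $\mathcal{C}_{\E}$ as a relabeling of $\overline{\mathcal{C}}_{\E}$ via $h$. Your two-step decomposition (relabeling preserves languages up to $h$; $h$ commutes with the language semantics since it fixes constants) is exactly the implicit reasoning, and your care with the converse lifting---noting that only existence of a preimage transition is needed---is the right observation.
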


In what follows, for any two trees $s$ and $t$, we denote by $s\preccurlyeq t$ the relation "$s$ is a subtree of $t$". Let $k$ be an integer. We denote by $\rooot(s)$ the root of any tree $s$ and by $k\mbox{-}\mathrm{child(t)}$, for a tree $t=f(t_1,\dots,t_n)$, the $k^{th}$ child of $f$ in $t$ that is root of $t_k$ if it exists. 


  Let
  $1\leq k\leq m$ be two integers and $f_j$ be a symbol in $\Po{\E}{\E}_m$. The sets $\mathrm{First}(\E)$ is the subset of $\Po{\E}{\E}$ defined by $\mathrm{First}(\E)=\{\rooot(t)\in \Po{\E}{\E} \mid t\in \llbracket \b{\E} \rrbracket\}$. The set $\Follow(\E,f_j,k)$ is the subset of $\Po{\E}{\E}$ defined by $\Follow(\E,f_j,k)=\{g_i\in \Po{\E}{\E} \mid \exists t\in \llbracket \overline{\E} \rrbracket, \exists s\preccurlyeq t, \mathrm{root}(s)=f_j, k\mbox{-}\mathrm{child(s)}=g_i\}$.
  
\begin{proposition}[\cite{lata}]\label{prop tps lin pour follow}
  The computation of all the sets $(\Follow(\E,f_j,k))_{1\leq k\leq m, f\in\Po{\E}{\E}_m}$ can be done with an $O(|\E|)$ time and space complexity. 
\end{proposition}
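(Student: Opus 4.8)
The plan is to reduce the computation of all the sets $\Follow(\E,f_j,k)$ to a traversal of the syntax tree $T_{\b\E}$ of the linearized expression, exploiting the fact that $\b\E$ is linear (each non-constant position occurs exactly once). First I would introduce, for every subexpression $\f$ of $\b\E$, the set $\First(\f)=\{\rooot(t)\mid t\in\llbracket\f\rrbracket\}$ together with the boolean $\lambda(\f)$ that records whether $c\in\llbracket\f\rrbracket$, for each relevant constant $c$; these are the tree-expression analogues of the classical Glushkov $\First$ and nullability predicates. I would then give the standard inductive formulas: $\First(a)=\{a\}$; $\First(f_j(\f_1,\dots,\f_n))=\{f_j\}$; $\First(\f+\G)=\First(\f)\cup\First(\G)$; $\First(\f\cdot_c\G)=\First(\f)$ if $c\notin\llbracket\f\rrbracket$ and $\First(\f)\cup\First(\G)$ otherwise; $\First(\f^{*_c})=\First(\f)\cup\{c\}$ (and the symmetric recursion for the $\lambda$ predicate). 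The key observation is that because of linearity, the position $f_j$ occurs at a unique node of $T_{\b\E}$, so $\Follow(\E,f_j,k)$ is determined entirely by the context of that node, namely by the $k$-th argument subexpression $\f_k$ of the unique occurrence of $f_j(\f_1,\dots,\f_n)$ and by the way that occurrence sits inside the surrounding $\cdot_c$- and $*_c$-operators.

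Next I would make that observation precise by a bottom-up/top-down propagation on $T_{\b\E}$. For an occurrence of $f_j(\f_1,\dots,\f_n)$, one has $\Follow(\E,f_j,k)\supseteq\First(\f_k)$; moreover, if $\f_k$ can produce a tree rooted at a constant $d$ that is later substituted by the $d$-product or $d$-closure structure above $f_j$, then the corresponding $\First$ sets of the right operands of those $\cdot_d$ nodes, and the $\First$ set of the body of any enclosing $\cdot_d^{*}$, must be added as well. Concretely I would define, for each node $\nu$ of $T_{\b\E}$ and each constant $d$, a set $\Follow_d(\nu)$ of positions that can follow a $d$-leaf produced inside the subexpression at $\nu$, computed by the recursion: at a $\cdot_d$ node $\f\cdot_d\G$, $\Follow_d$ of the left child accumulates $\First(\G)$ (and the outer $\Follow_d$); at a ${}^{*_d}$ node, $\Follow_d$ of the body accumulates $\First$ of the body itself (and the outer $\Follow_d$); these are inherited unchanged through $+$ and through arguments of symbols $f_j$, with the shift that a $d$-leaf inside argument $\f_k$ of $f_j$ contributes to $\Follow(\E,f_j,k)$ rather than back up. Then $\Follow(\E,f_j,k)=\First(\f_k)\cup\bigcup_{d}\{\Follow_d\text{-contributions along the path above }f_j\}$, which one reads off in a single top-down sweep after the $\First$ sets have been computed bottom-up.

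For the complexity bound I would argue as follows. All the $\First(\f)$ sets and the $\lambda$ predicates are computed in one postorder traversal; the subtlety is that naively storing each $\First$ set costs more than $O(|\E|)$. I would avoid this exactly as in~\cite{lata}: rather than materializing the sets, represent $\First(\f)$ lazily by a pointer structure — a $\First$ set is either a singleton, or the union of the $\First$ sets of its children, or a child's $\First$ set verbatim — so that the bottom-up pass produces an $O(|\E|)$-size DAG of set expressions, and each $\Follow(\E,f_j,k)$ is likewise represented as an $O(1)$-size combination of previously computed set-nodes plus at most one new $\First$ node. Since there is one such set per pair $(f_j,k)$ and $\sum_{f_j}\mathrm{ar}(f_j)\le|\E|$, the whole data structure has size $O(|\E|)$ and is built in $O(|\E|)$ time. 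The main obstacle, and the only place real care is needed, is precisely this representation issue: proving that the shared-substructure encoding faithfully represents every $\Follow(\E,f_j,k)$ while keeping the total number of created nodes linear — the set-theoretic recursion itself is routine, but the amortized-size argument is where the $O(|\E|)$ (as opposed to $O(|\E|^2)$ or $O(|\Po{\E}{\E}|\cdot|\E|)$) bound is won. Since this construction and its analysis are carried out in detail in~\cite{lata}, I would cite that reference for the full argument and only recall the $\First$/$\Follow$ recursions here.
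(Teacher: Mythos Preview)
The paper gives no proof for this proposition at all: it is stated with the citation \cite{lata} and used as a black box. Your sketch is therefore not comparable against a proof in the paper, because there is none to compare to.

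That said, your plan is the standard one and matches what the cited reference does: compute $\First$ and the nullability predicates bottom-up on $T_{\b\E}$, then propagate the context information top-down so that each $\Follow(\E,f_j,k)$ is obtained from $\First(\f_k)$ plus the contributions of enclosing $\cdot_c$ and ${}^{*_c}$ operators, and represent all these sets by a shared DAG of union-nodes so that the total structure stays $O(|\E|)$. The inductive formulas you wrote for $\First$ and the top-down $\Follow_d$ propagation are exactly the ones the paper itself recalls later (inside the proof of Proposition~\ref{prop2}) from \cite{lata}. Your identification of the shared-representation step as the place where the linear bound is actually won is correct; without it one only gets $O(|\E|^2)$. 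So your proposal is sound, and since the paper itself defers entirely to \cite{lata} here, your final sentence---cite the reference and only recall the recursions---is in fact precisely what the paper does.
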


\begin{proposition}\label{prop2}
 Let
  $1\leq k\leq m$ be two integers and $f_j$ be a position in $\Po{\E}{\E}_m$. If $\Follow(\E,f_j,k)\neq \emptyset$ then $\Follow(\E,f_j,k)=\First(C_{f_j^k}(\overline{\E}))$. 
 \end{proposition}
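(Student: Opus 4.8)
The plan is to prove the identity $\Follow(\E,f_j,k)=\First(C_{f_j^k}(\overline{\E}))$ by structural induction on the linear regular expression $\E$, mirroring the inductive definition of the $k$-C-continuation given in Definition~\ref{def1}. Throughout I would work with the marked expression $\overline{\E}$, so that every symbol of $\Po{\E}{\E}_{\geq 1}$ occurs exactly once; this is exactly the hypothesis under which $C_{f_j^k}$ is defined, and it is what lets me localize the symbol $f_j$ to a unique subexpression at each step. The base observation is that $\First(\G)=\{\rooot(t)\mid t\in\llbracket\overline{\G}\rrbracket\}$ and that $\Follow$ records, over all trees $t\in\llbracket\overline{\E}\rrbracket$, the possible roots of the $k$-th child of the (unique) subtree with root $f_j$; so the whole proof is really about tracking how both sides behave under $+$, $\cdot_c$ and $*_c$, and under the leaf case $\E=g(\E_1,\dots,\E_m)$.

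First I would handle the leaf case $\E=g(\E_1,\dots,\E_m)$. If $f_j=g$, then $C_{f_j^k}(\overline{\E})=\overline{\E_k}$, and a subtree of $t\in\llbracket\overline{\E}\rrbracket$ rooted at $g$ is $t$ itself; its $k$-th child ranges exactly over the roots of trees in $\llbracket\overline{\E_k}\rrbracket$, i.e. over $\First(\overline{\E_k})=\First(C_{f_j^k}(\overline{\E}))$. If $f_j\neq g$, then $f_j\in\Sigma_{\overline{\E_l}}$ for a unique $l$, $C_{f_j^k}(\overline{\E})=C_{f_j^k}(\overline{\E_l})$, and any occurrence of $f_j$ inside $t$ lies inside the $l$-th argument subtree, so $\Follow(\E,f_j,k)=\Follow(\E_l,f_j,k)$, and the induction hypothesis closes the case. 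The sum case $\E=\E_1+\E_2$ is similar: $f_j$ lies in exactly one of $\overline{\E_1},\overline{\E_2}$, say $\overline{\E_1}$, and both $\Follow$ and $C_{f_j^k}$ restrict to that summand, because $\llbracket\overline{\E}\rrbracket=\llbracket\overline{\E_1}\rrbracket\cup\llbracket\overline{\E_2}\rrbracket$ and trees from $\llbracket\overline{\E_2}\rrbracket$ contain no occurrence of $f_j$ at all.

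The two product-like cases are where the real content lies, and I expect the $c$-closure case to be the main obstacle. For $\E=\E_1\cdot_c\E_2$ with $f_j\in\Sigma_{\overline{\E_1}}$: trees in $\llbracket\overline{\E}\rrbracket$ are obtained from trees $t_1\in\llbracket\overline{\E_1}\rrbracket$ by substituting $\llbracket\overline{\E_2}\rrbracket$ for the constant $c$. The subtree rooted at $f_j$ sits inside $t_1$, and its $k$-th child is either already a position of $\overline{\E_1}$ (if in $t_1$ that child is not the leaf $c$) or, when in $t_1$ that child is $c$, it becomes the root of a tree from $\llbracket\overline{\E_2}\rrbracket$ after substitution. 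Hence $\Follow(\E,f_j,k)$ equals $\Follow(\E_1,f_j,k)$ with every occurrence of $c$ replaced by $\First(\overline{\E_2})$; on the other side, $C_{f_j^k}(\overline{\E})=C_{f_j^k}(\overline{\E_1})\cdot_c\overline{\E_2}$, and one checks directly from the definition of $\First$ and the $c$-product that $\First(\G\cdot_c\h)$ equals $\First(\G)$ with $c$ replaced by $\First(\h)$ (when $c\in\First(\G)$), which combined with the induction hypothesis gives the claim; if $f_j\in\Sigma_{\overline{\E_2}}$ then $C_{f_j^k}(\overline{\E})=C_{f_j^k}(\overline{\E_2})$ and the $f_j$-subtree lies inside a substituted copy of some $t_2\in\llbracket\overline{\E_2}\rrbracket$ untouched by further substitution, so both sides reduce to the $\E_2$-case. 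Finally, for $\E=\E_1^{*_c}$ we have $C_{f_j^k}(\overline{\E})=C_{f_j^k}(\overline{\E_1})\cdot_c\overline{\E_1}^{*_c}$, and a tree in $\llbracket\overline{\E_1}^{*_c}\rrbracket$ is an iterated $c$-substitution of trees from $\llbracket\overline{\E_1}\rrbracket$; the $f_j$-subtree sits in one such copy of a $t_1\in\llbracket\overline{\E_1}\rrbracket$, its $k$-th child is either a position of $\overline{\E_1}$ or, if it was $c$ in $t_1$, the root of a further copy of a tree from $\llbracket\overline{\E_1}\rrbracket$ — hence a position in $\First(\overline{\E_1})$ — or the substitution stops there and the child stays $c$. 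So $\Follow(\E,f_j,k)$ is $\Follow(\E_1,f_j,k)$ with $c$ replaced by $\First(\overline{\E_1}^{*_c})=\First(\overline{\E_1})\cup\{c\}$ (the latter exactly when $c\in\First(\overline{\E_1})$ or the iteration uses the $0$-th power), which matches $\First(C_{f_j^k}(\overline{\E_1})\cdot_c\overline{\E_1}^{*_c})$ via the $\First$-of-$c$-product identity and the induction hypothesis. The delicate bookkeeping — and the step I would write most carefully — is making the claim "$\First(\G\cdot_c\h)$ is $\First(\G)$ with $c$ substituted by $\First(\h)$, and $\First(\G^{*_c})=\First(\G)$ with $c$ substituted by $\First(\G)\cup\{c\}$" precise and justifying it from the definitions of the $c$-product, the iterated $c$-product and the $c$-closure, since everything downstream is an application of the induction hypothesis plus this identity and the fact that the hypothesis $\Follow(\E,f_j,k)\neq\emptyset$ guarantees $f_j$ actually occurs in $\overline{\E}$ so the relevant branch of each case-split is the one taken.
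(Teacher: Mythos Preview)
Your proposal is correct and follows the same structural induction on $\b\E$ as the paper's proof, with the case split driven by Definition~\ref{def1}. The difference is in the supporting machinery: the paper first recalls from~\cite{lata} an explicit inductive formula for $\Follow$ on linear expressions (phrased via auxiliary functions $\lambda^{f}(\E,k)$ and $\lambda(\E)$ that record which constants can appear as the $k$-th child of $f$ in some tree of the language), and then checks case by case that this formula agrees with $\First(C_{f_j^k}(\b\E))$; you instead argue each case directly from the semantic definition of $\Follow$ and reduce everything to the small lemma describing $\First(\G\cdot_c\h)$ and $\First(\G^{*_c})$ in terms of $\First(\G)$ and $\First(\h)$. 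Your route is more self-contained (no appeal to~\cite{lata}) and conceptually cleaner, at the price of having to justify that $\First$-of-$c$-product identity carefully --- which is precisely the step you flagged. The paper's route hides that bookkeeping inside the cited inductive formula, but in exchange imports the $\lambda$-functions and a finer case distinction (e.g.\ splitting the $\cdot_c$ case on whether $c\in\lambda^{f_j}(F_1,k)$ rather than on whether $c\in\Follow(\E_1,f_j,k)$, which is the same condition expressed differently).
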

\begin{proof}

  Let $\E$ be a linear regular expression over a ranked alphabet $\Sigma$, $1\leq k\leq m$ be two integers and $f$ be a symbol in $\Sigma_m$. The set $\lambda^{f}(\E,k)$ is the subset of $\Sigma_0$ defined by $\lambda^{f}(\E,k)=\lbrace c\in\Sigma_0 \mid \exists t\in\llbracket \E\rrbracket, \exists f(t_1,\ldots,t_m)\preccurlyeq t, t_k=c\rbrace$. The set $\lambda(\E)$ is the subset of $\Sigma_0$ defined by $\lambda(\E)=\bigcup_{g\in\Sigma_m,1\leq k \leq m}\lambda^{g}(\E,k)$. 
  
  Let $\E$ be a regular expression over a ranked alphabet $\Sigma$, $1\leq k\leq m$ be two integers and $f_j$ be a symbol in $\Po{\E}{\E}_m$. In~\cite{lata}, it is shown, using alternative and equivalent formulae, that the set $\Follow(\E,f_j,k)$ is equal to $\Follow(\b\E,f_j,k)$, where $\Follow(F,f_j,k)$ is the subset of $\Po{\E}{\E}$ inductively defined for any linear regular expression $F$ as follows:
  
  \centerline{
    $\Follow(a,f,k)=\emptyset$,
  }
  
  \centerline{
    $\Follow(\E_1+E_2,f,k)=
      \left\{
        \begin{array}{l@{\ }l}
          \Follow(\E_1,f,k) & \text{ if } f\in \Sigma_{E_1},\\
          \Follow(\E_2,f,k) & \text{ if } f\in \Sigma_{E_2},\\ 
        \end{array}
      \right.$
  }
  
  \centerline{
    $\Follow(\E_1 \cdot_c \E_2,f,k)=
      \left\{
        \begin{array}{l@{\ }l}
          (\Follow(\E_1,f,k)\setminus\{c\}) \cup \First(\E_2) & \text{ if } c\in\lambda^{f}(\E_1,k),\\
          \Follow(\E_1,f,k) & \text{ if } f\in \Sigma_{E_1} \wedge c\notin\lambda^{f}(\E_1,k),\\
          \Follow(\E_2,f,k) & \text{ if } f\in \Sigma_{E_2} \wedge c\in\lambda(\E_1),\\
          \emptyset & \text{ otherwise,}
        \end{array}
      \right.$
  }
  
  \centerline{
    $\Follow(\E_1^{*_c},f,k)=
      \left\{
        \begin{array}{l@{\ }l}
          \Follow(\E_1,f,k) \cup \First(\E_1) & \text{ if } c\in\lambda(\E_1),\\
          \Follow(\E_1,f,k) & \text{ otherwise,}\\ 
        \end{array}
      \right.$
  }
  
  \centerline{
    $\Follow(g(\E_1,\ldots,\E_n),f,k)=
      \left\{
        \begin{array}{l@{\ }l}
          \First(\E_k) & \text{ if } f=g,\\
          \Follow(\E_l,f,k) & \text{ if } f\in \Sigma_{E_l}.\\ 
        \end{array}
      \right.$
  }
  
 Since by definition $\Follow(\E,f_j,k)=\Follow(\b\E,f_j,k)$, let us show by induction over $\b \E$ that if $\Follow(\E,f_j,k)\neq \emptyset$ then $\Follow(\E,f_j,k)=\First(C_{f_j^k}(\overline{\E}))$. Let us set $\b\E=F$.
 
Suppose that $F=f_j(F_1,\dots,F_m)$. Hence $\Follow(F,f_j,k)=\First(F_k)$. Moreover by definition $C_{f^k_j}(F)=F_k$. Then $\Follow(F,f_j,k)=\First(C_{f^k_j}(F))$. The property is true for the base case.

  Assuming that the property holds for the subexpressions 
  of $F$. 
  \begin{enumerate}
   \item Consider that $F=g_i(F_1,\dots,F_m)$ with $f_j\neq g_i$. Then by definition $\Follow(F,f_j,k)=\Follow(F_l,f_j,k)$ with $f_j\in\Sigma_{F_l}$. By induction hypothesis, $\Follow(F_l,f_j,k)=\First(C_{f^k_j}(F_l))$. Moreover, from Definition~\ref{def1}, $C_{f^k_j}(F)=C_{f^k_j}(F_l)$. Consequently, $\Follow(F,f_j,k)=\First(C_{f^k_j}(F))$.
\item Let us consider that $F=F_1+F_2$. Suppose that $f_j\in\Sigma_{F_i}$ with $i\in\{1,2\}$. Hence $\Follow(F_1+F_2,f_j,k)=\Follow(F_i,f_j,k)$. By induction hypothesis, $\Follow(F_i,f_j,k)=\First(C_{f^k_j}(F_i))$. From Definition~\ref{def1}, $C_{f^k_j}(F_1+F_2)=C_{f^k_j}(\E_i)$. Consequently, $\Follow(F_1+F_2,f_j,k)=\First(C_{f^k_j}(F_1+F_2))$.
  \item Consider that $F=F_1 \cdot_c F_2$. Three cases may occur.
   \begin{enumerate}
     \item Suppose that $c\in\lambda^{f_j}(F_1,k)$. Then $\Follow(F_1\cdot_cF_2,f_j,k)=(\Follow(F_1,f_j,k)\setminus\{c\}) \cup \First(F_2)$. By induction hypothesis, $\Follow(F_1\cdot_c F_2,f_j,k)=(\First(C_{f^k_j}(F_1)) \setminus\{c\}) \cup \First(F_2)$. Moreover, from Definition~\ref{def1}, $C_{f^k_j}(F)= C_{f^k_j}(F_1) \cdot_c F_2 $. Since $c\in\lambda^{f_j}(F_1,k)$, then by definition of $\lambda^{f_j}(F_1,k)$, $c \in \Follow(F_1,f_j,k)$. By induction hypothesis, $\Follow(F_1,f_j,k)=\First(C_{f^k_j}(F_1))$ and then $c \in \llbracket C_{f^k_j}(F_1)\rrbracket$. Consequently, by definition, $\First(C_{f^k_j}(F_1) \cdot_c F_2 )=(\First(C_{f^k_j}(F_1))\setminus\{c\}) \cup \First(F_2)$. Therefore, $\Follow(F_1\cdot_c F_2,f_j,k)=\First(C_{f^k_j}(F_1\cdot_c F_2))$. 
      \item Consider that $c\notin\lambda^{f_j}(F_1,k)$ and $f_j\in\Sigma_{F_1}$. In this case $\Follow(F_1\cdot_c F_2,f_j,k)=\Follow(F_1,f_j,k)$. By induction hypothesis, $\Follow(F_1,f_j,k)=\First(C_{f^k_j}(F_1))$. From Definition~\ref{def1}, $C_{f^k_j}(F_1\cdot_c F_2)=C_{f^k_j}(F_1)\cdot_c F_2$. Since $c\notin\lambda^{f_j}(F_1,k)$, then by definition $c \notin \Follow(F_1,f_j,k)$. By induction hypothesis, $\Follow(F_1,f_j,k)=\First(C_{f^k_j}(F_1))$ and then $c \notin \llbracket C_{f^k_j}(F_1)\rrbracket$. Consequently, $\First(C_{f^k_j}(F_1\cdot_c F_2))=\First(C_{f^k_j}(F_1))$. Then $\Follow(F_1\cdot_c F_2,f_j,k)=\First(C_{f^k_j}(F_1\cdot_c F_2))$.      
      \item Consider that $c\in\lambda(F_1)$ and $f_j\in\Sigma_{F_2}$. In this case $\Follow(F_1\cdot_c F_2,f_j,k)=\Follow(F_2,f_j,k)$. By induction hypothesis, $\Follow(F_2,f_j,k)=\First(C_{f^k_j}(F_2))$. From Definition~\ref{def1}, $C_{f^k_j}(F_1\cdot_c F_2)=C_{f^k_j}(F_2)$. Therefore, $\Follow(F_1\cdot_c F_2,f_j,k)=\First(C_{f^k_j}(F_1\cdot_c F_2))$. 
      \end{enumerate}
   \item Consider that $F={F_1}^{*_c}$. By Definition~\ref{def1}, $C_{f^k_j}(F_1^{*_c})=C_{f^k_j}(F_1)\cdot_c F_1$. Two cases may occur.
     \begin{enumerate}
       \item Suppose that $c\in \lambda(F_1)$. In this case, $\Follow({F_1}^{*_c},f_j,k)=\Follow(F_1,f_j,k) \cup \First({F_1}^{*_c})$. By induction hypothesis, $\Follow(F_1,f_j,k)=\First(C_{f^k_j}(F_1))$. By definition, $c\in \Follow(F_1,f_j,k)$. Since by induction $\Follow(F_1,f_j,k)=\First(C_{f^k_j}(F_1))$, $c\in \First(C_{f^k_j}(F_1))$ and then $c\in \llbracket C_{f^k_j}(F_1) \rrbracket$. Consequently, $\First(C_{f^k_j}(F_1)\cdot_c F_1)=\First(C_{f^k_j}(F_1)) \cup \First(F_1)$. Consequently $\Follow({F_1}^{*_c},f_j,k)=\Follow(F_1,f_j,k)\cup \First({F_1}^{*_c})=\First(C_{f^k_j}(F_1))\cup  \First(F_1)=\First(C_{f^k_j}(F_1^{*_c}))$.
       \item Suppose that $c\notin \lambda(F_1)$. In this case, $\Follow({F_1}^{*_c},f_j,k)=\Follow(F_1,f_j,k)$. By induction hypothesis, $\Follow(F_1,f_j,k)=\First(C_{f^k_j}(F_1))$. By definition, $c\notin \Follow(F_1,f_j,k)$. Since by induction $\Follow(F_1,f_j,k)=\First(C_{f^k_j}(F_1))$, $c\notin \First(C_{f^k_j}(F_1))$ and then $c\notin \llbracket C_{f^k_j}(F_1) \rrbracket$. Consequently, $\First(C_{f^k_j}(F_1)\cdot_c F_1)=\First(C_{f^k_j}(F_1))$. Consequently $\Follow({F_1}^{*_c},f_j,k)=\Follow(F_1,f_j,k)=\First(C_{f^k_j}(F_1))=\First(C_{f^k_j}(F_1^{*_c}))$.
     \end{enumerate}
  \end{enumerate}
  \qed
 \end{proof}

\begin{proposition}\label{prop equiv g-1 first}
 Let $1\leq k\leq m$ be two integers, $f_j$ be a symbol in $\Po{\E}{\E}_m$ and $g_i$ be a symbol in $\Po{\E}{\E}$. Then $g_i^{-1}(C_{f^k_j}(\b \E))\neq\emptyset$ $\Leftrightarrow$ $g_i \in \First(C_{f^k_j}(\b \E))$.
\end{proposition}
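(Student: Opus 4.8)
The plan is to derive the equivalence from a single syntactic fact about regular expressions over positions, proved by a routine structural induction. For any regular expression $F$ over $\Po{\E}{\E}$ write $\First(F)=\{\rooot(t)\mid t\in\llbracket F\rrbracket\}$; this extends the definition given for $\b\E$ and agrees with the use of $\First$ in Proposition~\ref{prop2}. Since $g_i^{-1}$ is defined only when $g_i$ has rank $m\geq1$, it suffices to establish, for every regular expression $F$ over $\Po{\E}{\E}$ and every position $g_i$ of rank $\geq1$, the equivalence $g_i^{-1}(F)\neq\emptyset \iff g_i\in\First(F)$; Proposition~\ref{prop equiv g-1 first} is then the instance $F=C_{f^k_j}(\b\E)$.

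I would prove this equivalence by induction on the structure of $F$, using throughout that every regular expression denotes a nonempty tree language (an easy induction), and that a constant $c$ lies in $\First(G)$ if and only if $c\in\llbracket G\rrbracket$ (because $\rooot(t)=c$ forces $t=c$). The base case $F=a\in\Sigma_0$ is immediate: $g_i^{-1}(a)=\emptyset$ since $g_i$ has rank $\geq1\neq0$, and $g_i\notin\{a\}=\First(a)$. If $F=g'(F_1,\dots,F_n)$ with $n\geq1$, then $g_i^{-1}(F)$ is $\{(F_1,\dots,F_n)\}$ when $g_i=g'$ and $\emptyset$ otherwise, while $\First(F)=\{g'\}$ (all trees of the nonempty language $\llbracket F\rrbracket$ have root $g'$); so both sides hold precisely when $g_i=g'$. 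The case $F=F_1+F_2$ follows at once from $g_i^{-1}(F)=g_i^{-1}(F_1)\cup g_i^{-1}(F_2)$, $\First(F)=\First(F_1)\cup\First(F_2)$, and the induction hypothesis. For $F=F_1\cdot_c F_2$ and $F=F_1^{*_c}$ I would first record the descriptions $\First(F_1\cdot_c F_2)=(\First(F_1)\setminus\{c\})\cup\First(F_2)$ if $c\in\llbracket F_1\rrbracket$ and $\First(F_1\cdot_c F_2)=\First(F_1)$ otherwise, and $\First(F_1^{*_c})=\First(F_1)\cup\{c\}$, all of which come from $\rooot(t\{c\leftarrow L\})=\rooot(t)$ for $t\neq c$. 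Matching these against the defining clauses of $g_i^{-1}(F_1\cdot_c F_2)$ and $g_i^{-1}(F_1^{*_c})$, using that $\mathcal{S}\cdot_c\f=\emptyset$ if and only if $\mathcal{S}=\emptyset$, and applying the induction hypothesis to $F_1$ and $F_2$, yields the equivalence; the constant $c$ is harmless throughout since $g_i$ has rank $\geq1$, so $g_i\neq c$.

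The only delicate point is the $\cdot_c$ case: the definition of $g_i^{-1}(F_1\cdot_c F_2)$ branches on whether $c\in\llbracket F_1\rrbracket$, and the description of $\First(F_1\cdot_c F_2)$ branches on the same condition, so one must notice that $c\in\llbracket F_1\rrbracket$ is equivalent to $c\in\First(F_1)$ in order to line the two case analyses up. Finally, it is worth noting that linearity of $F$ is never used; this matters because $C_{f^k_j}(\b\E)$ need not be linear — the clause $C_{f^k}(\f^{*_c})=C_{f^k}(\f)\cdot_c\f^{*_c}$ can duplicate symbols of rank $\geq1$ — so the equivalence must hold, and does hold, for arbitrary regular expressions over $\Po{\E}{\E}$.
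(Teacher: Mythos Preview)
Your proof is correct and follows essentially the same route as the paper: a structural induction on an arbitrary expression $F$ showing $g_i^{-1}(F)\neq\emptyset \iff g_i\in\First(F)$, then instantiating $F=C_{f_j^k}(\b\E)$. Your remark that linearity is never used is a genuine improvement over the paper's presentation, which opens its induction with ``Let $F$ be a linear expression'' even though, as you correctly point out, the clause $C_{f^k}(\f^{*_c})=C_{f^k}(\f)\cdot_c\f^{*_c}$ can produce a non-linear continuation; the paper's inductive argument in fact nowhere invokes linearity, so this is only a cosmetic slip there, but you are right to flag it.
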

\begin{proof}
  Let $F$ be a linear expression. Let us show by induction over the structure of $F$ that $g_i^{-1}(F)\neq\emptyset \Leftrightarrow g_i \in \First(F)$.
  \begin{enumerate}
    \item Consider that $F=g_i(F_1,\ldots,F_n)$. By definition, $\First(F)=\{g_i\}$. By definition, $g_i^{-1}(F)=\{(F_1,\ldots,F_n)\}$. Hence the two conditions are both satisfied.
    \item Consider that $F=f(F_1,\ldots,F_n)$ with $f\in\Sigma_F\setminus\{ g_i\}$. By definition, $\First(F)=\{f\}$. By definition, $g_i^{-1}(F)=\emptyset$. Hence the two conditions are both unsatisfied.
    \item If $F=F_1+F_2$, then according to~\cite{lata}, $\First(F)=\First(F_1)\cup \First(F_2)$. By definition, $g_i^{-1}(F)=g_i^{-1}(F_1)\cup g_i^{-1}(F_2)$. By induction hypothesis, for $l\in\{1,2\}$, $g_i^{-1}(F_l)\neq\emptyset \Leftrightarrow g_i \in \First(F_l)$. Consequently, $g_i^{-1}(F)\neq\emptyset$ $\Leftrightarrow$ $g_i \in \First(F_1)\vee g_i \in \First(F_2$ $)\Leftrightarrow$ $g_i \in \First(F_1)\cup\First(F_2)$.
    \item If $F=F_1\cdot_c F_2$, then according to~\cite{lata}, $\First(F)=\First(F_1)\cup (\First(F_2)\mid c\in c\in\llbracket F_1\rrbracket)$. By definition, $g_i^{-1}(F)=g_i^{-1}(F_1)\cdot_c F_2\cup (g_i^{-1}(F_2)\mid c\in\llbracket F_1\rrbracket)$. By induction hypothesis, for $l\in\{1,2\}$, $g_i^{-1}(F_l)\neq\emptyset \Leftrightarrow g_i \in \First(F_l)$. Consequently, $g_i^{-1}(F)\neq\emptyset$ $\Leftrightarrow$ $g_i \in \First(F)$.
    \item If $F=F_1^{*_c}$, then according to~\cite{lata}, $\First(F)=\First(F_1)$. By definition, $g_i^{-1}(F)=g_i^{-1}(F_1)\cdot_c F_1^{*_c}$. By induction hypothesis, $g_i^{-1}(F_1)\neq\emptyset \Leftrightarrow g_i \in \First(F_1)$. Consequently, $g_i^{-1}(F)\neq\emptyset$ $\Leftrightarrow$ $g_i \in \First(F)$.
  \end{enumerate}  
  As a direct consequence, the conditions of Proposition~\ref{prop equiv g-1 first} are equivalent.
  \qed
\end{proof}

 \begin{lemma}\label{lem}
 Let $1\leq k\leq m$ be two integers and $f_j$ be a position in $\Po{\E}{\E}_m$.
 If $\Follow(\E,f_j,k)= \emptyset$ then $C_{f_j^k}(\overline{\E})$ is not a coaccessible state in $\mathcal{C_{\E}}$.
 \end{lemma}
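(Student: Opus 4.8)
The plan is to show that whenever $\Follow(\E,f_j,k)=\emptyset$, the state $C_{f_j^k}(\b\E)$ has no outgoing transition whatsoever in $\mathcal{C_{\E}}$, and moreover cannot itself be a leaf-accepting configuration in a way that makes it coaccessible. First I would translate the hypothesis: by Proposition~\ref{prop2} the conclusion "$\Follow(\E,f_j,k)=\First(C_{f_j^k}(\b\E))$" holds precisely when $\Follow(\E,f_j,k)\neq\emptyset$, so the statement I really need is that $\Follow(\E,f_j,k)=\emptyset$ forces $\First(C_{f_j^k}(\b\E))=\emptyset$. I would establish this directly by induction over the structure of $\b\E$, reusing exactly the inductive formulae for $\Follow$ recalled inside the proof of Proposition~\ref{prop2} together with the standard inductive formulae for $\First$ from~\cite{lata}: in each case (sum, $c$-product, $c$-closure, and the $g(\E_1,\dots,\E_n)$ case), whenever the relevant $\Follow$ set is empty one checks that the corresponding $\First$ set built from the continuation is empty as well. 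In fact the cleanest route is to prove the stronger biconditional $\Follow(\E,f_j,k)=\emptyset \iff \First(C_{f_j^k}(\b\E))=\emptyset$, which subsumes Proposition~\ref{prop2} and is symmetric under the case analysis.

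Next I would use Proposition~\ref{prop equiv g-1 first}: since $\First(C_{f_j^k}(\b\E))=\emptyset$, we get $g_i^{-1}(C_{f_j^k}(\b\E))=\emptyset$ for every position $g_i\in\Po{\E}{\E}_{\geq 1}$. Consulting Definition~\ref{def aut c cont lin}, this means there is no transition of the first kind $(C_{f_j^k}(\b\E),g_i,\mathfrak{C}_{g_i})$ leaving the state $C_{f_j^k}(\b\E)$ in $\b{\mathcal C}_{\E}$, hence none in $\mathcal{C_{\E}}$ either (the $\mathcal{C_{\E}}$ transitions are just the $\b{\mathcal C}_{\E}$ ones with relabelled symbols, so the underlying transition graph is identical). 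It remains to rule out transitions of the second kind, i.e. that $C_{f_j^k}(\b\E)$ be coaccessible by being a state $q$ with $c\in\llbracket q\rrbracket\cap\Sigma_0$; but a constant $c$ lying in $\llbracket C_{f_j^k}(\b\E)\rrbracket$ would mean $c$ is the root of some tree in that language, and combined with the empty-$\First$ analysis this case is handled by the same structural induction — more directly, a constant transition $(q,c)$ only makes $q$ coaccessible if $q$ is itself reachable as a child of some other coaccessible state, which is a matter of incoming transitions, whereas coaccessibility (as defined in Section~\ref{sec prelim}) requires $q\in Q_T$ or $q$ to appear among the children of a coaccessible state via some transition. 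Since $C_{f_j^k}(\b\E)\neq\b\E$ is not final, and since I have shown it has no outgoing transition through which it could "pull up" a final state, it cannot be coaccessible.

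I should be careful about one subtlety here: the definition of coaccessible in the paper is stated in the "bottom-up" direction ($q$ is coaccessible if it is final, or if it is one of the children in a transition whose head is coaccessible). So "no outgoing transition from $q$" is not literally the negation of "$q$ is coaccessible"; rather, $q$ is coaccessible iff there is a path from $q$ (following transitions as a child) up to a final state. The argument I need is: any such path must start with a transition in which $q$ occurs among the arguments — equivalently, in which the head $q'$ satisfies $q\in f^{-1}(q')$ via the continuation structure; but then to continue the path $q'$ must have its own outgoing step, and so on up to $\b\E$. I would phrase the final contradiction as: if $C_{f_j^k}(\b\E)$ were coaccessible, then by Lemma~\ref{lemme3} it would be a state of $\mathcal{A}_{\b\E}$, i.e. of $\partial_{\Sigma_{\geq 1}^*}(\b\E)$, reached by some word $w$, and the continuation would then admit a nonempty derivative or be final — but $\First(C_{f_j^k}(\b\E))=\emptyset$ together with $C_{f_j^k}(\b\E)\neq\b\E$ forbids both. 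The main obstacle, then, is purely bookkeeping: getting the $\First$-emptiness induction to line up case-by-case with the $\Follow$-emptiness induction (the $c$-product case, where $\First$ may or may not absorb $\First(F_2)$ depending on whether $c\in\llbracket F_1\rrbracket$, needs the same $\lambda^{f_j}$-based distinction used in Proposition~\ref{prop2}), and correctly connecting "no outgoing transitions" to the paper's bottom-up notion of coaccessibility via Lemma~\ref{lemme3}.
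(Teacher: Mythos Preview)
Your proposal has a genuine gap. The central claim you set out to prove by structural induction, namely
\[
\Follow(\E,f_j,k)=\emptyset\ \Longrightarrow\ \First(C_{f_j^k}(\b\E))=\emptyset,
\]
is false. Take $\E=a\cdot_b g(c)$, so $\b\E=a\cdot_b g_1(c)$. Then $C_{g_1^1}(\b\E)=c$ and $\First(c)=\{c\}\neq\emptyset$, yet $\Follow(\E,g_1,1)=\emptyset$ because in the $\cdot_b$ case with $g_1\in\Sigma_{F_2}$ one needs $b\in\lambda(a)$, and $\lambda(a)=\emptyset$. More generally, every regular tree expression in this paper denotes a nonempty language, so $\First$ of a continuation is never empty; your ``stronger biconditional'' cannot hold. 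Proposition~\ref{prop2} is a one-sided statement precisely because of this phenomenon.

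Even granting a weaker version (say $\First(C_{f_j^k}(\b\E))\cap\Po{\E}{\E}_{\geq 1}=\emptyset$), your argument still does not conclude. You deduce that $C_{f_j^k}(\b\E)$ has no transitions of the form $(C_{f_j^k}(\b\E),g_i,\mathfrak{C}_{g_i})$, i.e.\ no top-down \emph{outgoing} edges. But coaccessibility, as defined in Section~\ref{sec prelim}, asks whether $C_{f_j^k}(\b\E)$ occurs as a \emph{child} in a transition whose \emph{head} is coaccessible; that is governed by top-down \emph{incoming} edges, not outgoing ones. Your last paragraph acknowledges the mismatch and tries to close it via Lemma~\ref{lemme3}, but membership in $\partial_{\Sigma_{\geq 1}^*}(\b\E)$ does not force a state to have a nonempty derivative or to be final (a partial derivative can perfectly well be a ``sink'' from the top-down viewpoint), so the contradiction you announce does not materialise.

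The paper argues the contrapositive along a different axis: it first shows every state $q$ satisfies $q\in\Delta^*(s)$ for some tree $s$ (since $\llbracket q\rrbracket\neq\emptyset$), then shows that if $q$ is coaccessible one can extend any such $s$ to an accepted tree $t\in\mathcal{L}(\b{\mathcal C}_{\E})$ with $s\preccurlyeq t$. Applying this to $q=C_{f_j^k}(\b\E)$ and using that $q$ is reached as the $k$-th child of an $f_j$-transition, one finds a subtree of $t$ rooted at $f_j$ whose $k$-th child is $\mathrm{root}(s)$, which by definition lies in $\Follow(\b\E,f_j,k)=\Follow(\E,f_j,k)$. The crux is thus an \emph{accepted-tree construction}, not an analysis of outgoing transitions from $q$.
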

\begin{proof}
Let us first show that for any state $q=C_{f_j^k}(\b{\E})$, there exists a tree $t$ such that $q\in \Delta^*(t)$, where $\Delta$ is the transition function of $\mathcal{\b C_{\E}}$ (proposition \textbf{P} in the following). By definition, $\llbracket q \rrbracket$ is not empty. If there exists a constant $c\in\llbracket q \rrbracket$, then by construction $q\in\Delta^*(c)$. 
If $t=g_i(t_1,\ldots,t_n) \in \llbracket q \rrbracket$, then by definition $g_i \in \First(q)$. According to Proposition~\ref{prop equiv g-1 first}, $g_i^{-1}(C_{f^k_j}(\b \E))\neq\emptyset$. Furthermore, according to Lemma~\ref{lem f cgk cf}, $g_i^{-1}(C_{f^k_j}(\b \E))=\{(C_{g_i^1}(\b \E),\ldots,C_{g_i^n}(\b \E))\}$. Hence, the states $q_1=C_{g_i^1}(\b \E)$, $\ldots$, $q_n=C_{g_i^n}(\b \E)$ are coaccessible from $q$. By induction hypothesis, there exists a tree $t'_l$ in $\llbracket q_l \rrbracket$ such that $q_l\in \Delta^*(t'_l)$. As a direct consequence, $q\in\Delta^*(g_i(t'_1,\ldots,t'_n))$.

Let us show that if $q$ is coaccessible, then there exists a tree $t$ in $\mathcal{L}(\mathcal{\b C_{\E}})$ for any tree $s$ satisfying $q\in \Delta^*(s)$ such that $s\preccurlyeq t$ (proposition \textbf{P'} in the following). 
  If $q=C_{\varepsilon}^1(\b\E)$, any tree $s$ such that $s\in \Delta^*(t)$ is accepted since $q$ is final. Setting $t=s$, property holds. 
  Otherwise, $q$ is coaccessible from a state $p$. By construction, there exists a transition $(p,f_j,(q_1,\ldots,q_m))$ with $q_k=q$. By induction hypothesis, there exists a tree $t$ in $\mathcal{L}(\mathcal{\b C_{\E}})$ for any tree $s'$ satisfying $p\in \Delta^*(s')$ such that $s'\preccurlyeq t$. Since any tree $s$ satisfying $q\in \Delta^*(s)$, is a subtree of a tree $s'$ satisfying $p\in \Delta^*(s')$ which root is $f_j$, there exists a tree $t$ in $\mathcal{L}(\mathcal{\b C_{\E}})$ for any tree $s$ satisfying $q\in \Delta^*(s)$ such that $s\preccurlyeq t$.

Suppose that $q=C_{f_j^k}(\b{\E})$ is a coaccessible state in $\mathcal{\b C_{\E}}$. According to \textbf{(P')}, there exists a tree $t$ in $\mathcal{L}(\mathcal{\b C_{\E}})$ for any tree $s$ satisfying $q\in \Delta^*(s)$ such that $s\preccurlyeq t$. By construction, since $C_{f_j^k}(\b{\E})$ is coaccessible by the symbol $f_j$, there exists a tree $s'\preccurlyeq t$ such that $\mathrm{root}(s')=f_j$ and $\mathrm{k-child}(s')=\mathrm{root}(s)$. By definition $\mathrm{root}(s)\in \Follow(\b\E,f_j,k)$.
As a direct consequence, if $C_{f_j^k}(\b{\E})$ is a coaccessible state in $\mathcal{C_{\E}}$, it is in $\mathcal{\b C_{\E}}$. As previously shown, this implies that $\Follow(\b\E,f_j,k)\neq\emptyset$. As a conclusion, by definition, $\Follow(\b\E,f_j,k)=\Follow(\E,f_j,k)\neq\emptyset$.
  \qed
  
\end{proof}


\subsection{From $k$-C-Continuation Automaton to Equation Automaton}

The equation automaton is a quotient of the C-Continuation one
w.r.t. the equivalence relation denoted by $\sim_e$ over the set of states of ${\cal \b C_{\E}}$ defined for any two states $q_1=C_{f^k_j}(\E)$ and $q_2=C_{g^p_i}(\E)$ by $q_1 \sim_e q_2 \Leftrightarrow h(q_1)=h(q_2)$.

\begin{proposition}\label{prop coacc part quot eq}
The coaccessible part of the finite tree automaton ${\cal C_{\E}}\diagup_{\sim_e}$ is isomorphic to the equation tree automaton ${\cal A_{\E}}$.
\end{proposition}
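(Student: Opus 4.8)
The plan is to chain together the results already established for $\b\E$ and transfer them through the projection $h$. First I would recall from Lemma~\ref{lemme3} that the coaccessible part of ${\cal \b C_{\E}}$ is exactly ${\cal A_{\b\E}}$, and from Proposition~\ref{prop2} together with Proposition~\ref{prop equiv g-1 first} and Lemma~\ref{lem} the characterisation of which continuation states $C_{f_j^k}(\b\E)$ are coaccessible, namely those with $\Follow(\E,f_j,k)\neq\emptyset$. Since ${\cal C_{\E}}$ is obtained from ${\cal \b C_{\E}}$ only by relabelling each transition symbol $g_i$ by $h(g_i)$ (the state set is unchanged), a state of ${\cal C_{\E}}$ is coaccessible if and only if it is coaccessible in ${\cal \b C_{\E}}$; hence the coaccessible part of ${\cal C_{\E}}$ has the same state set as ${\cal A_{\b\E}}$ and transitions obtained by applying $h$ to the labels of ${\cal A_{\b\E}}$.

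Next I would analyse the quotient. By definition of $\sim_e$, two states $C_{f^k_j}(\b\E)$ and $C_{g^p_i}(\b\E)$ are identified exactly when $h(C_{f^k_j}(\b\E))=h(C_{g^p_i}(\b\E))$; but $h$ applied to a continuation of $\b\E$ yields precisely the corresponding partial derivative expression of $\E$ in the sense of Definition~\ref{def aut eq} (this is where I would invoke the inductive formulae of Definition~\ref{def1} against those defining $\partial$ and $f^{-1}$ on $\E$, i.e. Proposition~\ref{prop f deriv u cf} read "downstairs"). So the map sending the class $[\,C_{f^k_j}(\b\E)\,]_{\sim_e}$ to $h(C_{f^k_j}(\b\E))$ is a well-defined bijection between $Q_{{\cal C_{\E}}/\sim_e}$ restricted to coaccessible states and $Q=\partial_{\Sigma^*_{\geq1}}(\E)$, carrying the unique final state $[\b\E]$ to $\E$. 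Then I would check that this bijection is a transition isomorphism: a transition $([q],f,[q_1],\dots,[q_m])$ exists in ${\cal C_{\E}}/\sim_e$ iff some representative transition $(C_x(\b\E),f_j,\mathfrak{C}_{f_j})$ exists in ${\cal \b C_{\E}}$ with $h(f_j)=f$, which by Lemma~\ref{lemme3} and Proposition~\ref{prop f deriv u cf} happens iff $(h(q_1),\dots,h(q_m))\in f^{-1}(h(q))$, i.e. iff $(h(q),f,h(q_1),\dots,h(q_m))\in\Delta$ of ${\cal A_{\E}}$; the constant transitions match because $c\in\llbracket C_x(\b\E)\rrbracket$ iff $c\in\llbracket h(C_x(\b\E))\rrbracket$ for $c\in\Sigma_0$ (the $h$-image of a language agrees with a regular expression's language on constants, as $h$ fixes $\Sigma_0$).

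The main obstacle I anticipate is purely bookkeeping rather than conceptual: one must be careful that quotienting and taking the coaccessible part commute in the right way — it is cleaner to first take the coaccessible part of ${\cal C_{\E}}$ (which equals $h({\cal A_{\b\E}})$ by Lemma~\ref{lemme3} and the coaccessibility criterion of Lemma~\ref{lem}), and only then quotient, showing that on this automaton $\sim_e$ collapses exactly the states with equal $h$-image and that no new coaccessible states are created. A second subtle point is to verify that every state of ${\cal A_{\E}}$ is hit, i.e. surjectivity of $[q]\mapsto h(q)$ onto $\partial_{\Sigma^*_{\geq1}}(\E)$; this follows from Proposition~\ref{prop f deriv u cf} (every nonempty $f^{-1}(\partial_u(\E))$ is a tuple of continuations of $\E$) together with the fact that $h$ maps continuations of $\b\E$ onto the corresponding derivatives, so every reachable derivative is the $h$-image of some coaccessible continuation. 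Assembling these observations gives the claimed isomorphism.
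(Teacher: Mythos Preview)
Your overall architecture matches the paper's: the same map $[C_{f_j^k}(\b\E)]\mapsto h(C_{f_j^k}(\b\E))$, the same reliance on Lemma~\ref{lemme3} to identify the coaccessible part of $\b{\cal C}_\E$ with $\cal A_{\b\E}$, and the same case-by-case check of final state, symbol transitions, and constant transitions.

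There is, however, a genuine gap at the point where you pass from $\b\E$ to $\E$. You write that the correspondence between $f_j^{-1}(q)$ upstairs and $f^{-1}(h(q))$ downstairs follows from ``Proposition~\ref{prop f deriv u cf} read downstairs'', but that proposition is stated and proved only for \emph{linear} expressions; for non-linear $\E$ the set $f^{-1}(\partial_u(\E))$ is in general not a singleton, so the statement simply fails to hold downstairs. What is actually needed --- and what the paper imports from \cite{automate2} as Theorem~\ref{theo aut eq} and Proposition~\ref{prop aut eq} --- is the identity
\[
f^{-1}(\f)=\bigcup_{f_j\in h^{-1}(f)} h\bigl(f_j^{-1}(\b\f)\bigr)
\quad\text{and}\quad
\partial_u(\E)=\bigcup_{\b u\in h^{-1}(u)} h\bigl(\partial_{\b u}(\b\E)\bigr).
\]
This is the content that makes both directions of your transition ``iff'' go through, and it is also exactly what gives surjectivity of $\phi$ onto $\partial_{\Sigma^*_{\ge 1}}(\E)$. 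Neither is a consequence of Proposition~\ref{prop f deriv u cf} as stated; you must either cite these results or reprove them (an induction on the structure of $\E$ tracking how $h$ commutes with each constructor of $f^{-1}$).

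A smaller remark: your discussion of the order ``coaccessible part then quotient'' versus ``quotient then coaccessible part'' is fine, but note that the paper works with the latter, and the two agree here because $\sim_e$ merges only states with the same outgoing-transition profile up to relabelling, so coaccessibility is class-invariant. Making that explicit would tighten the bookkeeping you flagged.
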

\begin{proof}

Let $E$ be a regular expression over an alphabet $\Sigma$. We define the inverse function of $h$ denoted by $h^{-1}: \Sigma\rightarrow 2^{\Po{\E}{\E}}$ such that for any symbol $x$ in $\Sigma$, $h^{-1}(x)=\{y\in\Po{\E}{\E}\mid h(y)=x\}$. 
 
\begin{theorem}[\cite{automate2}]\label{theo aut eq}
  Let $\E$ be a regular expression over an alphabet $\Sigma$. Then for every $u\in \Sigma^*_{\geq 1}$, $\partial_u(\E)=\bigcup_{\b u\in h^{-1}(u)}h(\partial_{\b u}(\b\E))$.
\end{theorem}

\begin{proposition}[\cite{automate2}]\label{prop aut eq}
  Let $\E$ be a regular expression over a ranked alphabet $\Sigma$. Then we have for every $f\in\Sigma_{\geq 1}$,
  
  \centerline{
    $f^{-1}(\E)=\bigcup_{f_j\in h^{-1}(f)}h({f_j}^{-1}(\b\E)) \mbox{ and }{\partial_{f}}(\E)=\displaystyle\bigcup_{f_j\in h^{-1}(f)}h({\partial_{f_j}}(\b\E))$ 
  }
\end{proposition}

  Let us denote by $\mathfrak{C}$ the coaccessible part of the finite tree automaton ${\cal C_{\E}}\diagup_{\sim_e}$.
  
  Let $Q$ be the set of states of ${\cal A_{\E}}$ and $Q'$ be the set of states of $\mathfrak{C}$. The isomorphism of the sets of states can be shown by the function $\phi: Q'\rightarrow Q: [C_{f^k_j}(\E)]\mapsto h(C_{f^k_j}(\E))$. Indeed, according to Lemma~\ref{lemme3}, $C_{f^k_j}(\E) \in \partial_{\b u}(\b\E)$ for some $\b u\in{\Po{\E}{\E}^*}_{\geq 1}$. Using Theorem~\ref{theo aut eq}, $h( C_{f^k_j}(\E))=h(\partial_{\b u}(\b\E))\in \partial_{{\Sigma^*}_{\geq 1}}(\E)=Q$. Injectivity of $\phi$ can be shown  directly from the definition of the equivalence relation $\sim_e$. For surjectivity, it is deduced from the Theorem~\ref{theo aut eq}.
  
  By definition, $\phi([C_{{\varepsilon}^1}(\E)])=C_{{\varepsilon}^1}(\E)=\E$. Hence the image of the final state of $\mathfrak{C}$ is the final state of ${\cal A_{\E}}$.
  
  Let us show that the sets of transitions are also isomorphic.
  
  Let $([C_{{f_j}^k}(\E)],g,[C_{{g^1_i}}(\E)]\ldots,[C_{{g^m_i}}(\E)])$ be a transition in $\mathfrak{C}$. Equivalently by construction, there exists a symbol $g_i$ such that $(C_{{f_j}^k}(\E),g_i,C_{{g^1_i}}(\E)\ldots,C_{{g^m_i}}(\E))$ is a transition in the accessible part of the automaton $\b{\mathcal{C}}_{\E}$. As the coaccessible part of $\b{\mathcal{C}}_{\E}$ and $\mathcal{A_{\b\E}}$ are equal (by Lemma~\ref{lemme3}), the transition $(\b\f,g_i,\b{\h}_1,\ldots,\b{\h}_m)$ is in the automaton $\mathcal{A_{\b\E}}$ with $\b\f=C_{{f_j}^k}(\E)\mbox{ and }\b{\h}_l=C_{{g^l_i}}(\E)$ for $1\leq l\leq m$; consequently $(\b{\h}_1,\ldots,\b{\h}_m)\in {g_i}^{-1}(\b\f)$.  From Proposition~\ref{prop aut eq}, it is equivalent to $(\h_1\ldots,\h_m)\in g^{-1}(\f)$. Thus $(\f,g,\h_1\ldots,\h_m)=(h(C_{{f_j}^k}(\E)),g,h(C_{{g^1_i}}(\E)),\ldots,h(C_{{g^m_i}}(\E)))$ is a transition in the automaton $\mathcal{A_{\E}}$. 
  
  Since only equivalences are stated, $([C_{{f_j}^k}(\E)],g,[C_{{g^1_i}}(\E)]\ldots,[C_{{g^m_i}}(\E)])$ is a transition in $\mathfrak{C}$ if and only if $(\phi([C_{{f_j}^k}(\E)]),g,\phi([C_{{g^1_i}}(\E)]),\ldots,\phi([C_{{g^m_i}}(\E)]))$ is a transition in $\mathcal{A_{\E}}$. 
   
  Finally, for $c\in \Sigma_0$, $([C_{{f_j}^k}(\E)],c)$ is a transition in $\mathfrak{C}$ if and only if $c\in\llbracket h(C_{{f_j}^k}(\E))\rrbracket$. Furthermore, it holds by construction that $(\phi([C_{{f_j}^k}(\E)]),c)$ is a transition in $\mathcal{A_{\E}}$ if and only if $c\in\llbracket \phi([C_{{f_j}^k}(\E)]) \rrbracket$. Consequently, $([C_{{f_j}^k}(\E)],c)$ is a transition in $\mathfrak{C}$ if and only if $(\phi([C_{{f_j}^k}(\E)]),c)$ is a transition in $\mathcal{A_{\E}}$.
  
  As a conclusion, $\phi$ is an isomorphism between $\mathfrak{C}$ and $\mathcal{A_{\E}}$.
 \qed
\end{proof}

\section{Construction of the equation tree automaton ${\cal A_{\E}}$}\label{sec:algo}

In~\cite{automate2}, the computation of the $k$-C-Continuations requires a preprocessing step which is the identification of subexpression of $\E$ in $O(|\E|^2 )$ time and space complexity. We propose an algorithm for the computation of the set of states with an $O(|\E|)$ time and space complexity.

\subsection{Computation of the set of states $Q_{\b{\cal C}}\diagup_{\sim_e}$}

The main idea is to efficiently compute the quotient $\b C_{\E}\diagup_{\sim_e}$ by converting the syntax tree into a finite acyclic deterministic word automaton. 

Let $T_{\E}$ be the syntax tree associated with 
$\E$.
The set of nodes of $T_{\E}$ is written as $\mathrm{Nodes}(\E)$. For a node $\nu$ in $\mathrm{Nodes}(\E)$, $\mathrm{sym}(\nu)$, $\mathrm{father}(\nu)$, $\mathrm{son}(\nu)$, $\mathrm{right}(\nu)$ and $\mathrm{left}(\nu)$ denote respectively the symbol, the father, the son, the right son and the left son of the node $\nu$ if they exist. We denote by $\E_{\nu}$ the subexpression rooted at $\nu$; In this case we write $\nu_{\E}$ to denote the node associated to $\E_{\nu}$. Let $\gamma:~\mathrm{Nodes}(\E)\cup\{\bot\}\rightarrow ~\mathrm{Nodes}(\E)\cup\{\bot\}$ be the function defined by: 

\centerline{
    $\gamma(\nu)= \left\{
    \begin{array}{l@{\ }l}
    \mathrm{father}(\nu) & \mbox{ if } \mathrm{sym}(\mathrm{father}(\nu))=^{*_c} \mbox{ and }\nu\neq \nu_{\E}\\
   \mathrm{right}(\mathrm{father}(\nu)) & \mbox{ if } \mathrm{sym}(\mathrm{father}(\nu))=\cdot_c\\
    \bot & \mbox{ otherwise }\\
    \end{array}\right.$
}

\noindent where $\bot$ is an artificial node such that $\gamma(\bot)=\bot$. The ZPC-Structure is the syntax tree equipped with $\gamma(\nu)$  
links.
We extend the relation $\preccurlyeq$ to the set of nodes of $T_{\E}$: For two nodes $\mu$ and $\nu$ we write $\nu\preccurlyeq\mu\Leftrightarrow T_{\E_{\nu}}\preccurlyeq T_{\E_{\mu}}$. We define the set $\Gamma_{\nu}(\E)=\{\mu\in\mathrm{Nodes}(\E)\mid \nu \preccurlyeq \mu \land \gamma(\mu)\neq \bot\}$ which is totally ordered by $\preccurlyeq$.

\begin{proposition}\label{b}
Let $\E$ be linear, $1\leq k\leq n$ be two integers and $f$ be in $\Sigma_{\E}\cap\Sigma_n$. Then $C_{f^k}(\E)= ((((\E_{\nu_0}\cdot_{op(\nu_1)} \E_{\gamma(\nu_1)})\cdot_{op(\nu_2)} \E_{\gamma(\nu_2)})
\dots \cdot_{op(\nu_{m})} \E_{\gamma(\nu_m)})$ where $\nu_f$ is the node of $T_{\E}$ labelled by $f$, $\nu_0$ is the $k\mbox{-}\mathrm{child}(\nu_{f})$, $\Gamma_{\nu_{f}}(\E)=\{\nu_1, \dots,\nu_m \}$ and for $1\leq i\leq m,~op(\nu_i)=c$ such that $\mathrm{sym}(\mathrm{father}(\nu_i)) \in \{\cdot_c,{*_c}\}$.
\end{proposition}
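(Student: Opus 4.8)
The plan is to prove the claimed closed form for $C_{f^k}(\E)$ by induction on the structure of the linear expression $\E$, mirroring the recursive definition of $C_{f^k}$ in Definition~\ref{def1} and tracking how the ancestor path of $\nu_f$ in the syntax tree $T_{\E}$ assembles the sequence of $\cdot_c$-products. The key observation is that each step of the recursive unfolding of $C_{f^k}(\E)$ corresponds to moving one node up along the path from $\nu_f$ (more precisely from $\nu_0 = k\text{-}\mathrm{child}(\nu_f)$) to the root, and that a node $\mu$ on that path contributes a factor $\cdot_{op(\mu)} \E_{\gamma(\mu)}$ to the product exactly when $\gamma(\mu)\neq\bot$, i.e. when $\mu\in\Gamma_{\nu_f}(\E)$: this happens precisely when $\mathrm{father}(\mu)$ is labelled $\cdot_c$ (contributing the right sibling $\E_{\mathrm{right}(\mathrm{father}(\mu))}$) or when $\mathrm{father}(\mu)$ is labelled $*_c$ (contributing $\E_{\mathrm{father}(\mu)} = \E_{\mathrm{father}(\mu)}$, the starred subexpression itself). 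In all other cases — $\mathrm{father}(\mu)$ labelled $+$, or $\mu$ a left child of $\cdot_c$ reached from the right, etc. — Definition~\ref{def1} passes through without appending a factor, consistently with $\gamma(\mu)=\bot$.

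First I would fix $f\in\Sigma_\E\cap\Sigma_n$ and $1\le k\le n$, let $\nu_f$ be the (unique, by linearity) node labelled $f$, and set $\nu_0 = k\text{-}\mathrm{child}(\nu_f)$, so $\E_{\nu_0} = \E_k$ whenever $\E = f(\E_1,\dots,\E_n)$ at the root. The base case is $\E = f(\E_1,\dots,\E_n)$ itself: then $\Gamma_{\nu_f}(\E) = \emptyset$ (the path from $\nu_f$ to the root is a single node $\nu_f = \nu_\E$, whose father does not exist), and $C_{f^k}(\E) = \E_k = \E_{\nu_0}$, matching the formula. For the inductive step I would write $\E$ according to its root operator and use that $\nu_f$ lies strictly inside exactly one subexpression; in each case Definition~\ref{def1} reduces $C_{f^k}(\E)$ to $C_{f^k}$ of that subexpression, possibly composed with one extra $\cdot_c$-factor:
\begin{itemize}
\item $\E = g(\E_1,\dots,\E_m)$ with $g\neq f$: here $f\in\Sigma_{\E_j}$ for a unique $j$, $C_{f^k}(\E) = C_{f^k}(\E_j)$, and the root node of $T_\E$ has $\gamma = \bot$, so $\Gamma_{\nu_f}(\E) = \Gamma_{\nu_f}(\E_j)$; apply the induction hypothesis.
\item $\E = \E_1 + \E_2$: symmetric to the previous case, again the $+$-node has $\gamma=\bot$.
\item $\E = \E_1 \cdot_c \E_2$ with $f\in\Sigma_{\E_1}$: then $C_{f^k}(\E) = C_{f^k}(\E_1)\cdot_c \E_2$; the node $\nu_{\E_1}$ (root of the left subtree) has $\gamma(\nu_{\E_1}) = \mathrm{right}(\mathrm{father})= \nu_{\E_2}$ and $op = c$, so $\Gamma_{\nu_f}(\E) = \Gamma_{\nu_f}(\E_1)\cup\{\nu_{\E_1}\}$ with $\nu_{\E_1}$ the $\preccurlyeq$-largest element; the induction hypothesis gives the inner product and the new outermost factor is exactly $\cdot_c\,\E_{\gamma(\nu_{\E_1})} = \cdot_c\,\E_2$.
\item $\E = \E_1 \cdot_c \E_2$ with $f\in\Sigma_{\E_2}$ (i.e. $f\notin\Sigma_{\E_1}$): then $C_{f^k}(\E) = C_{f^k}(\E_2)$; the root node of the right subtree has $\gamma=\bot$ (its father is $\cdot_c$ but it is the right child — check the $\gamma$ definition: $\gamma(\mathrm{right}(\mathrm{father}(\nu)))$ points to itself only via the left child, so for the right child $\gamma$ falls to "otherwise" $=\bot$), hence $\Gamma_{\nu_f}(\E) = \Gamma_{\nu_f}(\E_2)$; apply the induction hypothesis.
\item $\E = \E_1^{*_c}$: then $C_{f^k}(\E) = C_{f^k}(\E_1)\cdot_c \E_1^{*_c}$; here $\nu_{\E_1}$ has $\gamma(\nu_{\E_1}) = \mathrm{father}(\nu_{\E_1}) = \nu_\E$ with $op = c$, so $\Gamma_{\nu_f}(\E) = \Gamma_{\nu_f}(\E_1)\cup\{\nu_{\E_1}\}$ ($\nu_{\E_1}$ being the $\preccurlyeq$-largest), and $\E_{\gamma(\nu_{\E_1})} = \E_1^{*_c}$, matching the extra factor.
\end{itemize}

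In every case the new subexpression appended on the right (if any) equals $\E_{\gamma(\nu_i)}$ with the correct $op(\nu_i)$, and the ordering of $\Gamma_{\nu_f}(\E)$ by $\preccurlyeq$ places the newly added node last, so that the parenthesization accumulates from the inside out exactly as written. The main obstacle I anticipate is the bookkeeping in the $\cdot_c$ and $*_c$ cases: one must verify carefully, from the definition of $\gamma$, precisely which node on the ancestor path of $\nu_f$ is the one with $\gamma\neq\bot$ (it is the root of the left operand for $\cdot_c$ and the root of the starred operand for $*_c$, i.e. a \emph{child} of the operator node, not the operator node itself), and confirm that the $\preccurlyeq$-order on $\Gamma_{\nu_f}(\E)$ agrees with the order in which Definition~\ref{def1} peels off the factors. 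Once the correspondence "$\mu\in\Gamma_{\nu_f}(\E)$ iff the $C_{f^k}$-recursion appends a factor when climbing past $\mathrm{father}(\mu)$" is pinned down, the induction closes routinely. \qed
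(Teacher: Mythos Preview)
Your proposal is correct and follows essentially the same structural induction as the paper's own proof: the same base case $\E=f(\E_1,\dots,\E_n)$, and the same split of the inductive step according to the root operator, with the ``no new factor'' cases ($g\neq f$, $+$, right operand of $\cdot_c$) grouped against the ``one new factor'' cases (left operand of $\cdot_c$, $*_c$). You are in fact a bit more explicit than the paper in checking, case by case, that $\Gamma_{\nu_f}(\E)$ either coincides with $\Gamma_{\nu_f}(\E_j)$ or differs from it by exactly the root node of the relevant subexpression; the paper leaves this verification implicit. Your caution about the right child of a $\cdot_c$ node is well placed: the paper's literal definition of $\gamma$ does not exclude that case, but the intended reading (visible from the figures and required for the proposition to hold) is that only the left child of $\cdot_c$ has $\gamma\neq\bot$, exactly as you assume.
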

\begin{proof}
  By induction over the structure of $E$.
  \begin{enumerate}
    \item Let us suppose that $E=f(\E_1,\ldots,\E_n)$. Then $C_{f^k}(\E)=E_k$. Since by definition $\nu_f$ is the root of $T_E$, $k\mbox{-}\mathrm{child}(\nu_{f})$ is the root of $E_k$. Hence $\E_{\nu_0}=E_k=C_{f^k}(\E)$.
    \item Let us suppose that $E=g(\E_1,\ldots,\E_m)$ with $g\neq f$, or $E=E_1+E_2$, or $E=E_1 \cdot_c \E_2$ with $f\in \Sigma_{E_2}$. Then $C_{f^k}(\E)=C_{f^k}(\E_j)$ with $f\in \Sigma_{E_j}$. By induction hypothesis, $C_{f^k}(\E_j)=((((\E_{\nu_0} \cdot_{op(\nu_1)} \E_{\gamma(\nu_1)})\cdot_{op(\nu_2)} \E_{\gamma(\nu_2)}) \dots \cdot_{op(\nu_{m})} \E_{\gamma(\nu_m)})$ where $\nu_f$ is the node of $T_{\E_j}$ labelled by $f$, $\nu_0$ is the $k\mbox{-}\mathrm{child}(\nu_{f})$, $\Gamma_{\nu_{f}}(\E_j)=\{\nu_1, \dots,\nu_m \}$ and for $1\leq i\leq m,op(\nu_i)=c$ such that $\mathrm{sym}(\mathrm{father}(\nu_i)) \in \{\cdot_c,{*_c}\}$. Since $T_{E_j} \preccurlyeq T_E$, $C_{f^k}(\E_j)=((((\E_{\nu_0} \cdot_{op(\nu_1)} \E_{\gamma(\nu_1)})\cdot_{op(\nu_2)} \E_{\gamma(\nu_2)}) \dots \cdot_{op(\nu_{m})} \E_{\gamma(\nu_m)})$ where $\nu_f$ is the node of $T_{\E}$ labelled by $f$, $\nu_0$ is the $k\mbox{-}\mathrm{child}(\nu_{f})$, $\Gamma_{\nu_{f}}(\E_j)=\{\nu_1, \dots,\nu_m \}$ and for $1\leq i\leq m,op(\nu_i)=c$ such that 
$\mathrm{sym}(\mathrm{father}(\nu_i)) \in \{\cdot_c,{*_c}\}$.    
    \item Let us suppose that $E=E_1 \cdot_c \E_2$ with $f\in\Sigma_{E_1}$ (resp. $E=E_1^{*_c}$). Then $C_{f^k}(\E)=C_{f^k}(\E_1)\cdot_c G$ with $G\in\{E_1^{*_c},E_2\}$. By induction hypothesis, $C_{f^k}(\E_1)= ((((\E_{\nu_0}\cdot_{op(\nu_1)} \E_{\gamma(\nu_1)})\cdot_{op(\nu_2)} \E_{\gamma(\nu_2)}) \dots \cdot_{op(\nu_{m})} \E_{\gamma(\nu_m)})$ where $\nu_f$ is the node of $T_{\E_1}$ labelled by $f$, $\nu_0$ is the $k\mbox{-}\mathrm{child}(\nu_{f})$, $\Gamma_{\nu_{f}}(\E_j)=\{\nu_1, \dots,\nu_m \}$ and for $1\leq i\leq m,op(\nu_i)=c$ such that $\mathrm{sym}(\mathrm{father}(\nu_i)) \in \{\cdot_c,{*_c}\}$. Since $T_{E_1} \preccurlyeq T_E$, by setting $H=E_{\nu_{m+1}}$ and $op(\nu_{m+1})c$, $C_{f^k}(\E_1)\cdot_c H=((((\E_{\nu_0} \cdot_{op(\nu_1)} \E_{\gamma(\nu_1)})\cdot_{op(\nu_2)} \E_{\gamma(\nu_2)}) \dots \cdot_{op(\nu_{m})} \E_{\gamma(\nu_m)}) \cdot_{op(\nu_{m+1})} \E_{\gamma(\nu_{m+1})}$ where $\nu_f$ is the node of $T_{\E}$ labelled by $f$, $\nu_0$ is the $k\mbox{-}\mathrm{child}(\nu_{f})$, $\Gamma_{\nu_{
f}}(\E)=\{\nu_1, \dots,\nu_m,\nu_{m+1}\}$ and for $1\leq i\leq m+1,op(\nu_i)=c$ such that $\mathrm{sym}(\mathrm{father}(\nu_i)) \in \{\cdot_c,{*_c}\}$.
  \end{enumerate}
  \qed
\end{proof}

\begin{corollary}\label{corol}
  Let $\E$ be linear, $f\in (\Sigma_{\E})_m$ and $k\leq m$. Then $|C_{f^k_j}(\E)|\leq |\E|^2$.
\end{corollary}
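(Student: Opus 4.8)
The plan is to read the size bound straight off the explicit structural description of $C_{f^k}(\E)$ given in Proposition~\ref{b}. That proposition expresses $C_{f^k}(\E)$ as $(((\E_{\nu_0}\cdot_{op(\nu_1)}\E_{\gamma(\nu_1)})\cdot_{op(\nu_2)}\E_{\gamma(\nu_2)})\dots\cdot_{op(\nu_m)}\E_{\gamma(\nu_m)})$, where $\nu_0$ is the $k$-th child of $\nu_f$ and $\Gamma_{\nu_f}(\E)=\{\nu_1,\dots,\nu_m\}$. So $C_{f^k}(\E)$ is built from the subexpression $\E_{\nu_0}$ together with the subexpressions $\E_{\gamma(\nu_1)},\dots,\E_{\gamma(\nu_m)}$ glued by $m$ binary product nodes, hence $|C_{f^k}(\E)| = |\E_{\nu_0}| + m + \sum_{i=1}^{m}|\E_{\gamma(\nu_i)}|$.

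First I would observe that each of $\E_{\nu_0}$ and each $\E_{\gamma(\nu_i)}$ is a subexpression of $\E$, so its size is at most $|\E|$; likewise the number $m = |\Gamma_{\nu_f}(\E)|$ of ``continuation links'' on the path from $\nu_f$ to the root is at most the number of $\cdot_c$ and $*_c$ nodes of $T_\E$, hence $m\le|\E|$. Plugging these crude bounds in already gives $|C_{f^k}(\E)| \le |\E| + |\E| + |\E|\cdot|\E|$, which is $O(|\E|^2)$; to get the clean inequality $|C_{f^k_j}(\E)|\le|\E|^2$ as stated one should be slightly more careful, noting that the nodes $\nu_0,\gamma(\nu_1),\dots,\gamma(\nu_m)$ are pairwise incomparable under $\preccurlyeq$ (they are siblings of distinct nodes along a single root path), so the subtrees $T_{\E_{\nu_0}},T_{\E_{\gamma(\nu_1)}},\dots,T_{\E_{\gamma(\nu_m)}}$ are pairwise disjoint subtrees of $T_\E$; therefore $|\E_{\nu_0}|+\sum_{i=1}^m|\E_{\gamma(\nu_i)}|\le|\E|$, and together with $m\le|\E|$ we get $|C_{f^k}(\E)|\le|\E|+|\E|\le|\E|^2$ (for $|\E|\ge 2$, which always holds since $\Sigma_{\E}\cap\Sigma_n\neq\emptyset$ forces at least one non-constant symbol plus its argument). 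The statement is for the linearized expression, but $|\b\E^{\E}|=|\E|$ since marking does not change the syntax tree, so writing $f_j$ rather than $f$ is immaterial.

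The main obstacle is purely bookkeeping: making precise the pairwise-disjointness claim for the subtrees $T_{\E_{\gamma(\nu_i)}}$. This follows because $\Gamma_{\nu_f}(\E)=\{\nu_1,\dots,\nu_m\}$ lies on the path from $\nu_f$ to the root and is totally ordered by $\preccurlyeq$, say $\nu_1\preccurlyeq\nu_2\preccurlyeq\cdots\preccurlyeq\nu_m$; for each $i$, $\gamma(\nu_i)$ is either $\mathrm{father}(\nu_i)$ (when that father is a $*_c$ node, so $\gamma(\nu_i)=\nu_i$'s parent, which is strictly between $\nu_i$ and $\nu_{i+1}$ on the path, and $\E_{\gamma(\nu_i)}$ contains $\E_{\nu_i}$ but is disjoint from the ``sibling'' branches) or the right sibling $\mathrm{right}(\mathrm{father}(\nu_i))$ of $\nu_i$ (when the father is $\cdot_c$), which hangs off the root path and is disjoint from everything at or below $\nu_{i+1}$ as well as from $\nu_0$. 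A short case check confirms that the $\gamma(\nu_i)$ together with $\nu_0$ root pairwise disjoint subtrees of $T_\E$, giving the claimed bound.

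Alternatively, and more in the spirit of ``this is a corollary,'' one can simply invoke Proposition~\ref{b} and the obvious counting $|\E_{\nu_0}|\le|\E|$, $m\le|\E|$, $\sum_i|\E_{\gamma(\nu_i)}|\le|\E|$ without belaboring disjointness beyond a one-line remark, since the target $|\E|^2$ is generous enough to absorb any overcounting. I would state the proof in that compact form: cite Proposition~\ref{b}, give the size formula $|C_{f^k}(\E)|=|\E_{\nu_0}|+m+\sum_{i=1}^m|\E_{\gamma(\nu_i)}|$, bound each summand by $|\E|$ using that these are disjoint subtrees and that $m$ counts a subset of the internal nodes of $T_\E$, and conclude $|C_{f^k}(\E)|\le 2|\E|\le|\E|^2$.
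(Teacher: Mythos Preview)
Your invocation of Proposition~\ref{b} and the size formula
\[
  |C_{f^k}(\E)| \;=\; |\E_{\nu_0}| + m + \sum_{i=1}^{m} |\E_{\gamma(\nu_i)}|
\]
are exactly what the paper has in mind; the corollary is stated without proof precisely because this crude count, together with $|\E_{\gamma(\nu_i)}|\le|\E|$ and $m<|\E|$, already gives the quadratic bound. (If you want the constant exactly right: the $\nu_i$ are distinct proper ancestors of $\nu_0$, so $|\E_{\nu_0}|\le|\E|-m-1$, whence $|C_{f^k}(\E)|\le |\E|-1+m|\E|\le |\E|^2-1$.)

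Where your argument breaks is the sharpening. The claim that $\nu_0,\gamma(\nu_1),\dots,\gamma(\nu_m)$ root pairwise disjoint subtrees is \emph{false} as soon as some $\mathrm{father}(\nu_i)$ is a $*_c$ node: then $\gamma(\nu_i)=\mathrm{father}(\nu_i)$ lies on the root path itself, and its subtree contains $\nu_0$ as well as every $\gamma(\nu_{i'})$ with $i'<i$. You actually observe in passing that ``$\E_{\gamma(\nu_i)}$ contains $\E_{\nu_i}$'' in this case, which is already incompatible with the disjointness you are asserting. A concrete witness: set $\E_0=f(c)$ and $\E_{n+1}=\E_n^{*_c}$. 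Then $\Gamma_{\nu_f}(\E_n)$ has $n$ elements with $\E_{\gamma(\nu_i)}=\E_i$, these subtrees are \emph{nested}, and
\[
  |C_{f^1}(\E_n)| \;=\; 1 + n + \sum_{i=1}^{n}|\E_i| \;=\; 1 + n + \sum_{i=1}^{n}(i+2),
\]
which is genuinely quadratic in $|\E_n|=n+2$. Already for $n=2$ one gets $|C_{f^1}(\E_2)|=10>8=2|\E_2|$. So both $\sum_i|\E_{\gamma(\nu_i)}|\le|\E|$ and your conclusion $|C_{f^k}(\E)|\le 2|\E|$ fail; the $k$-C-continuations can really be of quadratic size, which is exactly why the paper subsequently replaces them by the linear-size pseudo-continuations (Proposition~\ref{prop long pseudo cont lin}). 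Drop the disjointness step and keep only your crude bound; that is the whole proof.
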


\begin{example}\label{ex calc c cont}
	  Let $\Sigma$ be the ranked alphabet such that $\Sigma_0=\{a,b\}$, $\Sigma_1=\{h\}$ and $\Sigma_2=\{f\}$. Let $\E=(f(a,a)+f(a,a))^{*_a}\cdot_a h(b)$. Then $\b\E=(f_1(a,a)+f_2(a,a))^{*_a}\cdot_a h_3(b)$. The ZPC-Structure associated with $\b\E$ is represented in Figure~\ref{fig zpc e} restricted to some $\gamma$ links. As stated in Proposition~\ref{b}, $C_{f^1_1}(\b\E)=((a\cdot_a(f_1(a,a)+f_2(a,a))^{*_a})\cdot_a h_3(b))=((\E_{\nu_0}\cdot_a \E_{\gamma_{\nu_1}})\cdot_a\E_{\gamma_{\nu_2}})$. 
	\end{example}

\noindent \begin{minipage}{0.6\linewidth}
	In order to identify the equivalent $k$-C-Continuations, we can sort them in lexicographic order. 
This can be done in $O(|\E|^3)$ time and space complexity using Paige and Tarjan's Algorithm~\cite{tarjan}. 
 This is due to the fact that the size of $k$-C-Continuations is in $O(|\E|^2)$ (by Corollary~\ref{corol}). This complexity has been improved by 
 using \emph{$k$-Pseudo-Continuations} instead of $k$-C-Continuations~\cite{ZPC1,khorsi}.
\end{minipage}
\hfill
\begin{minipage}{0.35\linewidth}
	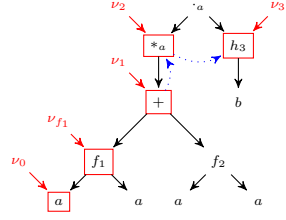
\begin{figure}[H]
	 \centerline{
	  \begin{tikzpicture}[node distance=1cm,bend angle=30,transform shape,scale=0.75]
	    \node (1) {$\cdot_a$};     
	    \node[below left of=1] (2) {$*_a$};    
	    \node[below of=2] (4) {$+$};         
	    \node[below left of=4,node distance=1.5cm] (6) {$f_1$};   
	    \node[below left of=6] (8) {$a$};     
	    \node[below right of=6] (9) {$a$};      
	    \node[below right of=4,node distance=1.5cm] (7) {$f_2$};  
	    \node[below left of=7] (10) {$a$};    
	    \node[below right of=7] (11) {$a$};        
	    \node[below right of=1] (3) {$h_3$};  
	    \node[below  of=3] (5) {$b$};
	    \node[inner sep=0pt,draw, rectangle, fit=(6), color=red] (nu) {};
	    \node[above left of=nu,node distance=1cm,color=red] (nuet) {$\nu_{f_1}$};
	    \node[inner sep=0pt,draw, rectangle, fit=(8), color=red] (nu0) {};
	    \node[above left of=nu0,node distance=1cm,color=red] (nu0et) {$\nu_0$};
	    \node[inner sep=0pt,draw, rectangle, fit=(4), color=red] (nu1) {};
	    \node[above left of=nu1,node distance=1cm,color=red] (nu1et) {$\nu_1$};
	    \node[inner sep=0pt,draw, rectangle, fit=(2), color=red] (nu2) {};
	    \node[above left of=nu2,node distance=1cm,color=red] (nu2et) {$\nu_2$};
	    \node[inner sep=0pt,draw, rectangle, fit=(3), color=red] (nu3) {};
	    \node[above right of=nu3,node distance=1cm,color=red] (nu3et) {$\nu_3$};
	    \path[->,color=red]
	      (nuet) edge (nu)
	      (nu0et) edge (nu0)
	      (nu1et) edge (nu1)
	      (nu2et) edge (nu2)
	      (nu3et) edge (nu3)
	    ;
	    \path[->,color=blue,dotted]
	      (4) edge[bend right] (2)
	      (2) edge[bend right] (3)
	    ;
	    \path[->]
	      (1) edge (2)
	      (1) edge (3)
	      (2) edge (4)
	      (3) edge (5)
	      (4) edge (6)
	      (4) edge (7)
	      (6) edge (8)
	      (6) edge (9)
	      (7) edge (10)
	      (7) edge (11)
	    ;
	  \end{tikzpicture}
	 }
	  \caption{ZPC-Structure of $E$.}
	  \label{fig zpc e}
	\end{figure}
\end{minipage}

\noindent  A \emph{$k$-Pseudo-Continuation} $l_{f^k_j}(\E)$ of $f_j$ in $\E$ is obtained from the 
 $k$-C-Continuation $C_{f^k_j}(\b\E)$ by replacing some subexpression $\b\f$ of $\b\E$ by a symbol $\psi(h(\b\f))$ such that for two subexpressions 
 $\f$ and $\G$ of $\E$:   
    $\psi(\f)=\psi(\G)\Leftrightarrow \f=\G$.

\begin{definition}\label{def2}
Let $\h$ be a regular expression over $\Sigma$ and $\psi$ be a bijection that associates to each subexpression of $\E$ a symbol in an alphabet $\Psi$. We define the word $\psi^\prime(\h)$ over the alphabet $\Psi\cup\{\cdot_a\mid a\in \Sigma_0\}$ inductively as follows:
   
 \centerline{$\psi^\prime(\h)=
      \left\{
        \begin{array}{l@{\ }l}
          \psi^\prime(\f)\cdot_c\psi(\G)& \text{ if } \h=\f\cdot_c\G \text{ and }\G \text{ a subexpression of }\E\\
           \psi(\h)& \text{ if } \h\neq\f\cdot_c\G \text{ and }\h \text{ a subexpression of }\E\\
          \varepsilon & \text{ otherwise}.
        \end{array}
      \right.$}
      
  \noindent The function $\psi^\prime$ is said to be an $(\E,\Psi)$-encoding.
\end{definition}

\begin{definition}
 Let $n$ and $k$ be two integers such that $1\leq k\leq n$, $f_j$ be a symbol in $\Po{\E}{\E}$  and $\psi^\prime$ an $(\E,\Psi)$-encoding for some alphabet $\Psi$. The $k$-Pseudo-Continuation of $f_j$ in $\E$, denoted by $l_{f^k_j}(\E)$, is the word over $\Psi\cup\{\cdot_a\mid a\in \Sigma_0\}$ defined by $l_{f^k_j}(\E)=\psi^\prime(h(C_{f^k_j}(\b\E)))$.
\end{definition}

In the following, we consider that the pseudo-continuations of 
$E$
are defined over $\Psi$ a finite subset of $\mathbb{N}$, bounded by the number of subexpressions of $E$.

\begin{lemma}\label{lem egal psi egal exp}
  Let $E$ and $F$ be two regular expressions over an alphabet $\Sigma$ such that $E$ and $F$ are two products of subexpressions of a regular expression $H$ over $\Sigma$. Let $\psi'$ be a $(H,\Psi)$-encoding. Then:
  
  \centerline{
    $\psi'(\E)=\psi'(F)$ $\Leftrightarrow$ $E=F$.
  }
\end{lemma}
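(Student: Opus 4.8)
The plan is to prove the non-trivial direction $\psi'(E)=\psi'(F)\Rightarrow E=F$; the converse is immediate because $\psi'$ is a function. I would argue by induction on $|E|+|F|$, after first recording two elementary facts about an expression $E$ that is a product of subexpressions of $H$. \textbf{(i)} If the root symbol of $E$ is not of the form $\cdot_c$, then $E$ is itself a subexpression of $H$ (a one-factor product). \textbf{(ii)} If the root symbol of $E$ is $\cdot_c$, then $E=E'\cdot_c\mathfrak{g}$, where the right child $\mathfrak{g}$ is a subexpression of $H$ and the left child $E'$ is again a product of subexpressions of $H$ with $|E'|<|E|$. Both are immediate from the unique parsing of a regular expression as a tree over $\Sigma\cup\{+,\cdot_c,*_c\}$, together with the remark that every sub-subexpression of a subexpression of $H$ is a subexpression of $H$. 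Unfolding Definition~\ref{def2} along (i)--(ii) shows that for such an $E$ the word $\psi'(E)$ is always non-empty and of the alternating shape $\psi(\mathfrak{h}_0)\,\cdot_{d_1}\,\psi(\mathfrak{h}_1)\cdots\cdot_{d_p}\,\psi(\mathfrak{h}_p)$, i.e.\ it lies in $\Psi\,(\{\cdot_a\mid a\in\Sigma_0\}\,\Psi)^*$; in particular the clause of Definition~\ref{def2} yielding $\varepsilon$ is never used here, and both the first and the last letter of $\psi'(E)$ belong to $\Psi$.

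Next, assuming $\psi'(E)=\psi'(F)$, I would distinguish two cases according to the length of this common word (which, by the alternating shape, is odd). If the word has length $1$, then by (i)--(ii) neither $E$ nor $F$ can have root symbol $\cdot_c$, hence $E$ and $F$ are subexpressions of $H$ with $\psi'(E)=\psi(E)$ and $\psi'(F)=\psi(F)$; since $\psi$ is a bijection on the subexpressions of $H$, $\psi(E)=\psi(F)$ forces $E=F$. If the word has length at least $3$, then both $E$ and $F$ have root symbol $\cdot_c$, so by (ii) we may write $E=E'\cdot_c\mathfrak{g}$ and $F=F'\cdot_{c'}\mathfrak{g}'$ with $\mathfrak{g},\mathfrak{g}'$ subexpressions of $H$ and $E',F'$ products of subexpressions of $H$. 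The alphabets $\Psi$ and $\{\cdot_a\mid a\in\Sigma_0\}$ being disjoint (as the notation $\Psi\cup\{\cdot_a\mid a\in\Sigma_0\}$ of Definition~\ref{def2} indicates), the last letter of $\psi'(E)=\psi'(E')\cdot_c\psi(\mathfrak{g})$ is $\psi(\mathfrak{g})\in\Psi$ and the last letter of $\psi'(F)$ is $\psi(\mathfrak{g}')$, hence $\psi(\mathfrak{g})=\psi(\mathfrak{g}')$ and so $\mathfrak{g}=\mathfrak{g}'$ by injectivity of $\psi$; the penultimate letters give $\cdot_c=\cdot_{c'}$, i.e.\ $c=c'$; and erasing these two letters yields $\psi'(E')=\psi'(F')$. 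Since $|E'|+|F'|<|E|+|F|$ and $E',F'$ are products of subexpressions of $H$, the induction hypothesis gives $E'=F'$, and therefore $E=E'\cdot_c\mathfrak{g}=F'\cdot_{c'}\mathfrak{g}'=F$.

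The argument is mostly bookkeeping; the one point needing care is the ambiguity in the phrase \emph{product of subexpressions} --- a single subexpression whose root is a $\cdot_c$ is simultaneously a two-factor product --- which is why the induction is organized around the root symbol of $E$ (fact (i) versus fact (ii)) rather than around a chosen factorization, and why (ii) is phrased so as always to peel off the genuine right child. Everything else follows from the injectivity of $\psi$ on subexpressions of $H$ and from reading off the letters of a word over the disjoint alphabet $\Psi\cup\{\cdot_a\mid a\in\Sigma_0\}$.
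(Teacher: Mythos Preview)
Your proof is correct and follows essentially the same route as the paper's: a case split on whether the root of $E$ (and of $F$) is a concatenation $\cdot_c$, then in the concatenation case peel off the last $\Psi$-letter (giving $\mathfrak g=\mathfrak g'$ by injectivity of $\psi$) and the penultimate $\cdot_c$-letter (giving $c=c'$), and recurse on the remaining prefix. Your write-up is in fact more careful than the paper's on a couple of points the paper leaves implicit --- the explicit induction measure $|E|+|F|$, the observation that the $\varepsilon$-clause of Definition~\ref{def2} is never triggered here, and the disjointness of the alphabets $\Psi$ and $\{\cdot_a\mid a\in\Sigma_0\}$ --- but the skeleton is the same.
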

\begin{proof}
  Let us consider that $\psi'$ is associated with the bijection $\psi$. Let us consider the possible roots of the expressions.
  \begin{enumerate}
    \item If the roots of $E$ and $F$ are notconcatenation products, $\psi'(\E)=\psi'(F)$ $\Leftrightarrow$ $E=F$ since $\psi$ is a bijection and $\psi'(\E)=\psi(\E)\ wedge \psi'(F)=\psi(F)$.
  
    \item Let us suppose that, without loss of generality, only the root of $E$ is aconcatenation product $\cdot_c$. Then the symbol $\cdot_c$ appears in $\psi'(\E)$ but not in $\psi'(F)$. Hence $\psi'(\E)\neq \psi'(F)$ and $E\neq F$.
  
    \item Finally, let us suppose that $E=E_1\cdot_c \E_2$ and $F=F_1\cdot_d F_2$. 
    \begin{enumerate}
      \item If $E_2\neq F_2$ then $\psi(\E)\neq \psi(F)$ and then $\psi'(\E)$ and $\psi'(F)$ do not end with the same symbol.
      \item Suppose that $E_2=F_2$. If $\cdot_c\neq \cdot_d$, $\psi'(\E)$ ends with $\cdot_c \psi(\E_2)\neq \cdot_d \psi(\E_2)$ ; Hence $\psi'(\E)\neq \psi'(F)$ and $E\neq F$. Otherwise, by induction hypothesis, $\psi'(\E_1)=\psi'(F_1)$ $\Leftrightarrow$ $E_1=F_1$. Hence $\psi'(\E_1)\cdot_c \psi(\E_2)=\psi'(F_1)\cdot_c \psi(F_2)$ $\Leftrightarrow$ $E_1\cdot_c \E_2=F_1\cdot_d F_2$.
    \end{enumerate}
  \end{enumerate}
  \qed
\end{proof}

\begin{proposition}\label{prop long pseudo cont lin}
  The two following propositions hold:
  \begin{enumerate}
    \item $|l_{f^k_j}(\E)|$ is at most linear w.r.t. $ |\E|$,
    \item $\sum_{f_j\in \Po{\E}{\E}_n,1\leq k\leq n} |\psi^{\prime}(\E_{\nu_{f^k_j}})| $ is at most linear w.r.t. $ |\E|$, with $\nu_{f^k_j}=k\mbox{-}\mathrm{child}(\nu_{f_j})$. 
  \end{enumerate} 
\end{proposition}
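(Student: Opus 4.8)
The plan is to prove the two parts separately, both relying on the structural description of the $k$-C-continuation provided by Proposition~\ref{b}. Recall that Proposition~\ref{b} tells us that $C_{f^k}(\b\E)$ is a left-comb of concatenation products whose leftmost leaf is the subexpression $\E_{\nu_0}$ (with $\nu_0=k\mbox{-}\mathrm{child}(\nu_{f})$) and whose successive right children are the subexpressions $\E_{\gamma(\nu_1)},\dots,\E_{\gamma(\nu_m)}$, where $\Gamma_{\nu_f}(\E)=\{\nu_1,\dots,\nu_m\}$ is the chain of ancestors $\mu$ of $\nu_f$ with $\gamma(\mu)\neq\bot$, ordered by $\preccurlyeq$. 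The key observation for bounding the \emph{pseudo}-continuation is that, when we apply $h$ and then the $(\E,\Psi)$-encoding $\psi'$ of Definition~\ref{def2}, each such right factor $\E_{\gamma(\nu_i)}$ is itself a subexpression of $\E$ and therefore gets collapsed by $\psi'$ into a \emph{single letter} $\psi(h(\E_{\gamma(\nu_i)}))$ of $\Psi$; only the leftmost component $\E_{\nu_0}$ is recursively expanded by $\psi'$, and it too is a subexpression of $\E$, so its expansion is itself handled recursively in the same way.

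For part~(1), I would argue as follows. By Proposition~\ref{b} and the definition of $l_{f^k_j}(\E)=\psi'(h(C_{f^k_j}(\b\E)))$, unfolding the $\psi'$-recursion on the left comb gives
\[
 l_{f^k_j}(\E)=\psi'(h(\E_{\nu_0}))\cdot_{op(\nu_1)}\psi(h(\E_{\gamma(\nu_1)}))\cdots\cdot_{op(\nu_m)}\psi(h(\E_{\gamma(\nu_m)})).
\]
Hence $|l_{f^k_j}(\E)| = |\psi'(h(\E_{\nu_0}))| + 2m$ (counting the $m$ concatenation letters and the $m$ $\Psi$-letters). Now $\nu_0$ is a child of $\nu_{f}$ and the nodes $\gamma(\nu_1),\dots,\gamma(\nu_m)$ are pairwise distinct nodes of $T_{\E}$ (since the $\nu_i$ are distinct ancestors and $\gamma$ is injective on each relevant family — distinct starred/concatenation ancestors give distinct $\gamma$-images), so $m\leq|\E|$. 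It remains to bound $|\psi'(h(\E_{\nu_0}))|$; but $\psi'$ applied to any subexpression $\G$ of $\E$ produces a word whose concatenation letters correspond exactly to the right spine of $\G$ under $\cdot$ and whose $\Psi$-letters are the endpoints of that spine, so $|\psi'(h(\G))|\leq 2\cdot\#\{\text{nodes of }T_{\G}\text{ on the right spine}\}\leq |\E|$. Combining, $|l_{f^k_j}(\E)|=O(|\E|)$.

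For part~(2), the point is a global (amortized) count rather than a per-continuation one, so the $O(|\E|^2)$ bound one gets from (1)$\times$(number of positions) must be avoided. Writing $\nu_{f^k_j}=k\mbox{-}\mathrm{child}(\nu_{f_j})$, I note that as $f_j$ ranges over $\Po{\E}{\E}_n$ and $k$ over $\{1,\dots,n\}$, the nodes $\nu_{f^k_j}$ are \emph{exactly} the non-root children of position-nodes in $T_{\E}$ — each such node occurs once. So $\sum_{f_j,k}|\psi'(\E_{\nu_{f^k_j}})|$ is a sum, over a set of distinct nodes $\mu$ of $T_{\E}$, of the quantity $|\psi'(\E_{\mu})|$, which as observed above is (up to the factor $2$) the length of the right spine of the subtree rooted at $\mu$. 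The main obstacle — and the crux of the argument — is to show these right spines have total length $O(|\E|)$: a naive bound gives $O(|\E|^2)$ because spines can overlap. The way around this is to charge each edge of $T_{\E}$ only a bounded number of times: an edge $(\mu',\mu'')$ with $\mu''=\mathrm{right}(\mu')$ and $\mathrm{sym}(\mu')=\cdot_c$ lies on the right spine of $\E_{\mu}$ only for those $\mu$ that are $\cdot$-ancestors of $\mu'$ on a common right spine — but any node is the leftmost node $\nu_{f^k_j}$ of at most one continuation, and two distinct such leftmost nodes $\mu_1\neq\mu_2$ have right spines in $T_{\E_{\mu_1}}$ and $T_{\E_{\mu_2}}$ that, being themselves nested chains anchored at distinct nodes, share at most a suffix — so one shows by a disjointness/telescoping argument that each $\cdot$-edge of $T_{\E}$ is counted $O(1)$ times. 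Since $T_{\E}$ has $O(|\E|)$ edges, the total is $O(|\E|)$. I expect this amortized edge-charging step to be the delicate part; the rest is bookkeeping on top of Proposition~\ref{b} and Lemma~\ref{lem egal psi egal exp}.
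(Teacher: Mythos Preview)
Your overall strategy---decompose via Proposition~\ref{b} and bound the length of $\psi'$ by the depth of the chain of left-associated $\cdot$-nodes---is exactly the paper's. Two points deserve correction.

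First, a terminology slip: in Definition~\ref{def2}, $\psi'(\f\cdot_c\G)=\psi'(\f)\cdot_c\psi(\G)$ recurses into the \emph{left} child, so the relevant chain is the \emph{left} spine of $\cdot$-nodes, not the right spine. The paper names this quantity explicitly: $\mathrm{nbdot}(\G)$, the number of $\cdot$-operators encountered from the root of $\G$ going down through left children, and shows $|\psi'(\G)|\le 2\,\mathrm{nbdot}(\G)+1$. With this, Part~(1) goes through as you wrote (the paper sharpens it to $2|\E|+1$ by using that the $m$ ancestors $\nu_1,\dots,\nu_m$ are not in the subtree rooted at $\nu_0$, so $\mathrm{nbdot}(\E_{\nu_0})\le|\E|-m$).

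Second, and more substantively, your Part~(2) has a gap. From ``the left spines share at most a suffix'' you cannot conclude that each $\cdot$-edge is counted $O(1)$ times: if many spines were nested, a deep edge could lie in all of them, and no telescoping rescues the sum. What the paper uses (and what you are actually trying to reach) is that these left spines are \emph{pairwise disjoint}. The reason is direct: each $\nu_{f^k_j}$ has as parent the alphabet-symbol node $\nu_{f_j}$, which is \emph{not} a $\cdot$-node; hence $\nu_{f^k_j}$ cannot be an interior node (left child of a $\cdot$-node) of another chain. Disjointness gives $\sum_{f_j,k}\mathrm{nbdot}(\E_{\nu_{f^k_j}})\le|\E|$ immediately, and therefore $\sum_{f_j,k}|\psi'(\E_{\nu_{f^k_j}})|\le 2|\E|+|\Sigma_{\ge 1}|$. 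Once you replace the suffix/telescoping sketch with this one-line disjointness observation, your argument matches the paper's.
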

  \begin{proof}
  
We define the function $\mathrm{nbdot}(\E)$ as the number of 
 left-associated
 concatenation operators
 in $\E$ as follows: 

\centerline{
    $\mathrm{nbdot}(\E)=
      \left\{
        \begin{array}{l@{\ }l}
          \mathrm{nbdot}(\f)+1 & \text{ if } \E=\f\cdot_c\G,\\
          0 & \text{ otherwise. }\\ 
        \end{array}
      \right.$
  }
 
 
 Let us first prove that $|\psi^{\prime}(\E)|\leq 2\mathrm{nbdot}(\E)+1$. The proof proceeds by induction in the structure of $\E$.

 \begin{enumerate} 
   \item Whenever $\E$ is not a product, $\psi'(\E)=\psi(E)$. Hence $|\psi'(\E)|=1$. Since $\mathrm{nbdot}(\E)=0$, the condition is satisfied.
   \item Suppose that $\E=\f\cdot_c\G$. Hence 

  \centerline{
    \begin{tabular}{l@{\ }l}
      $|\psi^{\prime}(\f\cdot_c\G)|$ & $\leq |\psi^{\prime}(\f)|+2$\\
      & $\leq 2(\mathrm{nbdot}(\f))+1+2$\\
      & $= 2(\mathrm{nbdot}(\f)+1)+1$\\
      & $=2(\mathrm{nbdot}(\E))+1$.\\
    \end{tabular}
  }
  \end{enumerate}
 
  
  Following Proposition~\ref{b}, $|l_{f^k_j}(\E)|\leq | \psi'(E_{\nu_0})| +2m$ where $\nu_f$ is the node of $T_{\E}$ labelled by $f$, $\nu_0$ is the $k\mbox{-}\mathrm{child}(\nu_{f})$, $\Gamma_{\nu_{f}}(\E)=\{\nu_1, \dots,\nu_m \}$ and for $1\leq i\leq m,~op(\nu_i)=c$ such that $\mathrm{sym}(\mathrm{father}(\nu_i)) \in \{\cdot_c,{*_c}\}$. Since $\{\nu_1, \dots,\nu_m \}$ are ancestors of $\nu_0$, $|E_{\nu_0}|\leq |E|-m$. Consequently, $\mathrm{nbdot}(E_{\nu_0})\leq |E|-m$. Moreover, from previous point, $| \psi'(E_{\nu_0})|\leq 2(\mathrm{nbdot}(E_{\nu_0}))+1$. Consequently, $|l_{f^k_j}(\E)|\leq 2(|E|-m)+1 +2m=2|E|+1$.

 Furthermore, since $|\psi^{\prime}(\E_{\nu_{f^k_j}})|\leq 2(\mathrm{nbdot}(\E_{\nu_{f^k_j}}))+1$, it holds:
 
 \centerline{$\sum_{f_j\in \Po{\E}{\E}_n}\sum_{1\leq k\leq n}|\psi^{\prime}(\E_{\nu_{f^k_j}})|\leq \sum_{f_j\in \Po{\E}{\E}_n}\sum_{1\leq k\leq n} 2(\mathrm{nbdot}(\E_{\nu_{f^k_j}}))+1$.}
 
 However, the concatenation operators below the node $\nu_{f^k_j}$ do not appears below another symbol. Consequently, 
 
 \centerline{$\sum_{f_j\in \Po{\E}{\E}_n}\sum_{1\leq k\leq n} \mathrm{nbdot}(\E_{\nu_{f^k_j}}) \leq |E|$}
 
 Finally,
 
 \centerline{$\sum_{f_j\in \Po{\E}{\E}_n}\sum_{1\leq k\leq n}|\psi^{\prime}(\E_{\nu_{f^k_j}})|\leq 2|E| + |\Sigma_{\geq 1}|$.}
 
  \qed
 \end{proof}

\begin{proposition}\label{prop1}
Let $f_j\in \Po{\E}{\E}_n$, $g_i\in \Po{\E}{\E}_m$, $k\leq n$ and $p\leq m$ be two integers. Then $h(C_{f^k_j}(\b\E))= h(C_{g^p_i}(\b\E))\Leftrightarrow l_{f^k_j}(\E)=l_{g^p_i}(\E)$.
\end{proposition}
\begin{proof}
  Direct Corollary of Lemma~\ref{lem egal psi egal exp}.
 \qed
\end{proof}

  From Proposition~\ref{prop1} we can deduce that the $k$-C-Continuations identification can be achieved by considering the $k$-Pseudo-Continuations. In the following we show that this identification step (computation of ${\sim_e}$) can be done without the computation of the $k$-Pseudo-Continuations and  that it amounts to the minimization of a word acyclic deterministic automaton. Before seeing how the identification of $k$-Pseudo-Continuations $l_{f^k_j}(\E)$ is performed, we prove that the computation of the function $\psi$ can be done in a linear time in the size of $\E$. 

Let us consider the syntax tree $T_{\E}$ associated with 
$\E$.
 This syntax tree contains all the subexpressions of $\E$. Each node $\nu$ in $T_{\E}$ corresponds to the 
subexpression $\E_{\nu}$ of $\E$. The equivalence relation $\sim$ over the nodes of the tree $T_{\E}$ is defined by $\nu_1 \sim\nu_2 \Leftrightarrow \E_{\nu_1}=\E_{\nu_2}$. We show that the computation of the equivalence relation $\sim$ amounts to the minimization of the word acyclic deterministic automaton ${\cal A}_{T_{_{\E}}} =(Q,\Sigma_{\cal A},\{\nu_{\E}\},\{\nu_T\},\delta)$, where $\nu_E$ is the node associated to the root of $E$, $Q=\mathrm{Nodes}(\E)\cup \{ \nu_T \}\cup\{\bot\}$ with $\nu_T,\bot\notin \mathrm{Nodes}(\E)$, $\Sigma_{\cal A}=\Sigma_0\cup\{g_+,d_+\}\cup\{*_a,g_{\cdot_a},d_{\cdot_a}\mid a\in \Sigma_0\}\cup\{f^1,\dots,f^n\mid f\in \Sigma_n, n\geq 1\}$, and $\delta$ is defined by $\delta(\nu,*_a)=\mathrm{son}(\nu)$ if $\mathrm{sym(\nu)}=*_a$, $\delta(\nu,g_{\mathrm{sym(\nu)}})=\mathrm{left}(\nu)$ and $\delta(\nu,d_{\mathrm{sym(\nu)}})=\mathrm{right}(\nu)$ if $\mathrm{sym(\nu)}\in \{+, \cdot_a,~a\in \Sigma_0\}$, $\delta(\nu,{\mathrm{sym(\nu)}})=\nu_T$ if $\mathrm{sym(\nu)}\in\Sigma_0$, $\
delta(\nu,f^k)=k\mbox{-}\mathrm{child(\nu)}$ if $\mathrm{sym(\nu)}=f\in\Sigma_{\geq 1}$, and $\delta(\nu,x)=\bot$ in all otherwise. 

\begin{lemma}\label{lem exp egal si lang eq}
  $E=F$ $\Leftrightarrow$ $\mathcal{L}(\mathcal{A}_{T_E})=\mathcal{L}(\mathcal{A}_{T_F})$.
\end{lemma}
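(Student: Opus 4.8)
The plan is to prove both implications separately, relating the equality of expressions to the equality of the languages recognized by the associated word automata $\mathcal{A}_{T_E}$ and $\mathcal{A}_{T_F}$. The key observation is that $\mathcal{A}_{T_E}$ is essentially a deterministic encoding of the syntax tree $T_E$: each accepted word traces a path from the root node $\nu_E$ down to a leaf (a constant of $\Sigma_0$), reading at each internal node a symbol that both records the operator at that node and selects which child to descend into. Thus a word $w\in\mathcal{L}(\mathcal{A}_{T_E})$ is exactly a linearly written description of a root-to-leaf branch in the syntax tree, together with the operator labels encountered along the way.

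For the easy direction ($\Rightarrow$), if $E=F$ then their syntax trees $T_E$ and $T_F$ graphically coincide, hence $\mathrm{Nodes}(E)$ and $\mathrm{Nodes}(F)$ together with the functions $\mathrm{sym}$, $\mathrm{son}$, $\mathrm{left}$, $\mathrm{right}$ and $k\mbox{-}\mathrm{child}$ are identical, so the transition functions $\delta$ of the two automata coincide after identifying root with root; therefore $\mathcal{L}(\mathcal{A}_{T_E})=\mathcal{L}(\mathcal{A}_{T_F})$. For the hard direction ($\Leftarrow$), I would argue by induction on the structure of $E$ (equivalently on the height of $T_E$), showing that the language $\mathcal{L}(\mathcal{A}_{T_E})$ determines $E$ uniquely. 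The base case is $E\in\Sigma_0$: then $\mathcal{L}(\mathcal{A}_{T_E})=\{E\}$ is a single one-letter word over $\Sigma_0$, and no other expression has this language since any non-constant root forces a first letter in $\{g_+,d_+\}\cup\{*_a,g_{\cdot_a},d_{\cdot_a}\}\cup\{f^1,\dots,f^n\}$. For the inductive step, one reads off the root operator of $E$ from the set of first letters of words in $\mathcal{L}(\mathcal{A}_{T_E})$: a letter $f^k$ signals $\mathrm{root}(E)=f\in\Sigma_k$; the pair $g_+,d_+$ signals a $+$; the triple $*_a$ or $g_{\cdot_a},d_{\cdot_a}$ signals $*_a$ or $\cdot_a$ respectively (here the annotation of the concatenation and star symbols by the relevant constant $a$ is what lets us recover $c$ in $\cdot_c$ and $*_c$). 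Once the root is identified, the residual languages $a^{-1}\mathcal{L}(\mathcal{A}_{T_E})$ for the appropriate first letters $a$ are exactly $\mathcal{L}(\mathcal{A}_{T_{E_i}})$ for the children subexpressions $E_i$, so the induction hypothesis applies and reconstructs each $E_i$, hence $E$.

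The main obstacle is making the "read off the root, then peel off residual languages" argument fully rigorous: one must check that the first-letter sets are genuinely disjoint across the different operator types (which holds because $\Sigma_{\mathcal{A}}$ partitions the alphabet by operator kind, and the $*_c,\cdot_c$ symbols carry their subscript $c$), and that after removing a legal first letter $a$ the resulting residual language is literally the language of the sub-automaton rooted at $\delta(\nu_E,a)$ — this uses the fact that $\mathcal{A}_{T_E}$ is acyclic and deterministic with $\nu_T$ as its unique final state reachable only from constant-labelled nodes, so no spurious words arise. I would also note that an illegal first letter leads to $\bot$, from which no accepting word exists, so such letters do not appear as prefixes of any accepted word; this guarantees the first-letter set faithfully encodes the root. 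With these points established, the two implications combine to give $E=F\Leftrightarrow\mathcal{L}(\mathcal{A}_{T_E})=\mathcal{L}(\mathcal{A}_{T_F})$.
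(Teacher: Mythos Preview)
Your proposal is correct and follows essentially the same strategy as the paper: both argue by structural induction, using the fact that the first letters of $\mathcal{L}(\mathcal{A}_{T_E})$ encode the root operator (including the subscript $c$ in $\cdot_c$ and $*_c$), and that the residuals after these first letters are precisely the languages $\mathcal{L}(\mathcal{A}_{T_{E_i}})$ of the children subexpressions. The only presentational difference is that the paper proves the contrapositive of the hard direction---assuming $E\neq F$, it descends to the first child $E_j\neq F_j$ and exhibits a single distinguishing word---whereas you phrase it positively as ``the language determines $E$ uniquely'' and invoke the induction hypothesis on each residual. These are dual formulations of the same induction and neither buys anything substantial over the other. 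One small slip: when you write ``a letter $f^k$ signals $\mathrm{root}(E)=f\in\Sigma_k$'', the correct statement is that $f\in\Sigma_n$ for the arity $n$ of $f$ (the superscript $k$ indexes the child, not the rank); this does not affect your argument.
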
 
\begin{proof}
  Let $\Sigma_{\mathcal{A}_E}$ (resp. $\Sigma_{\mathcal{A}_F}$) be the alphabet of the automaton $\mathcal{A}_{T_E}$ (resp. $\mathcal{A}_{T_F}$).
  
  \begin{enumerate}
    \item If $\E=\f$ then $\mathcal{A}_{T_E}=\mathcal{A}_{T_F}$ and $\mathcal{L}(\mathcal{A}_{T_E})=\mathcal{L}(\mathcal{A}_{T_F})$.
    \item Suppose that $E\neq F$.  Notice that any word $w$ in $\mathcal{L}(\mathcal{A}_{T_E})$ (resp. $\mathcal{L}(\mathcal{A}_{T_F})$) starts with a symbol associated with the root of $E$ (resp. $F$). 
    \begin{enumerate}
      \item Hence, if the roots of $E$ and $F$ are distinct, then $\mathcal{L}(\mathcal{A}_{T_E}) \cap \mathcal{L}(\mathcal{A}_{T_F})=\emptyset$ ; Since $\mathcal{L}(\mathcal{A}_{T_E})$ is not empty by construction, $\mathcal{L}(\mathcal{A}_{T_E}) \neq \mathcal{L}(\mathcal{A}_{T_F})$. 
      \item Otherwise, there exists an integer $j$ such that $E=x(\E_1,\ldots,E_k)$, $F=x(F_1,\ldots,F_k)$ and $E_j\neq F_j$. By induction over the size of $E_j$.
      \begin{enumerate}
        \item If $E_j\in\Sigma_0$, then since the roots are distincts, the word starting with the symbol associated to the node $x$ followed by the symbol $a$ is in $\mathcal{L}(\mathcal{A}_{T_{E}})$ but not in $\mathcal{L}(\mathcal{A}_{T_F})$.
        \item Otherwise, it holds by induction hypothesis that there exists a word in $\mathcal{L}(\mathcal{A}_{T_{E_j}})$ not in $\mathcal{L}(\mathcal{A}_{T_{F_j}})$. Hence there exists a word starting with a symbol associated to the node $x$ followed by a word in $\mathcal{L}(\mathcal{A}_{T_{E_j}})$ that is in $\mathcal{L}(\mathcal{A}_{T_E})$ but not in $\mathcal{L}(\mathcal{A}_{T_F})$.
      \end{enumerate}
    \end{enumerate}
  \end{enumerate}
  \qed
\end{proof}

 According to Lemma~\ref{lem exp egal si lang eq}, $\nu_1 \sim\nu_2 \Leftrightarrow \mathcal{L}(\mathcal{A}_{T_{E_{\nu_1}}})=\mathcal{L}(\mathcal{A}_{T_{E_{\nu_2}}})$, that is the equivalence relation $\sim$ coincides with Myhill-Nerode equivalence~\cite{Nerode} over the states of the automaton ${\cal A}_{T_{_{\E}}}$, that can be computed in $O(|\E|)$ time and space complexity using Revuz Algorithm~\cite{revuz}.
 
\begin{lemma}\label{lemma5}
  The computation of $\psi(\f)$ for all subexpression $\f$ of $\E$ can be done in $O(|\E|)$ time and space complexity.
\end{lemma}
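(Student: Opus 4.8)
The plan is to reduce the computation of $\psi$ to the minimization of the acyclic deterministic word automaton $\mathcal{A}_{T_\E}$ introduced just above, for which Revuz's algorithm~\cite{revuz} provides a linear-time procedure. First I would build $\mathcal{A}_{T_\E}$ explicitly from the syntax tree $T_\E$: every node of $T_\E$ becomes a state, together with the two extra states $\nu_T$ and $\bot$, and the transition function $\delta$ is read off locally from the symbol carried by each node. Since a node labelled by a symbol of $\Sigma_n$ (resp.\ by $+$, by $\cdot_a$, by $*_a$, by a constant) has exactly $n$ (resp.\ $2$, $2$, $1$, $1$) non-sink outgoing transitions, the total number of non-sink transitions is bounded by the number of child edges of $T_\E$, hence is $O(|\E|)$; building the structure is therefore $O(|\E|)$ in time and space. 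To keep the later steps linear despite the a priori large alphabet $\Sigma_{\mathcal{A}}$, I would first relabel the symbols actually occurring on transitions by consecutive integers in a range of size $O(|\E|)$, using one bucket pass.

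Next I would run Revuz's acyclic-automaton minimization on $\mathcal{A}_{T_\E}$. Revuz's algorithm processes states by non-decreasing height and radix-sorts, within each height level, the signatures $(\mathrm{sym}(\nu),(\text{class of }\delta(\nu,\cdot)))$; since the total size of all signatures is proportional to the number of transitions, i.e.\ $O(|\E|)$, and the labels now lie in a dense range, the whole minimization runs in $O(|\E|)$ time and space. Its output is a partition of the states of $\mathcal{A}_{T_\E}$ into Myhill--Nerode classes, together with a class identifier in $\{1,\dots,N\}$ for each state, where $N\le|Q|=O(|\E|)$.

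It then remains to translate classes of states into values of $\psi$. By Lemma~\ref{lem exp egal si lang eq}, for any two nodes $\nu_1,\nu_2$ we have $\E_{\nu_1}=\E_{\nu_2}$ iff $\mathcal{L}(\mathcal{A}_{T_{\E_{\nu_1}}})=\mathcal{L}(\mathcal{A}_{T_{\E_{\nu_2}}})$, and since $\mathcal{A}_{T_{\E_\nu}}$ is exactly the accessible part of $\mathcal{A}_{T_\E}$ with $\nu$ taken as initial state, this language equality is precisely Myhill--Nerode equivalence of $\nu_1$ and $\nu_2$ in $\mathcal{A}_{T_\E}$. Hence the class identifier of a node $\nu$ depends only on the subexpression $\E_\nu$, so setting $\psi(\E_\nu)$ to be this identifier yields a bijection from the set of subexpressions of $\E$ onto a subset of $\mathbb{N}$ of cardinality at most the number of subexpressions, as required by the definition of an encoding. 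Concretely, the value $\psi(\E_\nu)$ for every node $\nu$ is obtained by a single traversal of $T_\E$ that reads the precomputed class identifiers, costing $O(|\E|)$ in all.

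The delicate point — the only place where one must argue rather than merely invoke Revuz — is the linear bound on the size of $\mathcal{A}_{T_\E}$ and the compatibility of Revuz's algorithm with the nominally unbounded alphabet $\Sigma_{\mathcal{A}}$: one has to check that each state has out-degree bounded by its local arity, so that $|\delta|=O(|\E|)$, and that the relabelling step keeps the radix sorts within a linear budget. Everything else is a direct consequence of Lemma~\ref{lem exp egal si lang eq} and of the linear complexity of Revuz's algorithm on acyclic deterministic automata.
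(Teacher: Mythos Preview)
Your argument is correct and follows essentially the same route as the paper: build the acyclic deterministic automaton $\mathcal{A}_{T_\E}$, invoke Lemma~\ref{lem exp egal si lang eq} to identify the relation $\sim$ with Myhill--Nerode equivalence, and apply Revuz's algorithm to compute the classes in $O(|\E|)$, letting the class identifiers serve as $\psi$-values. The paper's proof is considerably terser and omits the implementation details you spell out (the $O(|\E|)$ bound on the number of transitions and the alphabet-relabelling needed to keep the radix sorts linear), but the underlying strategy is identical.
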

\begin{proof}
  Let $\nu_1$ and $\nu_2$ be two nodes in $T_E$. As $\nu_1 \sim \nu_2 \Leftrightarrow \E_{\nu_1}=\E_{\nu_2}\Leftrightarrow \psi(\E_{\nu_1})=\psi(\E_{\nu_2})$, we can associate to each 
node $\nu$ in $T_{\E}$ (each $\E_\nu$) a symbol ($\psi(\E_\nu)$) which uniquely identifies its equivalence class $[\nu]_{\sim}$. Furthermore, according to Lemma~\ref{lem exp egal si lang eq}, the computation of the equivalence relation $\sim$ amounts to the minimization of the word acyclic deterministic automaton ${\cal A}_{T_{_{\E}}}$, which can be performed in $O(|\E|)$ using Revuz Algorithm~\cite{revuz}. 
 \qed
\end{proof}

\noindent \begin{minipage}{0.45\linewidth}
  \begin{figure}[H]
	 \centerline{
	  \begin{tikzpicture}[node distance=1cm,bend angle=30,transform shape,scale=0.75]
	    \node[state,initial above] (1) {$\cdot_a$};     
	    \node[state,below left of=1] (2) {$*_a$};    
	    \node[state,below of=2] (4) {$+$};         
	    \node[state,below left of=4,node distance=1.5cm] (6) {$f_1$};   	    \node[state,below left of=6] (8) {$a$};     
	    \node[state,below right of=6] (9) {$a$};      
	    \node[state,below right of=4,node distance=1.5cm] (7) {$f_2$};  	    \node[state,below left of=7] (10) {$a$};    
	    \node[state,below right of=7] (11) {$a$};  
	    \node[state,right of=11] (5) {$b$};           
	    \node[state,above of=5,node distance=0.75cm] (3) {$h_3$};
	    \node[accepting,state,below of=4,node distance=3cm] (12) {$\nu_T$};        
	    \path[->]
	      (1) edge node[above left,pos=0.4] {$g_{\cdot_a}$} (2)
	      (1) edge node[above right,pos=0.4] {$d_{\cdot_a}$} (3)
	      (2) edge node[left,pos=0.4] {$*_a$} (4)
	      (3) edge node[right,pos=0.4] {$h^1$} (5)
	      (4) edge node[above left,pos=0.4] {$g_+$} (6)
	      (4) edge node[above right,pos=0.4] {$d_+$} (7)
	      (6) edge node[above left,pos=0.4] {$f^1$} (8)
	      (6) edge node[above right,pos=0.4] {$f^2$} (9)
	      (7) edge node[above left,pos=0.4] {$f^1$} (10)
	      (7) edge node[above right,pos=0.4] {$f^2$} (11)
	      (8) edge node[left,pos=0.4] {$a$} (12)
	      (9) edge node[left,pos=0.4] {$a$} (12)
	      (10) edge node[left,pos=0.4] {$a$} (12)
	      (11) edge node[left,pos=0.4] {$a$} (12)
	      (5) edge node[below right,pos=0.4] {$b$} (12)
	    ;
	  \end{tikzpicture}
	 }
	  \caption{The automaton ${\cal A}_{T_{_{\E}}}$ .}
	  \label{fig a t e}
	\end{figure}
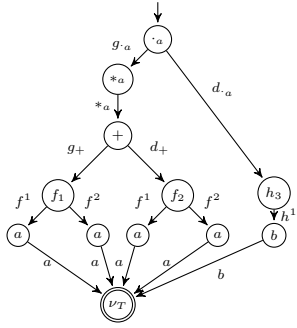
\end{minipage}
\hfill
\begin{minipage}{0.45\linewidth}
	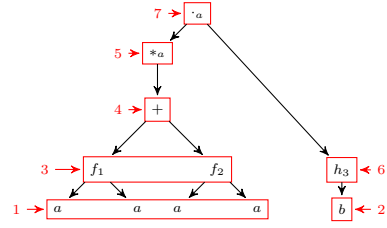
\begin{figure}[H]
	 \centerline{
	  \begin{tikzpicture}[node distance=1cm,bend angle=30,transform shape,scale=0.75]
	    \node (1) {$\cdot_a$};     
	    \node[below left of=1] (2) {$*_a$};    
	    \node[below of=2] (4) {$+$};         
	    \node[below left of=4,node distance=1.5cm] (6) {$f_1$};   
	    \node[below left of=6] (8) {$a$};     
	    \node[below right of=6] (9) {$a$};      
	    \node[below right of=4,node distance=1.5cm] (7) {$f_2$};  
	    \node[below left of=7] (10) {$a$};    
	    \node[below right of=7] (11) {$a$};  
	    \node[right  of=11, node distance=1.5cm] (5) {$b$};       
	    \node[above of=5,node distance=0.7cm] (3) {$h_3$}; 
	    \node[inner sep=0pt,draw, rectangle, fit=(8)(11), color=red] (eq1) {};
	    \node[left of=eq1,node distance=2.5cm,color=red] (eq1et) {$1$};
	    \node[inner sep=0pt,draw, rectangle, fit=(5), color=red] (eq2) {};
	    \node[right of=eq2,node distance=0.7cm,color=red] (eq2et) {$2$};
	    \node[inner sep=0pt,draw, rectangle, fit=(6)(7), color=red] (eq3) {};
	    \node[left of=eq3,node distance=2cm,color=red] (eq3et) {$3$};
	    \node[inner sep=0pt,draw, rectangle, fit=(3), color=red] (eq4) {};
	    \node[right of=eq4,node distance=0.7cm,color=red] (eq4et) {$6$};
	    \node[inner sep=0pt,draw, rectangle, fit=(4), color=red] (eq5) {};
	    \node[left of=eq5,node distance=0.7cm,color=red] (eq5et) {$4$};
	    \node[inner sep=0pt,draw, rectangle, fit=(2), color=red] (eq6) {};
	    \node[left of=eq6,node distance=0.7cm,color=red] (eq6et) {$5$};
	    \node[inner sep=0pt,draw, rectangle, fit=(1), color=red] (eq7) {};
	    \node[left of=eq7,node distance=0.7cm,color=red] (eq7et) {$7$};
	    \path[->,color=red]
	      (eq1et) edge (eq1)
	      (eq2et) edge (eq2)
	      (eq3et) edge (eq3)
	      (eq4et) edge (eq4)
	      (eq5et) edge (eq5)
	      (eq6et) edge (eq6)
	      (eq7et) edge (eq7)
	    ;
	    \path[->]
	      (1) edge (2)
	      (1) edge (3)
	      (2) edge (4)
	      (3) edge (5)
	      (4) edge (6)
	      (4) edge (7)
	      (6) edge (8)
	      (6) edge (9)
	      (7) edge (10)
	      (7) edge (11)
	    ;
	  \end{tikzpicture}
	 }
	  \caption{The Equivalence Classes.}
	  \label{fig es sim}
	\end{figure}
\end{minipage}

\begin{example}
  Let us consider the regular expression  $\E=(f(a,a)+f(a,a))^{*_a}\cdot_a h(b)$ of the Example~\ref{ex calc c cont}. Applying Myhill-Nerode equivalence~\cite{Nerode} to the states of the automaton ${\cal A}_{T_{_{\E}}}$ (Figure~\ref{fig a t e}) results in $7$ equivalence classes labeled by $\Psi=\{1,2,\ldots,7\}$. For example $\psi(f(a,a))=3$ and $\psi(\E)=7$ (Figure~\ref{fig es sim}). Finally, $l_{f^1_1}(\E)=1\cdot_a 6\cdot_a 5$. 
\end{example}

Recall that the $k$-Pseudo-Continuation identification can be achieved in $O(|\E|^2)$ \cite{ZPC2,automate2} using Paige and Tarjan's sorting algorithm~\cite{tarjan}. In what follows we show that this step amounts to the minimization of the acyclic deterministic word automaton ${\cal B}_{T_{_{\b\E}}} =(Q_{\cal B},\Sigma_{\cal B},\{\nu_T\},\{\nu_{\b\E}\},\delta_{\cal B})$ defined  with $\nu_T\notin\mathrm{Nodes}(\b\E)$ and $\mathfrak{F}=\{f^k_j \mid 1\leq k \leq m, f_j\in \Po{\E}{\E}_m\}$ by: 
\begin{itemize}
\item $Q_{\cal B}=(\mathrm{Nodes}(\b\E)\setminus\Sigma_0)\cup \mathfrak{F} \cup\{\nu_T,\bot\}$,
\item $\Sigma_{\cal B}=\{\psi(\nu)\mid \nu\in\mathrm{Nodes}(\b\E)\cap Q_{\cal B}\}\cup\mathfrak{F} \cup\{\cdot_a\mid a\in\Sigma_0\}\cup\{\varepsilon\}$,
\item $\delta_{\cal B}$ is defined as follows:
	\begin{itemize}
	\item $\delta(\nu_T,f^i_j)=f^i_j$ for all  $f^i_j\in \mathfrak{F}$,
	\item $\delta(f^k_j,\psi^\prime(h(\E_{\nu_k})))=f_j$ if $\nu_k$ is the $k^{th}$ child of $f_j$,
	\item $\delta(\nu,\cdot_a\psi(\E_{\gamma(\nu)}))=\mathrm{father}(\nu)$ if $\mathrm{sym(father(\nu))}\in\{\cdot_a,*_a\}$ and $\gamma(\nu)\neq\bot$,
	\item $\delta(\nu,\varepsilon)=\mathrm{father}(\nu)$ and if $\gamma(\nu)=\bot$ and $\delta(\nu,x)=\bot$ in all otherwise.  
	 \end{itemize}
\end{itemize}     

\begin{proposition}\label{prop lang B-T-E}
$\mathcal{L}(\mathcal{B}_{T_{\overline{\E}}})=\{f_j^k \cdot l_{f^k_j}(\E)\mid f_j\in\Po{\E}{\E}_m, k\leq m \}$
\end{proposition}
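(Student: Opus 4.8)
The plan is to read the accepting computations of $\mathcal{B}_{T_{\b\E}}$ off the definition of $\delta_{\cal B}$ directly. The only transitions leaving the initial state $\nu_T$ are the $\delta(\nu_T,f^i_j)=f^i_j$ with $f^i_j\in\mathfrak{F}$, so every word accepted by $\mathcal{B}_{T_{\b\E}}$ begins with some letter $f^k_j\in\mathfrak{F}$ and, after it, the (deterministic) computation sits in the state $f^k_j$. Hence it is enough to show that for each $f_j\in\Po{\E}{\E}_m$ and each $k\le m$ the unique computation starting with $f^k_j$ is accepting and spells exactly $f^k_j\cdot l_{f^k_j}(\E)$; conversely such a computation must exist for every pair $(f_j,k)$.

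First I would describe that computation. From state $f^k_j$ the only available transition is $\delta(f^k_j,\psi'(h(\E_{\nu_0})))=\nu_{f_j}$ with $\nu_0=k\mbox{-}\mathrm{child}(\nu_{f_j})$, which is a legal state since $\mathrm{sym}(\nu_{f_j})=f_j\notin\Sigma_0$. From $\nu_{f_j}$ the computation then climbs the father–chain of $T_{\b\E}$: at each node $\nu$ met before the root it takes $\delta(\nu,\cdot_{op(\nu)}\psi(\E_{\gamma(\nu)}))=\mathrm{father}(\nu)$ when $\gamma(\nu)\neq\bot$ and $\delta(\nu,\varepsilon)=\mathrm{father}(\nu)$ when $\gamma(\nu)=\bot$, until it reaches the root $\nu_{\b\E}$, the unique final state. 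So the computation is accepting, and, discarding the $\varepsilon$'s, its label is
$$f^k_j\cdot\psi'(h(\E_{\nu_0}))\cdot_{op(\nu_1)}\psi(\E_{\gamma(\nu_1)})\cdots\cdot_{op(\nu_m)}\psi(\E_{\gamma(\nu_m)}),$$
where $\nu_1\preccurlyeq\cdots\preccurlyeq\nu_m$ are, in increasing order, the ancestors $\nu$ of $\nu_{f_j}$ (possibly $\nu_{f_j}$ itself) with $\gamma(\nu)\neq\bot$, i.e. exactly the elements of $\Gamma_{\nu_{f_j}}(\b\E)$.

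It then remains to recognise this word as $f^k_j\cdot l_{f^k_j}(\E)$. By Proposition~\ref{b}, with the same $\nu_0$ and the same ordered set $\Gamma_{\nu_{f_j}}(\b\E)=\{\nu_1,\dots,\nu_m\}$, one has $C_{f^k_j}(\b\E)=((\cdots(\E_{\nu_0}\cdot_{op(\nu_1)}\E_{\gamma(\nu_1)})\cdots)\cdot_{op(\nu_m)}\E_{\gamma(\nu_m)})$. Applying $h$, which commutes with every $\cdot_c$, turns this into the left-associated product of $h(\E_{\nu_0})$ with the expressions $h(\E_{\gamma(\nu_i)})$, each of which is a subexpression of $\E$. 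Unfolding the $(\E,\Psi)$-encoding $\psi'$ on this product by Definition~\ref{def2}: at every concatenation the right factor is a subexpression of $\E$ (the $h(\E_{\gamma(\nu_i)})$, and inside $h(\E_{\nu_0})$ the right factors of its own inner products), so the first branch of Definition~\ref{def2} applies throughout and one obtains $l_{f^k_j}(\E)=\psi'(h(C_{f^k_j}(\b\E)))=\psi'(h(\E_{\nu_0}))\cdot_{op(\nu_1)}\psi(h(\E_{\gamma(\nu_1)}))\cdots\cdot_{op(\nu_m)}\psi(h(\E_{\gamma(\nu_m)}))$. Since in $\mathcal{B}_{T_{\b\E}}$ the letter $\psi(\E_{\gamma(\nu)})$ stands for $\psi(h(\E_{\gamma(\nu)}))$ ($\psi$ being defined on the subexpressions of $\E$), the displayed label equals $f^k_j\cdot l_{f^k_j}(\E)$. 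As the first transition of any accepting run is forced to be one of the $\nu_T\to f^k_j$ and the rest is deterministic, this yields $\mathcal{L}(\mathcal{B}_{T_{\b\E}})=\{f^k_j\cdot l_{f^k_j}(\E)\mid f_j\in\Po{\E}{\E}_m,\ k\le m\}$.

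I expect the delicate point to be matching the order in which the father–chain of $\mathcal{B}_{T_{\b\E}}$ is traversed from $\nu_{f_j}$ with the $\preccurlyeq$-increasing enumeration of $\Gamma_{\nu_{f_j}}(\b\E)$ used by Proposition~\ref{b}, and verifying that the $\varepsilon$-transitions (taken precisely at the ancestors with $\gamma=\bot$) contribute nothing, so that the concatenation of the transition labels reproduces the left-associated product of Proposition~\ref{b} verbatim. The second point needing care is the compatibility of $\psi'$ with that product, namely that one never has to encode a proper intermediate product whose right factor fails to be a subexpression of $\E$; this is exactly what guarantees that the first case of Definition~\ref{def2} is invoked at each concatenation node, and hence that the encoding of the whole product splits letter-by-letter as used above.
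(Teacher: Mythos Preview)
Your proposal is correct and follows essentially the same approach as the paper: read off the unique path from $f^k_j$ to the root $\nu_{\b\E}$ directly from the transition function, then identify its label with $l_{f^k_j}(\E)$ via Proposition~\ref{b} and the unfolding of the $(\E,\Psi)$-encoding $\psi'$. Your write-up is in fact more careful than the paper's own argument---you make explicit the forced first move out of $\nu_T$, the role of the $\varepsilon$-transitions at ancestors with $\gamma=\bot$, and the implicit application of $h$ needed so that $\psi$ and $\psi'$ are applied to subexpressions of $\E$ rather than of $\b\E$---whereas the paper simply states the label of the path ``by construction'' and asserts it equals $l_{f^k_j}(\E)$.
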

\begin{proof}
  By construction of $\mathcal{B}_{T_{\overline{\E}}}$, there exists a path from any state $f_j^k$ with $f_j\in\Po{\E}{\E}_m$ and $1\leq k\leq m$ to the root of $E$ labelled by $ \psi'(\E_{\nu_0}) \cdot_{op(\nu_1)} \psi(\E_{\gamma(\nu_1)})\cdots \cdot_{op(\nu_{m})} \psi( \E_{\gamma(\nu_m)})$ where $\nu_{f_j}$ is the node of $T_{\b\E}$ labelled by $f_j$, $\nu_0$ is the $k\mbox{-}\mathrm{child}(\nu_{f})$, $\Gamma_{\nu_{f_j}}(\E)=\{\nu_1, \dots,\nu_m \}$ and for $1\leq i\leq m,op(\nu_i)=c$ such that $\mathrm{sym}(\mathrm{father}(\nu_i)) \in \{\cdot_c,{*_c}\}$. This word exactly corresponds to the word $\psi'(h(C_{f^k_j}(\b\E)))=l_{f^k_j}(\E)$.
 \qed
\end{proof} 

Let $f_j$ and $g_i$ be two positions in $\Po{\E}{\E}$.  As a direct consequence of Proposition~\ref{prop lang B-T-E}, $C_{f^k_j}(\b\E)\sim_e C_{g^p_i}(\b\E)$ if and only if the states $f^k_j$ and $g^p_i$ of $\mathcal{B}_{T_{\overline{\E}}}$ are equivalent. We eliminate the $\varepsilon$-transitions from the automaton ${{\cal B}_{T_{_{\b\E}}}}$. Since it has no $\varepsilon$-transitions cycles, this elimination can be performed in a linear time in the size of $\E$. Hence, we obtain a more compacted but equivalent structure, which we denote by $\varepsilon\mbox{-}\mathrm{free}({{\cal B}_{T_{_{\b\E}}}})$.

\noindent \begin{minipage}{0.45\linewidth}
  \begin{figure}[H]
	 \centerline{
	  \begin{tikzpicture}[node distance=1cm,bend angle=30,transform shape,scale=0.75]
	    \node[state, initial below] (nut) {$\nu_T$};  
	    \node[state, above of=nut,node distance=2cm] (f22) {$f^2_2$};   	    \node[state, above left of=f22,node distance=1.5cm] (f2) {$f_2$};  
	    \node[state, below left of=f2,node distance=1.5cm] (f21) {$f^1_2$};   
	    \node[state, above left of=f2,node distance=2cm] (+) {$+$}; 
	    \node[state, below left of=+,node distance=2cm] (f1) {$f_1$}; 
	    \node[state, below right of=f1,node distance=1.5cm] (f12) {$f^2_1$}; 
	    \node[state, below left of=f1,node distance=1.5cm] (f11) {$f^1_1$}; 
	    \node[state, above of=+] (*a) {$*_a$}; 
	    \node[state, above of=nut,accepting,node distance=6.5cm] (pa) {$\cdot_a$};  
	    \node[state, right of=f22] (h31) {$h_3^1$}; 
	    \node[state, above of=h31] (h3) {$h_3$}; 
	    \path[->]
	      (nut) edge node[right] {$f^2_2$} (f22)
	      (nut) edge node[above right] {$f^1_2$} (f21)
	      (nut) edge[bend left] node[below left] {$f^2_1$} (f12)
	      (nut) edge[bend left] node[below left] {$f^1_1$} (f11)
	      (nut) edge[bend right] node[right] {$h_3^1$} (h31)
	      (h31) edge node[right] {$2$} (h3)
	      (f22) edge node[above right] {$1$} (f2)
	      (f21) edge node[above left] {$1$} (f2)
	      (f12) edge node[above right] {$1$} (f1)
	      (f11) edge node[above left] {$1$} (f1)
	      (f1) edge node[above left] {$\varepsilon$} (+)
	      (f2) edge node[above right] {$\varepsilon$} (+)
	      (+) edge node[left] {$\cdot_a 5$} (*a)
	      (*a) edge node[above left] {$\cdot_a 6$} (pa)
	      (h3) edge node[right] {$\varepsilon$} (pa)
	    ;
	  \end{tikzpicture}
	 }
	  \caption{The automaton ${\cal B}_{T_{_{\b\E}}}$.}
	  \label{fig aut btbe}
	\end{figure}
\end{minipage}
\hfil
\begin{minipage}{0.45\linewidth}
  \begin{figure}[H]
	 \centerline{
	  \begin{tikzpicture}[node distance=1cm,bend angle=30,transform shape,scale=0.75]
	    \node[state, initial below] (nut) {$\nu_T$};  
	    \node[state, above of=nut,node distance=2cm] (f22) {$f^2_2$};   	    \node[above left of=f22,node distance=1.5cm] (f2) {};  
	    \node[state, below left of=f2,node distance=1.5cm] (f21) {$f^1_2$};   
	    \node[state, above left of=f2,node distance=2cm] (+) {$+$}; 
	    \node[below left of=+,node distance=2cm] (f1) {}; 
	    \node[state, below right of=f1,node distance=1.5cm] (f12) {$f^2_1$}; 
	    \node[state, below left of=f1,node distance=1.5cm] (f11) {$f^1_1$}; 
	    \node[state, above of=+] (*a) {$*_a$}; 
	    \node[state, above of=nut,accepting,node distance=6.5cm] (pa) {$\cdot_a$};  
	    \node[state, right of=f22] (h31) {$h_3^1$}; 
	    \node[above of=h31] (h3) {}; 
	    \path[->]
	      (nut) edge node[right] {$f^2_2$} (f22)
	      (nut) edge node[above right] {$f^1_2$} (f21)
	      (nut) edge[bend left] node[below left] {$f^2_1$} (f12)
	      (nut) edge[bend left] node[below left] {$f^1_1$} (f11)
	      (nut) edge[bend right] node[right] {$h_3^1$} (h31)
	      (h31) edge node[right] {$2$} (pa)
	      (f22) edge node[above right] {$1$} (+)
	      (f21) edge node[right] {$1$} (+)
	      (f12) edge node[left] {$1$} (+)
	      (f11) edge node[above left] {$1$} (+)
	      (+) edge node[left] {$\cdot_a 5$} (*a)
	      (*a) edge node[above left] {$\cdot_a 6$} (pa)
	    ;
	  \end{tikzpicture}
	 }
	  \caption{The automaton $\varepsilon\mbox{-}\mathrm{free}({{\cal B}_{T_{_{\b\E}}}})$.}
	  \label{fig aut eps free btbe}
	\end{figure}
\end{minipage}

The computation of the equivalence relation $\sim_e$ can be performed by the computation of Myhill-Nerode relation~\cite{Nerode} on the states of the automaton $\varepsilon\mbox{-}\mathrm{free}({{\cal B}_{T_{_{\b\E}}}})$. This automaton is deterministic and acyclic.

\begin{theorem}\label{theo sim e temps lin}
The  relation $\sim_e$ can be computed in $O(|\E|)$ time complexity.
\end{theorem}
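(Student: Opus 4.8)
The plan is to assemble the pieces already established in the section and to account for every step of the construction, checking that each costs at most $O(|\E|)$ in time and space. First I would recall that the size of the syntax tree $T_{\b\E}$ is $O(|\E|)$, hence so is $\mathrm{Nodes}(\b\E)$, and that the set $\mathfrak{F}=\{f^k_j\mid 1\leq k\leq m,\ f_j\in\Po{\E}{\E}_m\}$ has cardinality $\sum_{f_j\in\Po{\E}{\E}_m} m\leq |\E|$ since each $f^k_j$ corresponds to a distinct child-edge of $T_{\b\E}$ below a non-constant position. Thus the state set $Q_{\cal B}$ of ${\cal B}_{T_{_{\b\E}}}$ has size $O(|\E|)$.

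The next step is to bound the total size of ${\cal B}_{T_{_{\b\E}}}$, i.e. the total length of all the transition labels. The labels are of three kinds: single symbols $f^i_j$, single symbols $\psi(\E_{\gamma(\nu)})$ preceded by a $\cdot_a$, or the label $\varepsilon$; all these have constant length, \emph{except} the labels $\psi^\prime(h(\E_{\nu_k}))$ on the transitions $\delta(f^k_j,\psi^\prime(h(\E_{\nu_k})))=f_j$. For those I would invoke Proposition~\ref{prop long pseudo cont lin}(2): the sum $\sum_{f_j\in\Po{\E}{\E}_n,\,1\leq k\leq n}|\psi^\prime(\E_{\nu_{f^k_j}})|$ is linear in $|\E|$, so the total length of all transition labels of ${\cal B}_{T_{_{\b\E}}}$ is $O(|\E|)$. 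The $\gamma$-links and the ZPC-structure on which ${\cal B}_{T_{_{\b\E}}}$ is built can be produced in $O(|\E|)$ time; by Lemma~\ref{lemma5}, the values $\psi(\f)$ needed for the labels are computable in $O(|\E|)$ time and space as well. Hence ${\cal B}_{T_{_{\b\E}}}$ itself is built in $O(|\E|)$ time and space.

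Then I would dispose of the $\varepsilon$-transitions: since ${\cal B}_{T_{_{\b\E}}}$ is acyclic (the $\gamma$-links and child-edges strictly move within the syntax tree, and the added transitions all point ``upward'' toward $\nu_{\b\E}$), it has no $\varepsilon$-cycles, so the $\varepsilon$-elimination is a single linear-time pass producing $\varepsilon\mbox{-}\mathrm{free}({\cal B}_{T_{_{\b\E}}})$, which is still deterministic and acyclic and whose size is still $O(|\E|)$ (the label concatenations performed during elimination only merge labels that lay on an $\varepsilon$-path, so no blow-up occurs). Finally, by Proposition~\ref{prop lang B-T-E} and Proposition~\ref{prop1}, two states $f^k_j$ and $g^p_i$ are Myhill--Nerode equivalent in this automaton if and only if $l_{f^k_j}(\E)=l_{g^p_i}(\E)$, i.e. if and only if $C_{f^k_j}(\b\E)\sim_e C_{g^p_i}(\b\E)$; so computing $\sim_e$ reduces to computing the Myhill--Nerode equivalence on a deterministic acyclic word automaton of size $O(|\E|)$, which Revuz's algorithm~\cite{revuz} does in $O(|\E|)$ time and space. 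Summing the costs of all steps gives the claimed $O(|\E|)$ bound.

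The main obstacle is the second step: one must be careful that the transition labels $\psi^\prime(h(\E_{\nu_k}))$, which can individually be long, do not make the automaton super-linear in total, and that the $\varepsilon$-elimination does not reintroduce a blow-up by concatenating many long labels along a path. Both points are handled by Proposition~\ref{prop long pseudo cont lin}(2) together with the observation that the concatenation operators counted by $\mathrm{nbdot}$ below distinct nodes $\nu_{f^k_j}$ are disjoint, so every $\cdot_a$-symbol of $T_{\b\E}$ is charged to at most one such label; and in the $\varepsilon$-free automaton each such ``long'' label still labels exactly one transition, now starting at $f^k_j$, so no duplication arises. Once this accounting is in place, the rest is a straightforward chaining of the already-proved linear-time subroutines.
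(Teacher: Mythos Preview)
Your proposal is correct and follows essentially the same approach as the paper: reduce the computation of $\sim_e$ to the Myhill--Nerode equivalence on the deterministic acyclic automaton $\varepsilon\mbox{-}\mathrm{free}({\cal B}_{T_{_{\b\E}}})$, invoke Proposition~\ref{prop long pseudo cont lin} to bound its size by $O(|\E|)$, and conclude with Revuz's algorithm. Your write-up is in fact more detailed than the paper's three-line proof---you make explicit the accounting for the state set, the transition-label lengths (correctly isolating part~(2) of Proposition~\ref{prop long pseudo cont lin} as the key ingredient), the cost of building the ZPC-structure and the $\psi$-values, and the $\varepsilon$-elimination step---whereas the paper simply asserts that the automaton is deterministic, acyclic, and of linear size, citing Proposition~\ref{prop long pseudo cont lin} globally.
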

\begin{proof}
 
 The equivalence relation $\sim_e$ coincides with Myhill-Nerode equivalence~\cite{Nerode} on the states of the automaton 
 $\varepsilon\mbox{-}\mathrm{free}({{\cal B}_{T_{_{\b\E}}}})$. 
 
 This automaton is deterministic and acyclic and its size is linear with respect to $|\E|$ (Proposition~\ref{prop long pseudo cont lin}). That can be computed in $O(|\E|)$ time and space complexity using Revuz Algorithm~\cite{revuz}. 
  \qed
\end{proof}

\noindent \begin{minipage}{0.55\linewidth}
	\begin{example}
	  Let us consider the regular expression  $\E=(f(a,a)+f(a,a))^{*_a}\cdot_a h(b)$ of Example~\ref{ex calc c cont}.  The automaton ${\cal B}_{T_{_{\b\E}}}$ is represented by Figure~\ref{fig aut btbe}. The automaton $\varepsilon\mbox{-}\mathrm{free}({{\cal B}_{T_{_{\b\E}}}})$ is represented in Figure~\ref{fig aut eps free btbe}. Applying Myhill-Nerode equivalence to the automaton $\varepsilon\mbox{-}\mathrm{free}({{\cal B}_{T_{_{\b\E}}}})$ results in the automaton in Figure~\ref{fig min aut eps free btbe}. We deduce from this automaton that $C_{f^1_1}(\b\E)\sim_e C_{f^2_1}(\b\E)\sim_e C_{f^1_2}(\b\E)\sim_e C_{f^2_2}(\b\E)$. Consequently the set of states of ${\cal C_{\E}}\diagup_{\sim_e}$ is $\{[C_{\varepsilon^1}(\b E)],[C_{f^1_1}(\b\E)],[C_{h^1_3}(\E)]\}$.
	\end{example}
\end{minipage}
\hfill
\begin{minipage}{0.4\linewidth}
  \begin{figure}[H]
	 \centerline{
	  \begin{tikzpicture}[node distance=1cm,bend angle=30,transform shape,scale=0.75]
	    \node[state, initial below] (nut) {$\nu_T$};
	    \node[state, above left of=nut,rounded rectangle,node distance=2.5cm] (f) {$\{f_1^1,f_1^2,f_2^1,f_2^2\}$};
	    \node[state, above of=f] (+) {$+$};
	    \node[state, above of=+] (*a) {$*_a$};
	    \node[state, above of=nut, node distance=4.5cm] (pa) {$\cdot_a$};
	    \node[state, above right of=nut,rounded rectangle,node distance=2.5cm] (h) {$\{h_3^1\}$};
	    \path[->]
	      (nut) edge[bend left] node[below left] {$f_1^1,f_1^2,f_2^1,f_2^2$} (f)
	      (nut) edge[bend right] node[below right] {$h_3^1$} (h)
	      (f) edge node[left] {$1$} (+)
	      (+) edge node[left] {$\cdot_a 5$} (*a)
	      (*a) edge node[above left] {$\cdot_a 6$} (pa)
	      (h) edge[bend right] node[above right] {$2$} (pa)
	    ;
	  \end{tikzpicture}
	 }
	  \caption{The Minimal Automaton of $\varepsilon\mbox{-}\mathrm{free}({{\cal B}_{T_{_{\b\E}}}})$.}
	  \label{fig min aut eps free btbe}
	\end{figure}
\end{minipage}

\subsection{Computation of the set of transition rules}

Using Proposition~\ref{prop2} and Proposition\ref{prop equiv g-1 first}, we can show that the computation of the set of transitions of the equation tree automaton is performed by computing the function $\mathrm{Follow}$. The computation of a transition rule using Proposition~\ref{prop2} requires a linear time, according to Proposition~\ref{prop tps lin pour follow}. Then for all transition rules we get an $O(|Q\diagup_{\sim_e}| \times |\E|)$ time and space complexity where $Q$ is the set of $k-$C-Continuations of $\b\E$. 
The computation of the set of states $Q_{\b{\cal C}}\diagup_{\sim_e}$ make possible the creation of non-coaccessible states. Removing these states requires an $O(|Q_{\b{\cal C}}\diagup_{\sim_e}|\cdot|\E|)$ time complexity.

\begin{theorem}\label{theo eq tps cons}
 The equation tree automaton ${\cal A}_{\E}$ of  $E$ can be computed in $O(|Q|\cdot|\E|)$ time and space complexity with $Q$ the set of states of ${\cal A}_{\E}$ . 
\end{theorem}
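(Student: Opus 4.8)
The plan is to decompose the construction into three stages: (i) computing the set $Q_{\b{\cal C}}\diagup_{\sim_e}$ of states of ${\cal C_{\E}}\diagup_{\sim_e}$ together with the class $[C_{f^k_j}(\b\E)]$ of each $k$-C-continuation; (ii) computing all the sets $\Follow(\E,f_j,k)$, so as to identify the coaccessible classes via Lemma~\ref{lem}; and (iii) emitting, for each coaccessible class \emph{only}, its transition rules and keeping exactly those classes. By Proposition~\ref{prop coacc part quot eq} the automaton so obtained is isomorphic to ${\cal A_{\E}}$, so the construction is correct; it remains to bound the cost of each stage.

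Stage~(i) runs in $O(|\E|)$: the syntax tree $T_{\b\E}$ and the ZPC-Structure are built in $O(|\E|)$; by Lemma~\ref{lemma5} the encoding $\psi$ of all subexpressions of $\E$ is obtained in $O(|\E|)$ by minimising ${\cal A}_{T_{\E}}$; and, $\psi$ being available, the automaton $\varepsilon\mbox{-}\mathrm{free}({\cal B}_{T_{\b\E}})$ is deterministic, acyclic and of size $O(|\E|)$ (Proposition~\ref{prop long pseudo cont lin}), so Theorem~\ref{theo sim e temps lin} produces $\sim_e$, hence $Q_{\b{\cal C}}\diagup_{\sim_e}$ and the map $(f_j,k)\mapsto[C_{f^k_j}(\b\E)]$, in $O(|\E|)$ time and space. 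Stage~(ii) also runs in $O(|\E|)$ by Proposition~\ref{prop tps lin pour follow}; moreover, by Lemma~\ref{lem}, $\Follow(\E,f_j,k)\neq\emptyset$ iff $C_{f^k_j}(\b\E)$ is coaccessible in ${\cal C_{\E}}$, so the $\sim_e$-classes of these continuations, together with the final class $[C_{\varepsilon^1}(\b\E)]$, are exactly the coaccessible classes of ${\cal C_{\E}}\diagup_{\sim_e}$; by Proposition~\ref{prop coacc part quot eq} there are precisely $|Q|$ of them.

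For stage~(iii), the idea is to fix a coaccessible class and a representative $C_{f^k_j}(\b\E)$ of it with $\Follow(\E,f_j,k)\neq\emptyset$ (the final class being handled the same way, reading off $\First(\E)$). By Propositions~\ref{prop2} and~\ref{prop equiv g-1 first} and Lemma~\ref{lem f cgk cf}, the set $\Follow(\E,f_j,k)=\First(C_{f^k_j}(\b\E))$ is exactly the set of symbols $g_i$ labelling an outgoing transition of $C_{f^k_j}(\b\E)$, with target $(C_{g^1_i}(\b\E),\dots,C_{g^m_i}(\b\E))$, while its constant elements give the leaf rules; translating through $h$ and through the class map of stage~(i) yields the rules $([C_{f^k_j}(\b\E)],h(g_i),[C_{g^1_i}(\b\E)],\dots,[C_{g^m_i}(\b\E)])$ and $([C_{f^k_j}(\b\E)],c)$ of ${\cal A_{\E}}$ (Definition~\ref{def aut eq}). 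Since $\Follow(\E,f_j,k)\subseteq\Po{\E}{\E}$, the adjacency list of one class has total size at most $\sum_{g_i\in\Po{\E}{\E}_{\geq 1}}\mathrm{ar}(g_i)+|\Sigma_0|=O(|\E|)$ and is produced in $O(|\E|)$ time; summed over the $|Q|$ coaccessible classes, this is $O(|Q|\cdot|\E|)$.

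Adding the three stages gives $O(|\E|)+O(|\E|)+O(|Q|\cdot|\E|)=O(|\E|+|Q|\cdot|\E|)=O(|Q|\cdot|\E|)$, since $|Q|\geq1$. I expect the main obstacle to be the argument behind stages~(ii)--(iii): one must check that it is enough to emit transitions from the coaccessible classes, whose number is $|Q|$ rather than the possibly much larger $|Q_{\b{\cal C}}\diagup_{\sim_e}|$ --- this holds because in a tree automaton the targets of any transition leaving a coaccessible state are again coaccessible, which together with Lemma~\ref{lem} and Proposition~\ref{prop coacc part quot eq} shows that the emitted rules are exactly those of ${\cal A_{\E}}$ --- and that the whole adjacency list of a state is produced in linear time \emph{without} ever materialising a $k$-C-continuation, whose size may be quadratic in $|\E|$ by Corollary~\ref{corol}; this is precisely why the $\psi$-labels, the $k$-Pseudo-Continuations, and the $\Gamma$-chains of Proposition~\ref{b} are used throughout in place of the continuations themselves.
\qed
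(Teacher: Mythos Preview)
Your overall strategy is sound and, in fact, sharper than the paper's own argument. The paper does \emph{not} filter out non-coaccessible classes before emitting transitions: it computes $\sim_e$ in $O(|\E|)$ (Theorem~\ref{theo sim e temps lin}), emits the rules of \emph{every} class of $Q_{\b{\cal C}}\diagup_{\sim_e}$ via $\Follow$ and Proposition~\ref{prop2}, paying $O(|Q_{\b{\cal C}}\diagup_{\sim_e}|\cdot|\E|)$, and only then prunes non-coaccessible states. Your stages~(ii)--(iii) instead restrict transition emission to the $|Q|$ coaccessible classes, which matches the theorem's stated bound literally, whereas the paper's proof text actually establishes only $O(|Q_{\b{\cal C}}\diagup_{\sim_e}|\cdot|\E|)$.

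There is, however, a genuine gap in your justification. You write ``by Lemma~\ref{lem}, $\Follow(\E,f_j,k)\neq\emptyset$ iff $C_{f^k_j}(\b\E)$ is coaccessible in ${\cal C_{\E}}$'', but Lemma~\ref{lem} asserts only the implication $\Follow(\E,f_j,k)=\emptyset\Rightarrow C_{f^k_j}(\b\E)$ not coaccessible, i.e.\ the direction coaccessible $\Rightarrow$ $\Follow\neq\emptyset$. The converse, $\Follow(\E,f_j,k)\neq\emptyset\Rightarrow C_{f^k_j}(\b\E)$ coaccessible, is precisely what guarantees that in stage~(iii) you treat at most $|Q|$ classes rather than possibly more; without it your complexity claim does not follow. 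The converse is true but must be argued separately: if $\Follow(\E,f_j,k)\neq\emptyset$ there is $t\in\llbracket\b\E\rrbracket$ containing a subtree rooted at $f_j$; since $C_{\varepsilon^1}(\b\E)\in\Delta^*(t)$ and, by Lemma~\ref{lem f cgk cf}, every transition of $\mathcal{\b C}_\E$ on a symbol $g_i$ has target tuple $(C_{g_i^1}(\b\E),\ldots,C_{g_i^m}(\b\E))$, a top-down induction along the path in $t$ from the root to the $f_j$-node shows that $C_{f_j^k}(\b\E)$ is coaccessible from the final state. You should add this argument (or at least acknowledge that Lemma~\ref{lem} supplies only one direction) before invoking the equivalence.
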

\begin{proof}
  The equivalence relation $\sim_e$ can be computed in $O(|\E|)$ time and space complexity and the set of transition rules can be performed by 
  computing the function $\mathrm{Follow}$. The computation of a transition rule using Proposition~\ref{prop2} requires a linear time, according 
  to Proposition~\ref{prop tps lin pour follow}. Then for all transition rules we get an $O(|Q_{\b C}\diagup_{\sim_e}| \times |\E|)$ time and 
  space complexity where $Q_{\b C}$ is the set of $k-$C-Continuations of $\b\E$. Finally, removing not coaccessible states can be performed in linear time and results in the equation automaton.
  \qed
\end{proof}
\section{A Full Example}\label{sec:exemple}
Let $\E=h(h(c,b)\cdot_c a,a)\cdot_b (f(a,h(c,b))\cdot_c a+g(a))^{*_b}$ be a regular expression defined over the ranked alphabet $\Sigma$ 
such that ${\Sigma}^0=\{a,b,c\}$, $\Sigma^1=\{g\}$, $\Sigma^2=\{f,h\}$ and $\b\E=h_1(h_2(c,b)\cdot_c a,a)\cdot_b (f_3(a,h_4(c,b))\cdot_c a+g_5(a))^{*_b}$ be its linearized form.

The computation of the $k$-C-Continuations of the $\E$ using the Definition~\ref{def1} is given in Table~\ref{tab c-cont}.

\begin{table}[H]
\centerline{
    \begin{tabular}{l@{\ }l}
	    $C_{h^1_1}(\b\E)=$ & $(h_2(c,b)\cdot_c a)\cdot_b (f_3(a,h_4(c,b))\cdot_c a+g_5(a))^{*_b}$,\\
	    $C_{h^2_1}(\b\E)=$ & $a\cdot_b (f_3(a,h_4(c,b))\cdot_c a+g_5(a))^{*_b}$, \\
	    $C_{f^1_3}(\b\E)=$ & $a\cdot_c a\cdot_b (f_3(a,h_4(c,b))\cdot_c a+g_5(a))^{*_b}$,\\
		$C_{f^2_3}(\b\E)=$ & $(h_4(c,b)\cdot_c a)\cdot_b (f_3(a,h_4(c,b))\cdot_c a+g_5(a))^{*_b}$,\\ 
		$C_{h^1_2}(\b\E)=$ & $c\cdot_c a\cdot_b (f_3(a,h_4(c,b))\cdot_c a+g_5(a))^{*_b}$,\\    
		$C_{h^2_2}(\b\E)=$ & $b\cdot_c a\cdot_b (f_3(a,h_4(c,b))\cdot_c a+g_5(a))^{*_b}$,\\  
		$C_{h^1_4}(\b\E)=$ & $c\cdot_c a\cdot_b (f_3(a,h_4(c,b))\cdot_c a+g_5(a))^{*_b}$,\\ 
		$C_{h^2_4}(\b\E)=$ & $b\cdot_c a\cdot_b (f_3(a,h_4(c,b))\cdot_c a+g_5(a))^{*_b}$,\\ 
		$C_{g^1_5}(\b\E)=$ & $a\cdot_b (f_3(a,h_4(c,b))\cdot_c a+g_5(a))^{*_b}$.
   \end{tabular}
  }
  \caption{The $k$-C-Continuations of $\E$.}
  \label{tab c-cont}
\end{table}

From Table~\ref{tab c-cont}, the Follow function can be computed (Table~\ref{tabl folow}).

\begin{minipage}{0.4\linewidth}
\begin{table}[H]
  \centerline{
    \begin{tabular}{|@{\ }c@{\ }|@{\ }c@{\ }|}
      \hline
      $x^j_i$ & $Follow(E,x_i,j)$\\
      \hline
      $h^1_1$ & $\{h_2\}$\\
      $h^2_1$ & $\{a\}$\\
      $h^1_2$ & $\{a\}$\\
      $h^2_2$ & $\{g_5,f_3\}$\\
      $f^1_3$ & $\{a\}$\\
      $f^2_3$ & $\{h_4\}$\\
      $h^1_4$ & $\{a\}$\\
      $h^2_4$ & $\{g_5\}$\\
      $g^1_5$ & $\{a\}$\\
      \hline
    \end{tabular}
  }
  \caption{The function Follow.}
  \label{tabl folow}
\end{table}
\end{minipage}
\hfill
\begin{minipage}{0.55\linewidth}
 \begin{figure}[H] 
	 \centerline{
	\begin{tikzpicture}[node distance=1cm,bend angle=30,transform shape,scale=0.85]
		\node[above] (1) {$\cdot_b$};
		\node[below left of=1] (2) {$h_1$}; 
		\node[below right of=1] (3) {$*_b$}; 
		\node[below right of=2] (4) {$a$};
		\node[below left of=2] (5) {$\cdot_c$};
		\node[below right of=5] (16) {$a$};
		\node[below left of=5] (17) {$h_2$};
		\node[below left of=17] (18) {$c$}; 
        \node[below right of=17] (19) {$b$};
		\node[below right of=3] (6) {$+$};
		\node[below right of=6] (7) {$g_5$}; \node[below right of=7] (9) {$a$};
	\node[below left of=6] (8) {$\cdot_c$}; \node[below right of=8](11){$a$};
	 \node[below left of=8] (10) {$f_3$};  \node[below left of=10] (12) {$a$}; 
	 \node[below right of=10] (13) {$h_4$}; \node[below left of=13] (14) {$c$}; 
	 \node[below right of=13] (15) {$b$};
	 \path[->]
	      (1) edge node[above left,pos=0.4] {} (2)
	      (1) edge node[above right,pos=0.4] {} (3)
	      (2) edge node[left,pos=0.4] {} (4)
	      (2) edge node[left,pos=0.4] {} (5)
	      (3) edge node[right,pos=0.4] {} (6)
	      (5) edge node[above left,pos=0.4] {} (16)
	      (5) edge node[above left,pos=0.4] {} (17)
	      (6) edge node[above left,pos=0.4] {} (7)
	      (6) edge node[above right,pos=0.4] {} (8)
	      (7) edge node[above right,pos=0.4] {} (9)
	      (8) edge node[above left,pos=0.4] {} (10)
	      (8) edge node[above right,pos=0.4]{} (11)
	      (10) edge node[above right,pos=0.4] {} (12)
	      (10) edge node[above left,pos=0.4] {} (13)
	      (13) edge node[above right,pos=0.4] {} (14)
	      (13) edge node[above left,pos=0.4] {} (15)
	      (17) edge node[left,pos=0.4] {} (18)
	      (17) edge node[left,pos=0.4] {} (19)
	     ;
	      \path[->,color=blue,dotted]
	      (17) edge[bend right] (16)
	      (2) edge[bend right] (3)
	      (10) edge[bend right] (11)
	      (6) edge[bend right=55] (3)
	    ;
	     \node[inner sep=0pt,draw, fit=(1), color=blue] (eq1) {};
	    \node[left of=eq1,node distance=0.5cm,color=blue] (eq1et) {$\nu_1$};
	     \node[inner sep=0pt,draw, fit=(2), color=blue] (eq2) {};
	    \node[left of=eq2,node distance=0.5cm,color=blue] (eq2et) {$\nu_2$};
	      \node[inner sep=0pt,draw, fit=(17), color=blue] (eq3) {};
	    \node[left of=eq3,node distance=0.5cm,color=blue] (eq3et) {$\nu_3$};	    
	    \node[inner sep=0pt,draw, fit=(16), color=blue] (eq8) {};
	    \node[left of=eq8,node distance=0.5cm,color=blue] (eq8et) {$\nu_4$};
	     \node[inner sep=0pt,draw, fit=(3), color=blue] (eq4) {};
	      \node[left of=eq4,node distance=0.5cm,color=blue] (eq4et) {$\nu_5$};
	     \node[inner sep=0pt,draw, fit=(6), color=blue] (eq6) {};
	    \node[left of=eq6,node distance=0.5cm,color=blue] (eq6et) {$\nu_6$};
	     \node[inner sep=0pt,draw, fit=(10), color=blue] (eq5) {};
	    \node[left of=eq5,node distance=0.5cm,color=blue] (eq5et) {$\nu_7$};
	    \node[inner sep=0pt,draw, fit=(11), color=blue] (eq7) {};
	    \node[left of=eq7,node distance=0.5cm,color=blue] (eq7et) {$\nu_8$};
	 \end{tikzpicture}
	 }
	  \caption{The ZPC-Structure of $\E$.}
	  \label{fig a t e1}
	\end{figure}
\end{minipage}

Finally, from Table~\ref{tabl folow}, the transition function of  $\mathcal{C}_{\E}$ is the following:

\centerline{
  \begin{tabular}{r@{\ }c@{\ }l@{\ \ \ \ \ \ \ }r@{\ }c@{\ }l}
    $h(C_{h^1_1}(\b\E),C_{h^2_1}(\b\E)) $ & $\rightarrow$ & $ C_{\varepsilon^1}(\b\E)$ & $h(C_{h^1_2}(\b\E),C_{h^2_2}(\b\E)) $ & $\rightarrow$ & $ C_{h^1_1}(\b\E)$\\
    $a $ & $\rightarrow $ & $C_{h^2_1}(\b\E)$ & $a $ & $\rightarrow $ & $C_{h^1_2}(\b\E)$\\
    $g(C_{g^1_5}(\b\E))$ & $\rightarrow$ & $ C_{h^2_2}(\b\E)$ & $f(C_{f^1_3}(\b\E),C_{f^2_3}(\b\E)) $ & $\rightarrow$ & $ C_{h^2_2}(\b\E)$\\
    $a $  & $\rightarrow$ & $ C_{f^1_3}(\b\E)$ & $h(C_{h^1_4}(\b\E),C_{h^2_4}(\b\E))$ & $ \rightarrow $ & $C_{f^2_3}(\b\E)$\\
    $a$ & $ \rightarrow $ & $C_{g^1_5}(\b\E)$ & $a $ & $\rightarrow $ & $C_{h^1_4}(\b\E)$\\
    $f(C_{f^1_3(\b\E)},C_{f^2_3(\b\E)}) $ & $\rightarrow$ & $ C_{h^2_4}(\b\E)$ & $g(C_{g^1_5}(\b\E)) $ & $\rightarrow$ & $ C_{h^2_4}(\b\E)$\\
  \end{tabular}
}

  The ZPC-structure associated to $\E$ is represented in Figure~\ref{fig a t e1}. The dotted links in Figure~\ref{fig a t e1} represent the function $\gamma$:

\centerline{
  $\gamma(\nu_2)= \nu_5$, $\gamma(\nu_3)= \nu_4$, $\gamma(\nu_6)=\nu_5$, $\gamma(\nu_7)=\nu_8$.
  }
 
The automaton ${\cal A}_{T_{_{\E}}}$ associated with $\E$ is  represented in Figure~\ref{fig a t e2}. 
 
 \begin{figure}[H] 
	 \centerline{
		\begin{tikzpicture}
	[node distance=1cm,bend angle=30,transform shape,scale=1.15]
		\node[state,initial above] (1) {$\cdot_b$};
		\node[state,below left of=1,xshift=-5mm] (2) {$h_1$}; 
		\node[state,below right of=1,xshift=5mm] (3) {$*_b$}; 
		\node[state,below right of=2] (4) {$a$};
		\node[state,below left of=2] (5) {$\cdot_c$};
		\node[state,below right of=5] (16) {$a$};
		\node[state,below left of=5] (17) {$h_2$};
		\node[state,below left of=17] (18) {$c$}; 
	 \node[state,below right of=17] (19) {$b$};
		\node[state,below right of=3] (6) {$+$};
		\node[state,below right of=6] (7) {$g_5$}; \node[state,below right of=7] (9) {$a$};
	\node[state,below left of=6] (8) {$\cdot_c$}; \node[state,below right of=8](11){$a$};
	 \node[state,below left of=8] (10) {$f_3$};  \node[state,below left of=10] (12) {$a$}; 
	 \node[state,below right of=10] (13) {$h_4$}; \node[state,below left of=13] (14) {$c$}; 
	 \node[state,below right of=13] (15) {$b$};
	 \node[accepting,state,below of=14,node distance=1.5cm] (20) {$\nu_T$};
	 \path[->]
	      (1) edge node[above left,pos=0.4] {$g_{\cdot_b}$} (2)
	      (1) edge node[above right,pos=0.4] {$d_{\cdot_b}$} (3)
	      (2) edge node[right,pos=0.1] {$h^2$} (4)
	      (2) edge node[left,pos=0.1] {$h^1$} (5)
	      (3) edge node[right,pos=0.1] {$*^b$} (6)
	      (5) edge node[above right,pos=0.4] {$d_{\cdot_c}$} (16)
	      (5) edge node[above left,pos=0.4] {$g_{\cdot_c}$} (17)
	      (6) edge node[above right,pos=0.4] {$d_+$} (7)
	      (6) edge node[above left ,pos=0.4] {$g_+$} (8)
	      (7) edge node[above right,pos=0.4] {$g^1$} (9)
	      (8) edge node[above left,pos=0.6] {$g_{\cdot_c}$} (10)
	      (8) edge node[above right,pos=0.6]{$d_{\cdot_c}$} (11)
	      (10) edge node[above left,pos=0.4] {$f^1$} (12)
	      (10) edge node[above right,pos=0.4] {$f^2$} (13)
	      (13) edge node[above left,pos=0.4] {$h^1$} (14)
	      (13) edge node[above right,pos=0.4] {$h^2$} (15)
	      (17) edge node[left,pos=0.1] {$h^1$} (18)
	      (17) edge node[right,pos=0.1] {$h^2$} (19)
	      (4) edge [bend right]node[left,pos=0.45] {a} (20)
	      (9) edge [bend left=45] node[left,pos=0.6] {a} (20) 
	      (11) edge [bend left=55]node[left,pos=0.6] {a} (20)
	      (12) edge node[left,pos=0.4] {a} (20)
	      (14) edge node[left,pos=0.4] {c} (20)
	      (15) edge node[left,pos=0.4] {b} (20)
	      (16) edge [bend right]node[left,pos=0.4] {a} (20)
	      (18) edge [bend right]node[left,pos=0.4] {c} (20)
	      (19) edge [bend right]node[left,pos=0.4] {b} (20)	      
	     ;
	 \end{tikzpicture}
	 }
	  \caption{The automaton ${\cal A}_{T_{_{\E}}}$.}
	  \label{fig a t e2}
	\end{figure}	
	Applying Myhill-Nerode equivalence relation over the states of the automaton ${\cal A}_{T_{_{\E}}}$ results in the automaton in Figure~\ref{fig a t e3}.
		 
 \begin{figure}[H] 
	 \centerline{
		\begin{tikzpicture}
	[node distance=1cm,bend angle=50,transform shape,scale=1.15]
		\node[state,above] (1) {$g_5$};
		\node[state,below left of=1] (2) {$+$}; 
		\node[state,below left of=2] (3) {$*_b$}; 
		\node[state,below right of=2](4) {$\cdot_c$}; 
		\node[initial, state,below left of=3] (5) {$\cdot_b$};
		\node[state,below right of=5](6) {$h_1$};		
		\node[state,below right of=6](7) {$\cdot_c$};
		\node[state,below right of=4](8) {$f_3$};
		\node[state,below right of=8,node distance=1.5cm,scale=0.7,rounded rectangle] (9) {$\{h_2,h_4\}$};
		\node[state,above right of=9,node distance=1.5cm] (10) {$c$}; 
		\node[state,below right of=9,node distance=1.5cm](11) {$b$}; 
     	\node[accepting,state,below right of=10,node distance=1.5cm] (12) {$\nu_T$};
		\node[state,above right of=12,node distance=1.5cm,yshift=-11mm] (13) {$a$};
	 \path[->]
	     (5) edge node[above left,pos=0.4] {$d_{\cdot_b}$} (3)
         (3) edge node[above left] {$*_b$} (2)	     
	     (2) edge node[below right,pos=0.4] {$d_+$} (1)
	     (2) edge node[below left,pos=0.4] {$g_+$} (4)
         (4) edge node[below left,pos=0.4] {$g_{\cdot_c}$} (8)
         (8) edge node[below left,pos=0.4] {$f^2$} (9)
	     (5) edge node[below left,pos=0.4] {$g_{\cdot_b}$} (6)
	     (6) edge node[above right,pos=0.4] {$h_1$} (7)
	     (7) edge node[above left,pos=0.4] {$g_{\cdot_c}$} (9)
	     (9) edge node[above left,pos=0.4] {$h^1$} (10)
	     (9) edge node[below left,pos=0.4] {$h^2$} (11)
	      (10) edge node[below left,pos=0.4] {$c$} (12)
	      (11) edge node[above left,pos=0.4] {$b$} (12)
	      (13) edge node[above left,pos=0.4] {$a$} (12)
	      (1) edge [bend left=60] node[right,pos=0.8] {$g^1$} (13)
	      (8) edge [bend left=65] node[right,pos=0.6] {$~~f^1$} (13)
    	      (8) edge [bend left=35] node[right,pos=0.7] {$~~f^2$} (13)
    	      (7) edge [bend right=45] node[left,pos=0.2] {$d_{\cdot_c}$} (13)
    	      (6) edge [bend right=75] node[left,pos=0.18] {$h^2$} (13)	      
	     ;
	 \end{tikzpicture}
	 }
	  \caption{The minimal automaton of ${\cal A}_{T_{_{\E}}}$.}
	  \label{fig a t e3}
	\end{figure}	
	The computation of the equivalence relation $\sim$ over the syntax tree associated to $\E$ is represented in the Figure~\ref{fig a t e4}. The number of equivalence classes in Figure~\ref{fig a t e4} ($12$) corresponds exactly to the number of states of the minimal automaton of ${\cal A}_{T_{_{\E}}}$. From these equivalence classes, we can define the $\psi$ function (see Table~\ref{table psi}).  
	
	\begin{minipage}{0.40\linewidth}
	\begin{table}[H]
	  \centerline{
	    \begin{tabular}{l@{\ }l}
	      $\psi(1)$ & $=a$\\
	      $\psi(2)$ & $=b$\\
	      $\psi(3)$ & $=c$\\
	      $\psi(4)$ & $=h(c,b)$\\
	      $\psi(5)$ & $=g(a)$\\
	      $\psi(6)$ & $=h(c,b)\cdot_c a$\\
	      $\psi(7)$ & $=h(h(c,b)\cdot_c a,a)$\\
	      $\psi(8)$ & $=f(a,h(c,b))$\\
	      $\psi(9)$ & $=f(a,h(c,b)) \cdot_c a$\\
	      $\psi(10)$ & $=f(a,h(c,b)) \cdot_c a + g(a)$\\
	      $\psi(11)$ & $=(f(a,h(c,b)) \cdot_c a + g(a))^{*_b}$\\
	      $\psi(12)$ & $=E$\\
	    \end{tabular}
	  }
	  \caption{The function $\psi$.}
	  \label{table psi}
	\end{table}
	\end{minipage}
	\hfill
	\begin{minipage}{0.55\linewidth}
	\begin{figure}[H] 
	 \centerline{
	\begin{tikzpicture}[node distance=1cm,bend angle=30,transform shape,scale=0.8]
		\node(1) {$\cdot_b$};
		\node[below left of=1] (2) {$h_1$}; 
		\node[below right of=1] (3) {$*_b$}; 
		\node[below right of=2] (4) {$a$};
		\node[below left of=2] (5) {$\cdot_c$};
		\node[below right of=5] (16) {$a$};
		\node[below left of=5] (17) {$h_2$};
		\node[below left of=17] (18) {$c$}; 
	 \node[below right of=17] (19) {$b$};
		\node[below right of=3] (6) {$+$};
		\node[below right of=6] (7) {$g_5$}; \node[below right of=7] (9) {$a$};
	\node[below left of=6] (8) {$\cdot_c$}; \node[below right of=8](11){$a$};
	 \node[below left of=8] (10) {$f_3$};  \node[below left of=10] (12) {$a$}; 
	 \node[below right of=10] (13) {$h_4$}; \node[below left of=13] (14) {$c$}; 
	 \node[below right of=13] (15) {$b$};	 
	 \node[inner sep=0pt,draw, rectangle, fit=(1), color=red] (eq1) {};
	    \node[left of=eq1,node distance=0.5cm,color=red] (eq1et) {$12$};
	    \node[inner sep=0pt,draw, rectangle, fit=(2), color=red] (eq2) {};
	    \node[left of=eq2,node distance=0.5cm,color=red] (eq2et) {$7$};
	    \node[inner sep=0pt,draw, rectangle, fit=(3), color=red] (eq3) {};
	    \node[right of=eq3,node distance=0.5cm,color=red] (eq3et) {$11$};
	    \node[inner sep=0pt,draw, rectangle, fit=(5), color=red] (eq4) {};
	    \node[left of=eq4,node distance=0.4cm,color=red] (eq4et) {$6$};
	    \node[inner sep=0pt,draw, rectangle, fit=(4), color=red] (eq5) {};
	    \node[right of=eq5,node distance=0.3cm,color=red] (eq5et) {$1$};
	    \node[inner sep=0pt,draw, rectangle, fit=(16), color=red] (eq6) {};
	    \node[right of=eq6,node distance=0.3cm,color=red] (eq6et) {$1$};
	    \node[inner sep=0pt,draw, rectangle, fit=(19), color=red] (eq7) {};
	    \node[right of=eq7,node distance=0.3cm,color=red] (eq7et) {$2$};
	    \node[inner sep=0pt,draw, rectangle, fit=(18), color=red] (eq8) {};
	    \node[left of=eq8,node distance=0.3cm,color=red] (eq8et) {$3$};	
	    \node[inner sep=0pt,draw, rectangle, fit=(17), color=red] (eq9) {};
	    \node[left of=eq9,node distance=0.5cm,color=red] (eq9et) {$4$};    
	    \node[inner sep=0pt,draw, rectangle, fit=(6), color=red] (eq10) {};
	   \node[right of=eq10,node distance=0.5cm,color=red] (eq10et) {$10$};
	   \node[inner sep=0pt,draw, rectangle, fit=(7), color=red] (eq11) {};
	   \node[right of=eq11,node distance=0.5cm,color=red] (eq11et) {$5$};
	   \node[inner sep=0pt,draw, rectangle, fit=(9), color=red] (eq12) {};
	   \node[right of=eq12,node distance=0.3cm,color=red] (eq12et) {$1$};
	   \node[inner sep=0pt,draw, rectangle, fit=(11), color=red] (eq13) {};
	   \node[right of=eq13,node distance=0.3cm,color=red] (eq13et) {$1$};
	   \node[inner sep=0pt,draw, rectangle, fit=(15), color=red] (eq14) {};
	   \node[right of=eq14,node distance=0.3cm,color=red] (eq14et) {$2$};
	   \node[inner sep=0pt,draw, rectangle, fit=(12), color=red] (eq15) {};
	   \node[left of=eq15,node distance=0.3cm,color=red] (eq15et) {$1$};
	   \node[inner sep=0pt,draw, rectangle, fit=(14), color=red] (eq16) {};
	   \node[left of=eq16,node distance=0.3cm,color=red] (eq16et) {$3$};
	   \node[inner sep=0pt,draw, rectangle, fit=(8), color=red] (eq17) {};
	   \node[left of=eq17,node distance=0.5cm,color=red] (eq17et) {$9$};
	   \node[inner sep=0pt,draw, rectangle, fit=(10), color=red] (eq18) {};
	   \node[left of=eq18,node distance=0.4cm,color=red] (eq18et) {$8$};
	   \node[inner sep=0pt,draw, rectangle, fit=(13), color=red] (eq19) {};
	   \node[right of=eq19,node distance=0.4cm,color=red] (eq19et) {$4$};
	 \path[->]
	      (1) edge node[above left,pos=0.4] {} (2)
	      (1) edge node[above right,pos=0.4] {} (3)
	      (2) edge node[left,pos=0.4] {} (4)
	      (2) edge node[left,pos=0.4] {} (5)
	      (3) edge node[right,pos=0.4] {} (6)
	      (5) edge node[above left,pos=0.4] {} (16)
	      (5) edge node[above left,pos=0.4] {} (17)
	      (6) edge node[above left,pos=0.4] {} (7)
	      (6) edge node[above right,pos=0.4] {} (8)
	      (7) edge node[above right,pos=0.4] {} (9)
	      (8) edge node[above left,pos=0.4] {} (10)
	      (8) edge node[above right,pos=0.4]{} (11)
	      (10) edge node[above right,pos=0.4] {} (12)
	      (10) edge node[above left,pos=0.4] {} (13)
	      (13) edge node[above right,pos=0.4] {} (14)
	      (13) edge node[above left,pos=0.4] {} (15)
	      (17) edge node[left,pos=0.4] {} (18)
	      (17) edge node[left,pos=0.4] {} (19)
	     ;
	 \end{tikzpicture}
	 }
	  \caption{The equivalence classes.}
	  \label{fig a t e4}
	\end{figure}
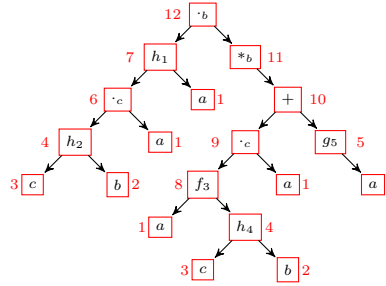
	\end{minipage}
	
	As we have seen, the computation of the equivalence relation $\sim_e$ turns in the minimization of an acyclic deterministic automaton. The automaton ${\cal B}_{T_{_{\b\E}}}$ associated with $\E$ is represented in Figure~\ref{fig a t e5}. 
	\begin{figure}[H]	
	 \centerline{
	\begin{tikzpicture}[node distance=1cm,bend angle=30,transform shape,scale=1.15]	
	\node[accepting,state, above] (1) {$\cdot_b$};
	\node[state,below left of=1,xshift=-10mm] (2) {$h_1$}; 
		\node[state,below right of=1,xshift=10mm] (3) {$*_b$};
		\node[state,below right of=2,xshift=1mm] (5) {$\cdot_c$};
		\node[state,below left of=2,xshift=-7mm] (7) {$h^1_1$};
		\node[state,below left of=2,xshift=2mm] (8) {$h^2_1$};
		\node[state,below right of=3,xshift=5mm] (6) {$+$};
		\node[state,below of=5] (9) {$h_2$}; 
		\node[state,below left of=9] (10) {$h^1_2$}; 
		\node[state,below right of=9] (11) {$h^2_2$};
		\node[state,below left of=6] (12) {$\cdot_c$}; 
		\node[state,below of=12] (15) {$f_3$};	
		\node[state,below right of=6,xshift=5mm] (13) {$g_5$};	
		\node[state,below of=13,xshift=5mm] (14) {$g^1_5$};	
		\node[state,below right of=15,xshift=1mm] (16) {$h_4$};
		\node[state,below left of=15,xshift=-10mm] (17) {$f^1_3$};
		\node[state,below left of=15] (18) {$f^2_3$};
		\node[state,below left of=16] (19) {$h^1_4$}; 
		\node[state,below right of=16] (20) {$h^2_4$};
		\node[state,initial,below left of=19,xshift=-11mm,yshift=-11mm] (21) {$\nu_T$}; 
		\path[->]
	      (2) edge node[above left,pos=0.4] {$\cdot_b11$} (1)
	      (3) edge node[above right,pos=0.4] {$\varepsilon$} (1)
	      (5) edge node[above right] {$\varepsilon$} (2)
	      (8) edge node[right,pos=-0.1] {$1$} (2)
	      (7) edge node[above left] {$4\cdot_c1$} (2)
	      (6) edge node[above right,pos=0.4] {$\cdot_b 11$} (3)
	       (13) edge node[above right,pos=0.4] {$\varepsilon$} (6)
	        (14) edge node[above right,pos=0.4] {$1$} (13)
	        (12) edge node[above left,pos=0.4] {$\varepsilon$} (6)
	        (15) edge node[above left,pos=0.1] {$\cdot_c1$} (12)
	        (16) edge node[above right,pos=0.4] {$\varepsilon$} (15)
	        (17) edge node[above right,pos=0.2] {$1$} (15)
	        (18) edge node[above right,pos=-0.6] {$~~~4$} (15)
	        (9) edge node[above right,pos=-0.1] {$\cdot_c 1$} (5)
	         (10) edge node[above left,pos=0.1] {$3$} (9)
	          (11) edge node[above right,pos=0.1] {$2$} (9)
	     	(19) edge node[above left,pos=0.1] {$3$} (16)
	          (20) edge node[above right,pos=0.1] {$2$} (16)
	      (21) edge [bend left=40]node[left,pos=0.45] {$h^1_1~~$} (7)
	      (21) edge [bend left=40]node[left,pos=0.55] {$h^2_1$} (8)
	      (21) edge node[left,pos=0.45] {$h^1_2$} (10)
	      (21) edge node[left,pos=0.6] {$h^2_2$} (11) 
	       (21) edge node[right,pos=0.5] {$f^1_3$} (17)
	      (21) edge node[right,pos=0.5] {$f^2_3$} (18) 
	      (21) edge [bend right=15] node[right,pos=0.5] {$h^1_4~~$} (19)
	      (21) edge [bend right=25] node[right,pos=0.7] {$h^2_4~~~~$} (20) 
	      	      (21) edge [bend right=45] node[right,pos=0.5] {$~~~g^1_5$} (14) 	          
	     ;
	 \end{tikzpicture}
	 }
	  \caption{The automaton ${\cal B}_{T_{_{\b\E}}}$.}
	  \label{fig a t e5}
	\end{figure}	
We eliminate the $\varepsilon$-transitions from the automaton ${\cal B}_{T_{_{\b\E}}}$. Since this last has no $\varepsilon$-transitions cycles, this elimination can be performed in a linear time in the size of $\E$. Hence, we obtain a structure which we denote $\varepsilon\mbox{-}\mathrm{free}({{\cal B}_{T_{_{\b\E}}}})$.  
	\begin{figure}[H]	
	 \centerline{
	\begin{tikzpicture}[node distance=1cm,bend angle=30,transform shape,scale=1.15]	
	\node[accepting,state, above] (1) {$\cdot_b$};
	\node[state,below left of=1,xshift=-10mm] (2) {$h_1$}; 
		\node[state,below left of=2,xshift=-7mm] (7) {$h^1_1$};
		\node[state,below left of=2,xshift=2mm] (8) {$h^2_1$};
		\node[state,below right of=3,xshift=5mm] (6) {$+$};
		\node[state,below of=5] (9) {$h_2$}; 
		\node[state,below left of=9] (10) {$h^1_2$}; 
		\node[state,below right of=9] (11) {$h^2_2$};
		\node[state,below of=12] (15) {$f_3$};	
		\node[state,below of=13,xshift=5mm] (14) {$g^1_5$};	
		\node[state,below left of=15,xshift=-10mm] (17) {$f^1_3$};
		\node[state,below left of=15] (18) {$f^2_3$};
		\node[state,below left of=16] (19) {$h^1_4$}; 
		\node[state,below right of=16] (20) {$h^2_4$};
		\node[state,initial,below left of=19,xshift=-11mm,yshift=-11mm] (21) {$\nu_T$}; 
		\path[->]
	      (2) edge node[above left,pos=0.4] {$\cdot_b11$} (1)
	      (8) edge node[right,pos=-0.1] {$1$} (2)
	      (7) edge node[above left] {$4\cdot_c1$} (2)
	      (6) edge node[above right,pos=0.4] {$\cdot_b 11$} (1)
	        (14) edge node[above right,pos=0.4] {$1$} (6)
	        (15) edge node[above left,pos=0.4] {$\cdot_c1$} (6)
	        (17) edge node[above right,pos=0.2] {$1$} (15)
	        (18) edge node[above right,pos=-0.6] {$~~~4$} (15)
	        (9) edge node[above right,pos=0.4] {$\cdot_c 1$} (2)
	         (10) edge node[above left,pos=0.1] {$3$} (9)
	          (11) edge node[above right,pos=0.1] {$2$} (9)
	     	(19) edge node[above right,pos=0.1] {$3$} (15)
	          (20) edge node[above right,pos=0.1] {$2$} (15)
	      (21) edge [bend left=40]node[left,pos=0.45] {$h^1_1~~$} (7)
	      (21) edge [bend left=40]node[left,pos=0.55] {$h^2_1$} (8)
	      (21) edge node[left,pos=0.45] {$h^1_2$} (10)
	      (21) edge node[left,pos=0.6] {$h^2_2$} (11) 
	       (21) edge node[right,pos=0.5] {$f^1_3$} (17)
	      (21) edge node[right,pos=0.5] {$f^2_3$} (18) 
	      (21) edge [bend right=15] node[right,pos=0.5] {$h^1_4~~$} (19)
	      (21) edge [bend right=25] node[right,pos=0.7] {$h^2_4~~~~$} (20) 
	      	      (21) edge [bend right=45] node[right,pos=0.5] {$~~~g^1_5$} (14) 	          
	     ;
	 \end{tikzpicture}
	 }
	  \caption{The automaton $\varepsilon\mbox{-}\mathrm{free}({{\cal B}_{T_{_{\b\E}}}})$.}
	  \label{fig a t e6}
	\end{figure}

	The computation of the equivalence relation $\sim_e$ amounts to apply Myhill-Nerode relation on the states of the automaton $\varepsilon\mbox{-}\mathrm{free}({{\cal B}_{T_{_{\b\E}}}})$. The result is represented in Figure~\ref{fig a t e7}.
	\begin{figure}[H]	
	 \centerline{
	\begin{tikzpicture}[node distance=2.5cm,bend angle=30,transform shape,scale=1.15]	
    \node[state, above of=1,rounded rectangle,node distance=1.5cm,] (9) {$\{h^1_1,f^2_3\}$};	
	\node[state, above] (1) {$f^1_3$};
	\node[initial,state,left of=1] (2) {$\nu_T$}; 
	\node[state, below of=1,rounded rectangle,node distance=1.5cm,] (3) {$\{h^1_2,h^1_4\}$};
	\node[state, below of=3,rounded rectangle,node distance=1.5cm,] (4) {$\{h^2_2,h^2_4\}$};
	\node[state, below of=4,rounded rectangle,node distance=1.5cm,] (5) {$\{h^2_1,g^1_5\}$};
	\node[state, right of=1,rounded rectangle] (6) {$\{h_2,f_3\}$};
	\node[state, right of=6,rounded rectangle] (7) {$\{h_1,+\}$};
	\node[accepting,state,right of=7] (8) {$\cdot_b$}; 
		\path[->]
	      (2) edge node[above left,pos=0.8] {$h^1_1,f^2_3$} (9)
	      (2) edge node[above left,pos=0.98] {$f^1_3$} (1)
	      (1) edge node[above right,pos=0.3] {$1$} (6)
	      (2) edge node[above right] {$h^1_2,h^1_4$} (3)
	      (2) edge node[ right,pos=0.75] {$h^2_2,h^2_4$} (4)
	      (2) edge node[below left,pos=0.4] {$h^2_1,g^1_5$} (5)	
	      (3) edge node[above ] {$3$} (6)
	      (4) edge node[below right,pos=0.5] {$2$} (6) 
	      (6) edge node[above left,pos=0.7] {$\cdot_c 1$} (7) 
	      (9) edge node[above right,pos=0.7] {$4\cdot_c 1$} (7)
	      (5) edge node[below] {$1$} (7) 
	      (7) edge node[above left,pos=0.9] {$\cdot_b11$} (8)      	          
	     ;
	 \end{tikzpicture}
	 }
	  \caption{The Minimal Automaton of $\varepsilon\mbox{-}\mathrm{free}({{\cal B}_{T_{_{\b\E}}}})$.}
	  \label{fig a t e7}
	\end{figure}		

%

The language recognized by $\mathcal{B}_{T_{\b{\E}}}$ is the following:

\centerline{
  \begin{tabular}{l@{\ }l@{\ }l}
    $\mathcal{L}(\mathcal{B}_{T_{\b{\E}}})=$ & & $\{h^2_2,h^2_4\}\cdot \{2\cdot_c 1\cdot_b11\}$\\
    & $\cup$ & $\{h^1_2,h^1_4\}\cdot \{3\cdot_c 1\cdot_b 11\}$\\
    & $\cup$ & $\{h^1_1,f^2_3\}\cdot \{4\cdot_c 1\cdot_b11\}$\\
    & $\cup$ & $\{h^2_1,g^1_5\}\cdot \{1\cdot_b11\}$\\
    & $\cup$ & $\{f^1_3 1\cdot_c 1\cdot_b11\}$\\
  \end{tabular}
}
Let us notice that Proposition~\ref{prop1} is satisfied in Table~\ref{tab lang ccont}.

	\begin{table}[H]
	\centerline{ 
 \begin{tabular}{|@{\ }c@{\ }|@{\ }c@{\ }|@{\ }c@{\ }|}
   \hline
   $x$ & $xw\in \mathcal{L}(\mathcal{B}_{T_{\b{\E}}})$ & $C_x(\overline{E})$\\
   \hline
   $h^1_1$ & $h^1_1 4\cdot_c 1 \cdot_b 11$ & $(h_2(c,b)\cdot_c a)\cdot_b (f_3(a,h_4(c,b))\cdot_c a+g_5(a))^{*_b}$\\
   $h^2_1$ & $h^2_1 1 \cdot_b 11$ &  $a\cdot_b (f_3(a,h_4(c,b))\cdot_c a+g_5(a))^{*_b}$\\
   $h^1_2$ & $h^1_2 3 \cdot_c 1\cdot_b 11$ &  $c\cdot_c a\cdot_b (f_3(a,h_4(c,b))\cdot_c a+g_5(a))^{*_b}$\\
   $h^2_2$ & $h^2_2 2\cdot_c 1 \cdot_b 11$ & $b\cdot_c a\cdot_b (f_3(a,h_4(c,b))\cdot_c a+g_5(a))^{*_b}$\\
   $f^1_3$ & $f^1_3 1 \cdot_c 1 \cdot_b 11$ & $a\cdot_c a\cdot_b (f_3(a,h_4(c,b))\cdot_c a+g_5(a))^{*_b}$\\
   $f^2_3$ & $f^2_3 4 \cdot_c 1\cdot_b 11$ & $(h_2(c,b)\cdot_c a)\cdot_b (f_3(a,h_4(c,b))\cdot_c a+g_5(a))^{*_b}$\\
   $h^1_4$ & $h^1_4 3\cdot_c 1\cdot_b 11$ & $c\cdot_c a\cdot_b (f_3(a,h_4(c,b))\cdot_c a+g_5(a))^{*_b}$\\ 
   $h^2_4$ & $h^2_4 2\cdot_c 1 \cdot_b 11$ & $b\cdot_c a\cdot_b (f_3(a,h_4(c,b))\cdot_c a+g_5(a))^{*_b}$\\
   $g^1_5$ & $g^1_5 1\cdot_b 11$ & $a\cdot_b (f_3(a,h_4(c,b))\cdot_c a+g_5(a))^{*_b}$\\ 
   \hline
   \end{tabular}
  }
    \caption{$\mathcal{L}(\mathcal{B}_{T_{\b{\E}}})$ and $k$-C-Continuations.}
    \label{tab lang ccont}
  \end{table}

  Finally, the equation automaton $\mathcal{A}_{\E}$ associated with $\E$ is obtained from merging the states and the transitions using $\sim_e$. The transition function is:
  
  \centerline{ 
  \begin{tabular}{r@{\ }c@{\ }l}
    $h(\{C_{h^1_1}(\b\E),C_{f^2_3}(\b\E)\},\{C_{h^2_1}(\b\E),C_{g^1_5}(\b\E)\}) $ & $\rightarrow$ & $ C_{\varepsilon^1}(\b\E)$\\
    $a $ & $\rightarrow $ & $\{C_{h^2_1}(\b\E),C_{g^1_5}(\b\E)\}$\\
     $h(\{C_{h^1_2}(\b\E),C_{h^1_4}(\b\E)\},\{C_{h^2_2}(\b\E),C_{h^2_4}(\b\E)\}) $ & $\rightarrow$ & $ \{C_{h^1_1}(\b\E),C_{f^2_3}(\b\E)\}$\\
     $a $ & $\rightarrow $ & $\{C_{h^1_2}(\b\E),C_{h^1_4}(\b\E)\}$\\
    $g(\{C_{g^1_5}(\b\E),C_{h^2_1}(\b\E)\})$ & $\rightarrow$ & $ \{C_{h^2_2}(\b\E),C_{h^2_4}(\b\E)\}$\\
    $f(\{C_{f^1_3}(\b\E)\},\{C_{f^2_3}(\b\E),C_{h^1_1}(\b\E)\}) $ & $\rightarrow$ & $ \{C_{h^2_2}(\b\E),C_{h^2_4}(\b\E)\}$\\
    $a $  & $\rightarrow$ & $ \{C_{f^1_3}(\b\E)\}$\\
  \end{tabular}
}

\section{Conclusion}

We presented a new and more efficient algorithm for the computation of the equation tree automaton from a regular tree expression by extending the notion of $k$-c-continuation from words to trees. We proved that a regular tree expression $\E$ can be converted into an equation tree automaton with an $O(|Q_{\b{\cal C}}\diagup_{\sim_e}|\cdot|\E|))$ time and space complexity 
where $Q$ is the set of $k$-C-Continuations of $\E$.

\bibliographystyle{splncs_srt}
\bibliography{bibliograph}

\end{document}